\definecolor{linkcolor}{rgb}{0.9, 0.17, 0.31} 
\definecolor{citecolor}{rgb}{0.36, 0.54, 0.66}
\newcommand{\nmathbbm}[1]{{\mbox{\usefont{U}{bbm}{m}{n}#1}}}
\tikzstyle{marked}=[draw,minimum size=1pt,circle,fill=white]
\tikzstyle{pebble}=[draw,circle,fill=black]
\tikzstyle{ligne}=[line width=0.5pt,-]
\setlist[1]{itemsep=1.1ex plus .1 ex minus .1ex}
\newcommand\itemfmt[1]{\textcolor{darkgray}{\sffamily\bfseries\mathversion{bold}{#1}}}
\newcommand\itemref[1]{\textcolor{linkcolor}{\sffamily{#1}}}
\newcommand{\efgame}{Ehrenfeucht-Fra\"iss\'e\xspace}
\newcommand\nat{\ensuremath{\mathbb{N}}\xspace}
\newcommand\rk[1]{\ensuremath{\textup{rank}(#1)}}
\newcommand\Abb{\ensuremath{\mathbb{A}}\xspace}
\newcommand\Lbb{\ensuremath{\mathbb{L}}\xspace}
\newcommand\Mbb{\ensuremath{\mathbb{M}}\xspace}
\newcommand\Kbb{\ensuremath{\mathbb{K}}\xspace}
\newcommand\Fbb{\ensuremath{\mathbb{F}}\xspace}
\newcommand\abb{\ensuremath{\nmathbbm{a}}\xspace}
\newcommand\wbb{\ensuremath{\nmathbbm{w}}\xspace}
\newcommand\ubb{\ensuremath{\nmathbbm{u}}\xspace}
\newcommand\vbb{\ensuremath{\nmathbbm{v}}\xspace}
\newcommand\Fs{\ensuremath{\mathcal{F}}\xspace}
\newcommand\Gs{\ensuremath{\mathcal{G}}\xspace}
\newcommand\Cs{\ensuremath{\mathcal{C}}\xspace}
\newcommand\Ss{\ensuremath{\mathcal{S}}\xspace}
\newcommand\Is{\ensuremath{\mathcal{I}}\xspace}
\newcommand\Vbf{\ensuremath{{\sf V}}\xspace}
\newcommand\Wbf{\ensuremath{{\sf W}}\xspace}
\newcommand\Dbf{\ensuremath{{\sf D}}\xspace}
\newcommand{\plus}{\ensuremath{+1,min,max}}
\newcommand{\sic}[1]{\ensuremath{\Sigma_{#1}}\xspace}
\newcommand{\sio}[1]{\ensuremath{\Sigma_{#1}(<)}\xspace}
\newcommand{\sip}[1]{\ensuremath{\Sigma_{#1}(<,\plus)}\xspace}
\newcommand{\pio}[1]{\ensuremath{\Pi_{#1}(<)}\xspace}
\newcommand{\bso}[1]{\ensuremath{\mathcal{B}\Sigma_{#1}(<)}\xspace}
\newcommand{\bsc}[1]{\ensuremath{\mathcal{B}\Sigma_{#1}}\xspace}
\newcommand{\bsp}[1]{\ensuremath{\mathcal{B}\Sigma_{#1}(<,\plus)}\xspace}
\newcommand{\sdp}{\ensuremath{\Sigma_{2}(<,+1)}\xspace}
\newcommand{\bdp}{\ensuremath{\mathcal{B}\Sigma_{2}(<,+1)}\xspace}
\newcommand{\stp}{\ensuremath{\Sigma_{3}(<,+1)}\xspace}
\newcommand{\pdp}{\ensuremath{\Pi_{2}(<,+1)}\xspace}
\newcommand{\ptp}{\ensuremath{\Pi_{3}(<,+1)}\xspace}
\newcommand{\sdo}{\sio{2}}
\newcommand{\fow}{\ensuremath{\textup{FO}(<)}\xspace}
\newcommand{\foeq}{\ensuremath{\textup{FO}(=)}\xspace}
\newcommand{\fod}{\ensuremath{\textup{FO}^2(<)}\xspace}
\newcommand{\foeqp}{\ensuremath{\textup{FO}(=,+1)}\xspace}
\newcommand{\fodp}{\ensuremath{\textup{FO}^2(<,+1)}\xspace}
\newcommand\fodeq[1]{\ensuremath{\equiv_{#1}}\xspace}
\newcommand\kfodeq{\fodeq{k}}
\newcommand\fodeqp[1]{\ensuremath{\equiv^{+1}_{#1}}\xspace}
\newcommand\kfodeqp{\fodeqp{k}}
\newcommand\bceq[1]{\ensuremath{\cong_{#1}}\xspace}
\newcommand\kbceq{\bceq{k}}
\newcommand\bceqp[1]{\ensuremath{\cong^{+1}_{#1}}\xspace}
\newcommand\kbceqp{\bceqp{k}}
\newcommand{\ucroch}[1]{\ensuremath{\left\lceil #1 \right\rceil}\xspace}
\newcommand{\croch}[1]{\ensuremath{\left\lfloor #1 \right\rfloor}\xspace}
\newcommand\wfA{\ensuremath{\Abb_\alpha}\xspace}
\newcommand\sieq[1]{\ensuremath{\preccurlyeq_{#1}}\xspace}
\newcommand\ksieq{\sieq{k}}
\newcommand\sieqp[1]{\ensuremath{\preccurlyeq^{+1}_{#1}}\xspace}
\newcommand\ksieqp{\sieqp{k}}
\newcommand\highlight[1]{\par\bigskip\noindent\textbf{\sffamily #1}.}
\theoremstyle{plain}
\newtheorem{fact}[theorem]{Fact}
\newtheorem{proposition}[theorem]{Proposition}
\newtheorem{clm}[theorem]{Claim}
\title{A Transfer Theorem for the Separation Problem\footnote{Supported by ANR 2010 BLAN 0202 01 FREC}}
\author{Thomas~Place}
\author{Marc~Zeitoun}
\affil{LaBRI, Bordeaux University, France, \texttt{firstname.lastname@labri.fr}.}
\authorrunning{T.~Place and M.~Zeitoun}
\subjclass{F.4.3 Formal Languages}
\keywords{Separation Problem, Regular Word Languages, Logics, Decidable Characterizations, Semidirect Product}
\begin{document}

\maketitle

\begin{abstract}
  We investigate two problems for a class \Cs of regular word languages. The
  \Cs-membership problem asks for an algorithm to decide whether an input
  language belongs to \Cs. The \Cs-separation problem asks for an algorithm
  that, given as input two regular languages, decides whether there exists a
  third language in \Cs containing the first language, while being disjoint
  from the second. These problems are considered as means to obtain a deep
  understanding of the class \Cs.

  It is usual for such classes to be defined by logical formalisms. Logics are
  often built on top of each other, by adding new predicates. A natural
  construction is to enrich a logic with the successor
  relation.     In this paper, we obtain new and simple proofs of two transfer
  results: we show that for suitable logically defined classes, the
  membership, resp.\ the separation problem for a class enriched with the
  successor relation reduces to the same problem for the original~class.

  Our reductions work both for languages of finite words and infinite
  words.   The proofs are mostly self-contained, and only require a basic background
  on regular languages. This paper therefore gives simple proofs of
  results that were considered as difficult, such as the decidability of the
  membership problem for the levels 1, 3/2, 2 and 5/2 of the dot-depth
  hierarchy.
\end{abstract}

\section{Introduction}
\label{sec:intro}
A central problem in the theory of formal languages is to
characterize and understand the expressive power of high level specification
formalisms. Monadic second order logic (MSO) is such a formalism, which is
both expressive and robust. For several classes of structures, such as words
or trees, it has the same expressive power as finite automata and defines the
class of regular languages. In this paper, we investigate fragments of MSO
over words. In this context, understanding the expressive power of a fragment
is associated to two decision problems: the \emph{membership problem} and the
\emph{separation problem}.

For a fixed logical fragment \Fs, the \emph{\Fs-membership
  problem} asks for a decision procedure that tests whether some input
regular language can be expressed by a formula from~\Fs. To obtain
such an algorithm, one has to consider and understand \emph{all}
properties that can be expressed within \Fs, which requires a deep
understanding of the fragment \Fs. On the other hand, the
\emph{\Fs-separation problem} is more general. It asks for
a decision procedure that tests whether given \emph{two} input
regular languages, there exists a third one in \Fs containing the
first language while being disjoint from the second one.

Since regular
languages are closed under complement, membership reduces to
separation: a language is in \Fs if and only if it can be separated from
its complement. Usually, the separation problem is more difficult than
the membership problem but also more rewarding with respect to the
knowledge gained on the investigated fragment \Fs.

These two problems have been considered and solved for many natural fragments
of monadic second order logic. Among these, the most prominent one is
first-order logic, \fow, equipped with a predicate $<$ for the linear
ordering. The solution to the membership problem, known as the
McNaughton-Papert-Schützenberger Theorem~\cite{sfo,mnpfo}, has been revisited
until recently~\cite{Diekert&Gastin:First-order-definable-languages:2008:a}.
The theorem states that a regular language is definable in \fow if and only if
its \emph{syntactic semigroup} is aperiodic. The syntactic semigroup is a
finite algebraic object that can be computed from any regular language. Since
aperiodicity can be defined as an equation that needs to be satisfied by all
of its elements, this yields decidability of \fow-definability. This result
now serves as a template, which is commonly followed in this line of research.

The separation problem has also been successfully solved for first-order logic~\cite{Henckell:Pointlike-sets:-finest-aperiodic:1988:a}.
Actually, the problem was first addressed in a purely algebraic
framework, and was later identified as equivalent to our separation
problem~\cite{MR1709911}. As for membership, this problem is still
revisited today and a new self-contained and combinatorial proof was
obtained in~\cite{PZ:lics14}.

\highlight{Motivation}  We are interested in natural fragments
of \fow obtained by restricting either the number of variables or the number
of quantifier alternations allowed in formulas. Such restrictions in general
give rise to several variants of the same fragment. Indeed, in most cases, the
drop in expressive power forbids the use of natural relations that could be
defined from the linear order in \fow. The main example considered in this
paper is $+1$, the \emph{successor relation}, together with predicates $min$
and $max$ for the first and last positions in a word. This means that one can
define two distinct variants of the same fragment depending on whether we
decide to explicitly add these predicates in the signature or
not. An example is the fragment $\sic n$, which consists of first-order formulas
whose prenex normal form has at most~$(n-1)$ quantifier alternations and starts with an
existential block. Since defining~$+1$ requires an additional
quantifier alternation, $\sip n$ has indeed stronger expressiveness than $\sio n$.
The motivation of this paper is to obtain decidability results for such
enriched fragments.

\highlight{State of the Art} Even when the weak fragment is
known to have decidable membership, proving that the enriched one has
the same property can be nontrivial. Examples include the membership
proofs of $\bsp1$ (Boolean combinations of $\sip1$ formulas) and
$\sdp$, which require difficult and intricate combinatorial
arguments~\cite{Knast:dd1:1983a,glasser-dd3/2,DBLP:journals/ijfcs/KufleitnerL12}
or a wealth of algebraic machinery~\cite{pwdelta,pw:wreath}. Another issue is that most proofs
directly deal with the enriched fragment. Given the jungle of such logical
fragments, it is desirable to avoid such an approach, treating each variant of
the same fragment independently. Instead, a satisfying approach is to first
obtain a solution of the membership and separation problems for the less
expressive variant and then to lift it to other variants via a generic
transfer result.

This approach has first been investigated by Straubing for the membership
problem~\cite{Str85} in an algebraic framework, and later adapted to be able
to treat classes not closed under complement~\cite{pw:wreath}. Transferring the logical
problem to this algebraic framework requires preliminary steps,
still \emph{specific} to the investigated class, to prove that:
\begin{enumerate}
\item\label{item:1} A language is definable in the fragment if and only if its syntactic semigroup
  belongs to a specific algebraic variety $\mathsf{V}$ (\emph{e.g.}, the
  variety of aperiodic monoids for~\fow), and
\item\label{item:2} Membership to~$\sf V$ is decidable.
\end{enumerate}
Next, though this is not immediate, for most fragments of \fow, it has been
proved that 
\begin{enumerate}[resume]
\item\label{item:3} When the weaker variant corresponds to a variety {\sf V}, the variant
  with successor corresponds to the variety $\mathsf{V} \ast \mathsf{D}$,
  built generically from {\sf V}.
\end{enumerate}
Hence, Straubing's approach was to prove that
\begin{enumerate}[resume]
\item\label{item:4} 
  the operator $\mathsf{V} \mapsto \mathsf{V}\ast\mathsf{D}$
  preserves decidability.
\end{enumerate}
Unfortunately, this is not true in
general~\cite{DBLP:journals/ijac/Auinger10}. Actually, while
decidability is preserved for all known logical fragments, there is no
generic result that captures them all. In particular, for the less
expressive fragments, one has to use completely \emph{ad hoc} proofs.
In the separation setting, things behave well: it
has been shown that decidability of separation is preserved by the operation $\mathsf{V}
\mapsto \mathsf{V} \ast \mathsf{D}$~\cite{Steinberg:delay-pointlikes:2001}.
While interesting when already starting from algebra, this approach has several downsides:
\begin{itemize}
\item Dealing with algebra hides the logical intuitions, while our primary
  goal is to understand the expressiveness of logics.
\item Going from logic to algebra requires to be acquainted with new notions
  and vocabulary, as well as involved theoretical tools. Proofs are also
  often nontrivial and require a deep understanding of complex objects, which
  may be scattered in the bibliography.
\item Despite step \ref{item:4}, 
  which is generic to some extent, arguments
  specific to the investigated class are pushed to steps
  \ref{item:1}--\ref{item:3}, 
  and they are often nontrivial.
\end{itemize}

\highlight{Contributions} We give a new
proof that decidability of separation can be transferred from a weak to an
enriched fragment. We present the result in two different forms.

The first one is non-algebraic: we work directly with the logical fragments,
without using varieties. The transfer result is generic and its proof mostly
is: the only specific argument is an \efgame game that can be adapted
to all natural fragments with minimal difficulty. The
benefits of this new proof are that:
\begin{enumerate}
\item It is self-contained and much simpler than
  previous ones. It only relies on two basic well-known notions:
  recognizability by semigroups and \efgame~games.
\item It works with classes that are not closed under complement, contrary
  to~\cite{Steinberg:delay-pointlikes:2001}. This allows us to capture the
  $\Sigma$ and $\Pi$ levels in the quantifier alternation hierarchy of
  first-order logic.
\item Under an additional hypothesis on the logical fragment, which is met for
  most fragments we investigate and easy to check, the decidability result of
  the separation problem also extends to the membership problem.
\item The proof adapts smoothly to infinite words using
  the notion of $\omega$-semigroups.
\end{enumerate}

The second form of our result is algebraic and generic. We prove that
$\mathsf{V} \mapsto \mathsf{V} \ast \mathsf{D}$ preserves the decidability of
separation for varieties, hence giving an elementary proof of a result
of~\cite{Steinberg:delay-pointlikes:2001}. Even in this algebraic form, we
completely bypass involved constructions or notions, such as pointlike sets
for categories developed in~\cite{Steinberg:delay-pointlikes:2001}, thus
making the proof~accessible.

As corollaries, since $\bso 1$ and $\sio2$ both enjoy decidable
separation~\cite{sep_icalp13,DBLP:conf/mfcs/PlaceRZ13,PZ:icalp14}, we obtain
that this is also the case for the fragments $\bsp1$ and  $\mbox{\sdp}$, known as
levels 1 and 3/2 of the dot-depth hierarchy. These new results strengthen the
previous ones~\cite{Knast:dd1:1983a,glasser-dd3/2} that showed
decidability of membership and were considered as difficult.
We actually obtain that separation for $\sip n$ reduces to separation for~$\sio n$.
Since we also transfer decidability of the membership problem, and
since  the fragments $\bso2$ of Boolean combinations of $\sio2$
formulas and $\sio3$ have decidable membership~\cite{PZ:icalp14} we deduce
that the same holds for $\bdp$ and $\stp$, known as levels 2 and 5/2 of
the dot-depth hierarchy.

\highlight{Organization of the Paper} In Section~\ref{sec:prelims}, we set up
the notation and we present the separation problem and the logics we deal
with. In Section~\ref{sec:contrib-overview}, we present an overview of our main
contribution. Section~\ref{sec:wfwords} is devoted to our technical tool:
languages of well-formed words. In Section~\ref{sec:special}, we use it to
prove our transfer result for all fragments from the logical perspective. In
Section~\ref{sec:algebra}, we establish that decidability of the separation
problem for the variety \Vbf entails the same for $\Vbf*\Dbf$. In order to
instantiate this result for concrete logical fragments, thus obtaining an
alternate proof of our transfer result, we rely on algebraic properties from
the bibliography for each fragment and its enrichment: they are presented in
Section~\ref{sec:algebraic-charac}.
This paper is the full version of~\cite{PZ:plusone-stacs}.

\section{Preliminaries}
\label{sec:prelims}
In this section, we provide preliminary definitions on regular
languages defined by logical fragments and on separation.

\highlight{Words, Languages} We fix a finite alphabet $A$. Let
$A^+$ be the set of all nonempty finite words and let $A^{*}$ be the set
of all finite words over $A$. If $u,v$ are words, we denote by $u \cdot v$ or by $uv$
the word obtained by concatenating $u$ and $v$. For convenience, we
only consider, without loss of generality, languages that do not contain the empty
word. That is, a language is a subset of~$A^+$. We work with regular languages,
that is, languages definable by finite~automata.

\highlight{Separation}
Given three languages $K,L,L'$, we say that $K$ \emph{separates}
$L$ from $L'$ if
\begin{equation*}
  L \subseteq K \text{ and } K \cap L' = \emptyset.
\end{equation*}
If \Cs is a class of languages, we say that $L$ is \emph{\Cs-separable} from
$L'$ if there exists $K \in \Cs$ that separates $L$ from $L'$. Note that if
\Cs is closed under complement, $L$ is \Cs-separable from $L'$ if and only if $L'$ is
\Cs-separable from $L$. However, this is not true for a class \Cs not closed
under complement, such as the classes $\sio n$ of the quantifier alternation
hierarchy, which we shall~consider.

Given a class \Cs, the \emph{\Cs-separation problem} asks for an algorithm
which, given as input two regular languages $L,L'$, decides whether $L$ is
\Cs-separable from~$L'$. The \Cs-membership problem, which asks whether
an input regular language belongs to \Cs, reduces to the \Cs-separation
problem, as a regular language belongs to \Cs iff\ it is \Cs-separable
from its~complement.

\highlight{Logics} We investigate several fragments of
first-order logic on finite words. We view a finite word as a logical
structure made of a sequence of positions labeled over $A$. We work with
first-order logic \fow using a unary predicate $P_a$ for each $a \in A$, which
selects positions labeled with an $a$, as well as binary predicates `$=$' for
equality and `$<$' for the linear order. Such a formula defines the regular
language of all words that satisfy it. We will freely use the name of a
logical fragment of \fow to denote the class of languages definable in this
fragment. Observe that \fow is powerful enough to express the following
logical relations:
\begin{itemize}
\item First position, $min(x)$:$\qquad \forall y\ \neg (y < x)$.
\item Last position, $max(x)$:$\hspace*{4.75ex} \forall y\ \neg (x < y)$.
\item Successor, $y = x+1$:$\qquad\ \ \, x < y \wedge \neg (\exists z\ x< z \wedge z <y)$.
\end{itemize}

However, for most fragments of \fow this is not the case. For example,
in the two-variables restriction \fod of \fow, it is not possible to
express successor, as it requires quantifying over a third
variable. For these fragments $\Fs$, adding the predicates $min$,
$max$ and $+1$ yields a strictly more powerful logic $\Fs^+$. Our goal
is to prove a transfer result for such fragments: given a fragment, if
the separation problem is decidable for the weak variant~$\Fs$, then it
is decidable as well for the strong variant $\Fs^+$ obtained by
enriching $\Fs$ with the above relations. The technique is
\emph{generic}, meaning that it is not bound to a particular logic. In
particular, our transfer result applies to the following well-known
logical fragments:

\begin{itemize}[itemsep=1mm,topsep=1.5mm]
\item \foeq, the restriction of \fow in which the linear order cannot be used,
  and only equality between two positions can be tested. The enriched fragment
  $\foeqp$ ($min$ and $max$ can be eliminated from the formulas) defines locally threshold
  testable languages~\cite{Thom82}.

\item All levels in the quantifier alternation hierarchy of
  first-order logic. A first-order formula is $\sio n$ (resp.~$\pio n$) if
  its prenex normal form contains at most $(n-1)$ quantifier
  alternations and starts with an $\exists$ (resp.~a~$\forall$)
  quantifier block. Finally, a $\bso n$ formula is a boolean combination of
  $\sio n$ and $\pio n$ formulas.

  Since for all fragments above \sdo, a formula involving $min$ and $max$ can be
  expressed without these predicates in the same logic, we shall denote the enriched fragments
  by {\sip 1}, \mbox{{\bsp 1}}, and then by $\mbox{\sdp},\ \bdp$, \dots

\item \fod, the restriction of \fow using only two reusable
  variables. The corresponding enriched fragment is \fodp, since
  $min$ and $max$ can again be eliminated from the formulas.
\end{itemize}

\noindent
\figurename~\ref{fig:frag} summarizes all fragments the technique
applies to.
\begin{figure}[ht]
  \begin{center}
    \begin{tabular}{|c|c|c|c|c|}
      \hline
      Weak   variant & \foeq & \fod & \sio{n} & \bso{n} \\
      \hline
      Strong variant & \foeqp & \fodp & \sip{n} & \bsp{n}\\
      \hline
    \end{tabular}
  \end{center}
  \caption{Logical fragments to which the technique applies.}
  \label{fig:frag}
\end{figure}

\section{Overview of the Main Result}
\label{sec:contrib-overview}
In this short section, we explain our main contribution. We prove the
following~result.

\begin{theorem} \label{thm:transfer}
  Let \Fs and $\Fs^+$ be respectively the weak and strong variants of
  one of the logical fragments in \figurename~\ref{fig:frag}. Then
  $\Fs^{+}$-separability can be effectively reduced to
  $\Fs$-separability.
\end{theorem}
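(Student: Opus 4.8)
The plan is to use languages of well-formed words as a bridge, reducing $\Fs^{+}$-separability over the alphabet $A$ to $\Fs$-separability over a larger alphabet. Fix an integer $d\ge 1$ and let $B_d$ be the finite alphabet whose letters are the possible radius-$d$ neighbourhood types of a position: a letter of $B_d$ records the label of the position, the labels of its $d$ closest predecessors and $d$ closest successors, and, for the positions near the ends of the word, which of these are missing (so the $min$/$max$ information is recorded). Define the encoding $\psi_d\colon A^+\to B_d^+$ that relabels every position of a word by its radius-$d$ type, and let $W_d=\psi_d(A^+)$ be the set of \emph{well-formed words}. One checks readily that $W_d$ is a regular language (membership is a local condition on consecutive letters plus the border markers), that $\psi_d$ is length-preserving and injective, and that for every regular $L\subseteq A^+$ the language $\psi_d(L)$ is regular and effectively computable from an automaton for $L$; the finitely many short words can be treated apart, which is harmless.

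The only step specific to each fragment is the following correspondence, established uniformly by an \efgame argument: for every $k$ there are computable $d$ and $m$ such that, for all $u,v\in A^+$, if $\psi_d(u)$ and $\psi_d(v)$ satisfy the same $\Fs$-sentences of quantifier rank at most $m$ over $B_d$, then $u$ and $v$ satisfy the same $\Fs^{+}$-sentences of quantifier rank at most $k$ over $A$; and symmetrically, for every $d$ and $m$ there is a computable $k$ giving the converse implication. The game-theoretic content is light: a position of $u$ and the matching position of $\psi_d(u)$ are the ``same'' pebble, and the extra moves Spoiler has in the $\Fs^{+}$-game — stepping to an adjacent position, testing $min$ or $max$, comparing labels at bounded distance — read off exactly the data pre-compiled into the letters of $\psi_d(u)$ once $d$ is large enough, so a winning Duplicator strategy in the $\Fs$-game on $(\psi_d(u),\psi_d(v))$ copies over to a winning Duplicator strategy in the $\Fs^{+}$-game on $(u,v)$, and back. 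Verifying this for each of $\foeq$, $\fod$, $\sio n$ and $\bso n$ only means recalling the shape of the corresponding game, and is routine.

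Combining the correspondence with the standard fact that a class in $\Fs$ (resp.\ $\Fs^{+}$) is exactly a finite union of quantifier-rank-$k$ equivalence classes for some $k$, one obtains the core equivalence: \emph{$L$ is $\Fs^{+}$-separable from $L'$ if and only if $\psi_d(L)$ is $\Fs$-separable from $\psi_d(L')$ for some $d$}. For the ``if'' direction, a separator $K\in\Fs$ over $B_d$ is pulled back to $K_0=\{w\in A^+:\psi_d(w)\in K\}$; by the converse correspondence $K_0\in\Fs^{+}$, it contains $L$ since $\psi_d(L)\subseteq K$, and it avoids $L'$ since $\psi_d(K_0\cap L')\subseteq K\cap\psi_d(L')=\emptyset$ and $\psi_d$ is injective. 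For the ``only if'' direction, start from a separator $K\in\Fs^{+}$ of rank $k$, take $d,m$ from the correspondence, and let $K_0$ be the union of the rank-$m$ $\Fs$-classes over $B_d^+$ that meet $\psi_d(K)$; then $K_0\in\Fs$ and $K_0\cap W_d=\psi_d(K)$, so $\psi_d(L)\subseteq\psi_d(K)\subseteq K_0$ and $K_0\cap\psi_d(L')=\psi_d(K)\cap\psi_d(L')=\psi_d(K\cap L')=\emptyset$; note that $W_d$ itself is never required to lie in $\Fs$.

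It remains to make the reduction effective, and this is the main obstacle. The property is monotone in $d$: a separator over $B_d$ can be pulled back along the canonical projection $B_{d'}\to B_d$ to a separator over $B_{d'}$ for every $d'\ge d$, and this operation preserves the fragment (after rewriting a negated letter predicate as the disjunction of the remaining ones, to avoid spurious alternations). Hence the set of good radii is upward closed, and it suffices to produce a computable bound $D$, depending only on the syntactic semigroups $S_L$ and $S_{L'}$, such that if any radius is good then $D$ is. This is proved by a pumping argument inside the finite semigroup $S_L\times S_{L'}$: if $L$ is not $\Fs^{+}$-separable from $L'$, there are pairs $(u,v)\in L\times L'$ that agree on all $\Fs^{+}$-sentences of arbitrarily high rank, and minimal such witnesses must, once long enough, repeat an idempotent pattern; using the closure of the logical equivalences under substitution and iteration of a well-chosen factor, such a witness can be pumped up to every higher rank, so inseparability is already visible at a rank bounded by a fixed arithmetic function of $|S_L|\cdot|S_{L'}|$, hence — through the correspondence — at a computable radius $D$. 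The algorithm is then: compute $\psi_D(L)$ and $\psi_D(L')$, and invoke the assumed decision procedure for $\Fs$-separability. The same scheme transfers to languages of infinite words, with $\omega$-semigroups replacing semigroups throughout.
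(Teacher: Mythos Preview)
Your encoding $\psi_d$ is \emph{not} the paper's: the paper does not relabel positions by their radius-$d$ neighbourhood. Instead, it fixes a single semigroup $S$ recognising both $L$ and $L'$, builds an alphabet $\wfA$ out of elements and \emph{idempotents} of $S$, and maps $w$ to a (much shorter) well-formed word $\croch{w}$ by recording, at certain ``distinguished'' positions, the $S$-image of the preceding infix together with an absorbing idempotent. This difference is exactly what makes the paper's reduction effective \emph{a priori}: the ``radius'' is $2|S|$ from the outset, and the reverse map $\ucroch{\cdot}_i$ can pad with arbitrarily many copies of idempotent words without leaving $L$ or $L'$, which is what drives the \efgame\ argument (Proposition~\ref{prop:corr}).

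Your proposal has a genuine gap at the effectiveness step. The core equivalence you state is only ``for some $d$'', so you must compute a bound $D$. Your pumping sketch does not do this. As written, it starts from the hypothesis that $L$ is \emph{not} $\Fs^+$-separable from $L'$ and concludes that ``inseparability is already visible at a rank bounded by $f(|S_L|\cdot|S_{L'}|)$'' --- but inseparability at a bounded rank is trivially implied by inseparability at all ranks, so this direction is vacuous. What you actually need is the converse: from a single pair $(u,v)\in L\times L'$ with $u\equiv^{+}_{k^*}v$ for one explicit $k^*$, manufacture pairs that agree at \emph{every} higher rank. Your sketch asserts that one can ``iterate a well-chosen factor'', but $u\equiv^{+}_{k^*}v$ gives no relation between the idempotent factorisations of $u$ and of $v$, so iterating a factor of $u$ and a factor of $v$ independently does not preserve the equivalence. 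Making this work would amount to proving a bound on the rank of an $\Fs^+$-separator in terms of the syntactic semigroups --- essentially deciding $\Fs^+$-separation outright, which is the very problem you are trying to reduce.

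A secondary issue: the ``light'' \efgame\ content you claim for the first correspondence is not light. The $B_d$-label of a position records labels of nearby positions, but it does \emph{not} record the distance between two pebbles; so when Spoiler tests $y=x+1$ in the $\Fs^+$-game, Duplicator cannot read the answer off a single $B_d$-letter. In the paper's construction this is handled precisely by the idempotent padding in $\ucroch{\cdot}_i$, which gives Duplicator room to copy local moves while using the shadow $\Fs$-game only for far jumps (see the invariants in the proofs of Propositions~\ref{prop:fodcor} and~\ref{prop:sicor}). With a length-preserving encoding like $\psi_d$ there is no such room, and the argument as you describe it does not go through.
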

We actually establish two versions of this theorem:
\begin{itemize}
\item The first form, Theorem~\ref{thm:sdo}, is obtained by purely logical
  means. It is not entirely generic, since one of the directions of the
  reduction proof relies on \efgame games adapted to the fragment under
  consideration. On the other hand, it has the advantage of having a direct,
  self-contained and elementary proof, built on a constructive reduction: from
  two regular languages, we effectively build two new regular languages, and
  we exhibit an $\Fs^+$ separator for the original languages from an $\Fs$
  separator for the new~ones.

\item The second form, Theorem~\ref{thm:main}, is based on algebraic
  tools. The transfer result in this statement is presented on classes of
  finite ordered monoids or semigroups associated to the weak and enriched
  fragments respectively, through Eilenberg's correspondence. It has the
  advantage of being completely generic: no hypothesis on the algebraic class
  is assumed. Even if this approach requires some vocabulary and machinery
  from algebra, its presentation is still much simpler than the previous
  one~\cite{Steinberg:delay-pointlikes:2001}. An issue however is that, in
  order to apply this theorem to a specific fragment, one has to find
  beforehand which algebraic classes correspond to the weak and enriched
  fragments. In other terms, the statement indeed isolates a generic transfer
  property, but it relies on specific correspondences in order to be
  instantiated on a given fragment. Fortunately, the correspondences we need
  for treating all classes of \figurename~\ref{fig:frag} have already been
  established. They will be recalled in Section~\ref{sec:algebraic-charac}.
\end{itemize}

All logical fragments from \figurename~\ref{fig:frag} have a rich history
and have been extensively studied in the literature. In particular, the
separation problem is known to be decidable for the following fragments:
\foeq, \fod, {\sio 1}, {\bso 1}, {\sio
  2}~\cite{sep_icalp13,DBLP:conf/mfcs/PlaceRZ13,PZ:icalp14}. This means that,
from our results, we obtain decidability of separation for \foeqp,
\mbox{\fodp}, {\sip 1}, {\bsp 1} and \sdp.

Note that for \foeqp, \fodp and $\bsp 1$, the results could already be
obtained as corollaries of algebraic theorems of
Steinberg~\cite{Steinberg:delay-pointlikes:2001} and
Almeida~\cite{MR1709911}. As explained above, an issue with this approach is
that the proof of Steinberg's result relies on deep algebraic arguments and is
\emph{a priori} not tailored to separation: the connection with separation is
made by Almeida~\cite{MR1709911}.

For {\sip 1} and \sdp, the result is new, as Steinberg's result does not apply
to classes of languages that are not closed under complement. 

\section{Tools for the Logical Approach: Semigroups, Well-Formed Words}
\label{sec:wfwords}
In this section, we define the main tools used for the logical approach in
this paper. 
\begin{itemize}
\item We first recall the well-known semigroup based definition of regular
  languages: a language is regular if and only if it can be recognized by a
  finite semigroup. 

\item Our second tool, \emph{well-formed words}, is specific to our problem and
  plays a key role in our transfer result. It is presented in Section~\ref{sec:well-formed-words}.
\end{itemize}
The tools specific to the algebraic approach are postponed to Section~
\ref{sec:tools-algebr-appr}.

\subsection{Semigroups and Monoids}
\label{sec:semigroups-morphisms}

We work with the algebraic representation of regular languages. Here we
briefly recall the main definitions. We refer the reader to~\cite{Pin15:MPRI}
for additional details.

\highlight{Semigroups}
A \emph{semigroup} is a set $S$ equipped
with an associative product, written $s \cdot t$ or $st$. A \emph{monoid} is a
semigroup $S$ having a neutral element $1_S$, \emph{i.e.}, such that
$s \cdot 1_S = 1_S \cdot s = s$ for all $s \in S$. If $S$ is a semigroup,
then $S^1$ denotes the monoid $S \cup \{1_S\}$ where $1_S \notin S$ is a new
element, acting as neutral element. Note that we add such a new identity even
if $S$ is already a~monoid. A semigroup morphism is a mapping $\alpha:S\to T$
from one semigroup to another which respects the algebraic structure: for all
$s,s'\in S$, we have $\alpha(s\cdot s')=\alpha(s)\cdot\alpha(s')$.
For a monoid morphism, we require additionally $S$ and $T$ to be monoids and $\alpha(1_S)=1_T$.

An element $e \in S$ is \emph{idempotent} if $e\cdot e=e$. We denote
by $E(S)$ the set of idempotents of~$S$. Given a \emph{finite}
semigroup $S$, it is folklore and easy to see that there is an integer
$\omega(S)$ (denoted by $\omega$ when $S$ is understood) such that for
all $s$ of $S$, $s^\omega$ is idempotent:~$s^\omega=s^\omega s^\omega$.

Note that $A^+$ and $A^*$ equipped with concatenation are respectively a
semigroup and a monoid called the \emph{free semigroup over $A$} and the
\emph{free monoid over $A$}. Let $L \subseteq A^+$ be a language and $S$ be a
semigroup (resp.\ a monoid). We say that $L$ is \emph{recognized by $S$} if there exist
a morphism $\alpha: A^+ \rightarrow S$ (resp. $\alpha: A^* \rightarrow S$) and
a set $F \subseteq S$ such that~$L = \alpha^{-1}(F)$.

\highlight{Semigroups and Separation} The separation problem takes as input
two regular languages $L,L'$. It is convenient to work with a single object
recognizing both of them, rather than having to deal with two. Let $S,S'$ be
semigroups recognizing $L,L'$ together with the associated morphisms
$\alpha, \alpha'$, respectively. Clearly, $L$ and $L'$ are both recognized by
$S \times S'$ with the morphism $\alpha\times \alpha':A^+\to S \times S'$
mapping $w$ to $(\alpha(w),\alpha'(w))$. From now on, we work with such a
single semigroup recognizing both languages. Replacing $S\times S'$ with its
image under $\alpha\times \alpha'$, one can also assume that this morphism is
surjective. To sum up, we assume from now on, without loss of generality, that
$L$ and $L'$ are recognized by a single surjective morphism.

\subsection{Well-Formed Words}
\label{sec:well-formed-words}

In this section, we define our main tool for this paper. Assume that
\Fs is the weak variant of one of the logical fragments of
\figurename~\ref{fig:frag} and let $\Fs^+$ be the corresponding enriched
variant. To any semigroup morphism $\alpha: A^+\to S$ into a finite
semigroup $S$, we associate a new alphabet~$\wfA$ called the alphabet
of \emph{well-formed words}. The main intuition behind this notion is
that the $\Fs^+$-separation problem for any two regular languages recognized
by $\alpha$ can be reduced to the \Fs-separation problem for two
regular languages over $\wfA$.

\medskip\noindent
The alphabet \wfA, called \emph{alphabet of well-formed words of $\alpha$}, is
defined from $\alpha: A^+\to S$ by:
\[
\wfA = (E(S) \times S \times E(S)) ~~\cup~~ (S \times E(S)) ~~\cup~~ (E(S)
\times S) ~~\cup~~ S.
\]
We will not be interested in all words of $\wfA^+$, but only in those
that are well-formed. A word $\wbb \in \wfA^+$ is said to be
\emph{well-formed} if one of the following two properties holds:
\begin{itemize}
\item \wbb is a single letter $s \in S$,
\item \wbb has length $\geqslant 2$ and is of the form
  \[
  (s_0,f_0)
  \cdot (e_1,s_1,f_1) \cdots (e_n,s_n,f_n) \cdot (e_{n+1},s_{n+1})
  \in (S \times E(S)) \cdot (E(S) \times S \times E(S))^* \cdot (E(S)
  \times S)
  \]
  with $f_i=e_{i+1}$ for all $0 \leqslant i \leqslant n$.
\end{itemize}

\begin{fact} \label{fct:reg}
  The set of well-formed words of $\wfA^+$ is a regular language.
\end{fact}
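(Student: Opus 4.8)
The plan is to exhibit a finite automaton (or, equivalently, a regular expression / a recognizing morphism) for the set of well-formed words, which is straightforward since the defining conditions are entirely ``local''. First I would observe that the set of well-formed words of length~$1$ is simply the finite set $S \subseteq \wfA$, hence trivially regular. For words of length $\geqslant 2$, the definition imposes three kinds of constraints on a word $\wbb = b_1 \cdots b_m \in \wfA^+$: (i) the first letter $b_1$ must lie in $S \times E(S)$; (ii) the last letter $b_m$ must lie in $E(S) \times S$; (iii) every intermediate letter $b_i$ ($2 \leqslant i \leqslant m-1$) must lie in $E(S) \times S \times E(S)$; and (iv) for each pair of consecutive letters, the last component of $b_i$ must equal the first component of $b_{i+1}$ (this is the condition $f_i = e_{i+1}$). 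Conditions (i)--(iii) are membership conditions on single positions, and condition (iv) is a constraint on pairs of consecutive positions; both are the kind of property recognized by a finite automaton.

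Concretely, I would build a deterministic automaton whose states remember just enough information to check these conditions: a state records whether we are at the start, and the ``right-hand component'' (an element of $E(S)$) that the next letter's left-hand component is required to match. Formally, take state set $\{q_0\} \cup (E(S) \times \{\text{mid}\}) \cup \{q_{\mathrm{acc}}, q_{\mathrm{sink}}\}$. From $q_0$, reading a letter $s \in S$ goes to $q_{\mathrm{acc}}$ (accounting for the single-letter case), while reading a letter $(s_0,f_0) \in S \times E(S)$ goes to the state $(f_0,\text{mid})$; any other letter goes to $q_{\mathrm{sink}}$. From a state $(f,\text{mid})$, reading a letter $(e,s,f') \in E(S)\times S\times E(S)$ with $e = f$ goes to $(f',\text{mid})$; reading a letter $(e,s) \in E(S)\times S$ with $e = f$ goes to $q_{\mathrm{acc}}$; all other letters go to $q_{\mathrm{sink}}$. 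The states $q_{\mathrm{acc}}$ and $q_{\mathrm{sink}}$ are absorbing (on any letter, $q_{\mathrm{acc}}$ moves to $q_{\mathrm{sink}}$, to forbid continuing past a valid suffix letter). The accepting state is exactly $q_{\mathrm{acc}}$. One then checks by a routine induction on word length that this automaton accepts precisely the well-formed words: the single-letter case is handled by the transition $q_0 \xrightarrow{s} q_{\mathrm{acc}}$, and for longer words the invariant ``after reading $b_1 \cdots b_i$ we are in state $(f_i,\text{mid})$'' directly encodes conditions (i), (iii) and (iv), with the final transition into $q_{\mathrm{acc}}$ encoding condition (ii).

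There is essentially no obstacle here: the only mild point to get right is the bookkeeping ensuring that a word is rejected if it has a valid ``prefix shape'' but then continues (handled by making $q_{\mathrm{acc}}$ non-idempotent, i.e.\ sending it to the sink on every letter) and that a single letter $s \in S$ is accepted while, say, a single letter in $S \times E(S)$ is not (handled by the two different transitions out of $q_0$). Since $S$ is finite, $\wfA$ and the state set are finite, so the constructed automaton is finite and the set of well-formed words is regular.
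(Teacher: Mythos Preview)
Your proposal is correct. The paper itself does not actually prove Fact~\ref{fct:reg}: it is stated as a self-evident fact and immediately used, with no argument given. Your explicit automaton construction is a perfectly valid way to fill in the omitted details, and it matches the implicit reasoning the paper relies on (namely that the well-formedness constraints are local: membership of the first, last, and intermediate letters in the appropriate component sets, plus a matching condition on consecutive letters). One cosmetic point: calling $q_{\mathrm{acc}}$ ``absorbing'' is a slip, since you (correctly) have it transition to $q_{\mathrm{sink}}$ on any letter; but your intent is clear and the automaton does the right thing.
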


We now define a morphism $\beta: \wfA^+
\rightarrow S$ as follows. If $s \in S$, we set $\beta(s) = s$, if
$(e,s) \in E(S) \times S$, we set $\beta((e,s)) = es$, if $(s,e) \in S
\times E(S)$, we set $\beta((s,e)) = se$ and if $(e,s,f) \in \mbox{$E(S) \times
  S \times E(S)$}$, we set $\beta((e,s,f)) = esf$.

\highlight{Associated Language of Well-formed Words} To any
language $L \subseteq A^+$ that is recognized by a morphism $\alpha:A^+\to S$ into a
finite semigroup~$S$, one
associates a language of well-formed words $\Lbb \subseteq \wfA^+$:
\[
\Lbb =\bigl \{\wbb \in \wfA^+ \mid \wbb \text{ is well-formed and } \beta(\wbb)
\in \alpha(L)\bigr\}.
\]
By definition, the language $\Lbb \subseteq \wfA^+$ is the intersection
of the language of well-formed words with $\beta^{-1}(\alpha(L))$. Therefore,
it is immediate by Fact~\ref{fct:reg} that it is regular, more precisely:

\begin{fact} \label{fct:reg2} Let $L \subseteq A^+$ be a language recognized
  by a morphism $\alpha$ into a finite semigroup. Then, the associated
  language of well-formed words $\Lbb \subseteq \wfA^+$ is a regular language
  that one can effectively compute from a recognizer of~$L$.
\end{fact}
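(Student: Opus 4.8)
The plan is to simply assemble $\Lbb$ from pieces that are already known to be regular and effectively computable, using the closure properties of regular languages. First, recall from the definition that
\[
\Lbb = W \cap \beta^{-1}(\alpha(L)),
\]
where $W \subseteq \wfA^+$ denotes the set of well-formed words over $\wfA$. By Fact~\ref{fct:reg}, $W$ is regular; moreover, a close reading of its definition (a single letter of $S$, or a word matching the rational expression $(S \times E(S)) \cdot (E(S) \times S \times E(S))^* \cdot (E(S) \times S)$ subject to the adjacency constraints $f_i = e_{i+1}$) shows that a finite automaton for $W$ can be written down explicitly from $S$ alone: the states track the last ``$E(S)$-component'' seen and check it against the next letter. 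So $W$ is not merely regular but effectively so, uniformly in~$\alpha$.

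Next I would argue that $\beta^{-1}(\alpha(L))$ is effectively regular. Since $L$ is recognized by $\alpha : A^+ \to S$, the subset $\alpha(L) \subseteq S$ is computable: it suffices to compute $\alpha(a)$ for each letter $a \in A$, close under product to obtain all of $\alpha(A^+)$, and intersect with the accepting set of a recognizer of~$L$ (equivalently, $\alpha(L) = \{\,s \in S \mid \alpha^{-1}(s) \cap L \neq \emptyset\,\}$, which is decidable because emptiness of the regular language $\alpha^{-1}(s) \cap L$ is decidable). Once the finite set $F := \alpha(L) \subseteq S$ is in hand, $\beta^{-1}(F) = \beta^{-1}(\alpha(L))$ is by definition recognized by the morphism $\beta : \wfA^+ \to S$ together with the accepting set $F$; since $\beta$ is given explicitly on letters of $\wfA$ by the formulas in the text, this is an effective description of a regular language.

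Finally, regular languages are effectively closed under intersection (product automaton, or intersection of recognizing morphisms), so $\Lbb = W \cap \beta^{-1}(\alpha(L))$ is regular and a recognizer for it can be produced algorithmically from a recognizer of~$L$. This completes the argument. I do not expect any genuine obstacle here: the statement is a routine bookkeeping consequence of Fact~\ref{fct:reg} together with standard effectiveness of the operations involved; the only point deserving explicit mention is that computing $\alpha(L)$ as a subset of $S$ is effective, which follows from decidability of emptiness for regular languages.
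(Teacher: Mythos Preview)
Your proof is correct and follows exactly the paper's approach: the paper simply observes, in the sentence preceding the fact, that $\Lbb$ is by definition the intersection of the language of well-formed words with $\beta^{-1}(\alpha(L))$, and concludes regularity from Fact~\ref{fct:reg}. You have merely spelled out the effectiveness details that the paper leaves implicit.
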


\section{Logical Approach}
\label{sec:special}
In this section, we prove Theorem~\ref{thm:transfer} from a logical
perspective. We begin with presenting our \emph{separation} theorem, which
will entail the \emph{membership} theorem as a simple consequence.

\begin{theorem} \label{thm:sdo}
  Let \Fs and $\Fs^+$ be respectively the weak and strong variants of
  one of the logical fragments in \figurename~\ref{fig:frag}.

  Let $L,L'$ be two languages recognized by a morphism $\alpha: A^+
  \rightarrow S$ into a finite semigroup~$S$. Let $\Lbb,
  \Lbb'\subseteq\wfA^+$ be the languages of well-formed words associated
  with $L,L'$, respectively. Then $L$ is $\Fs^+$-separable from $L'$
  iff\/ $\Lbb$ is $\Fs$-separable from $\Lbb'$.
\end{theorem}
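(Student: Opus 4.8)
The plan is to prove the two directions of the equivalence separately, using the morphism $\beta:\wfA^+\to S$ and the structure of well-formed words as the bridge between the weak fragment over $\wfA$ and the enriched fragment over $A$. The key intuition is that a well-formed word $\wbb$ over $\wfA$ encodes, in its letters, not only a word of $A^+$ (via some lifting) but also \emph{local} information about $\alpha$-images of neighbouring factors — precisely the information one needs the predicates $min$, $max$ and $+1$ for. Conversely, given a word $w\in A^+$, one can cut it at positions singled out by the successor-based predicates and record the $\alpha$-images of the resulting blocks, together with idempotent powers at the boundaries, to obtain a well-formed word over $\wfA$; the idempotents $e_i=f_{i-1}$ in the definition are exactly what makes this factorization robust.

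\textbf{From an $\Fs$-separator over $\wfA$ to an $\Fs^+$-separator over $A$.}
First I would assume $\Lbb$ is $\Fs$-separable from $\Lbb'$, say by $\Kbb\in\Fs$ over $\wfA$. The goal is to pull $\Kbb$ back to a language $K\in\Fs^+$ over $A$ separating $L$ from $L'$. The main device is a map sending a word $w\in A^+$ to a canonical well-formed word $\widehat w\in\wfA^+$: one factorizes $w$ according to a bounded-length pattern dictated by $S$ (using powers $\omega(S)$ to create idempotents), computes the $\alpha$-images, and reads off the letters of $\wfA$. Because $\beta(\widehat w)=\alpha(w)$ by construction (associativity of the product, plus $e^\omega=e$ on idempotent factors), we get $w\in L\iff\widehat w\in\Lbb$ and $w\in L'\iff\widehat w\in\Lbb'$. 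Then $K=\{w\in A^+\mid\widehat w\in\Kbb\}$ separates $L$ from $L'$; the point is that this map $w\mapsto\widehat w$ is definable \emph{letter-by-letter} by $\Fs^+$-formulas — the factorization points and the local $\alpha$-data can be expressed with $<$, $+1$, $min$, $max$ and the $P_a$'s — so that substituting these definitions into a formula of $\Fs$ for $\Kbb$ yields a formula of $\Fs^+$ for $K$. This substitution step (``interpretation'') is generic and is where the enrichment of the signature is consumed.

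\textbf{From an $\Fs^+$-separator over $A$ to an $\Fs$-separator over $\wfA$.}
Conversely, assume $L$ is $\Fs^+$-separable from $L'$ by some $K\in\Fs^+$ over $A$. I would define a map decoding a well-formed word $\wbb\in\wfA^+$ into an actual word $\pi(\wbb)\in A^+$: replace each letter by \emph{some} chosen $\alpha$-preimage of the corresponding element of $S$, inflating idempotent components to their $\omega$-th powers so that the concatenation has $\alpha$-image exactly $\beta(\wbb)$. This gives $\wbb\in\Lbb\iff\pi(\wbb)\in L$ on well-formed words (and likewise for $\Lbb'$), and then $\Kbb=\{\wbb\mid\pi(\wbb)\in K\}\cap(\text{well-formed words})$ works — intersecting with the regular set of well-formed words is harmless for $\Fs$ since all our fragments are closed under intersection with regular languages of the appropriate kind, or one argues directly that the characteristic information is expressible. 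The subtle part here is going the other way through the signature: a formula of $\Fs^+$ over $A$ uses $+1$ and $min,max$, and we must simulate these over $\wfA$ using only $<$ (or only $=$, for $\foeq$). This is exactly where an \efgame game argument enters: one shows that if two words $w_1,w_2$ obtained as decodings $\pi(\wbb_1),\pi(\wbb_2)$ of well-formed words are $\Fs$-equivalent over $\wfA$ (with enough rank), then they are $\Fs^+$-equivalent over $A$, because the extra predicates of $\Fs^+$ can only inspect boundaries and immediate neighbours, information which the well-formed structure has already encoded into the letters over $\wfA$ and which a player can therefore match. This \efgame transfer lemma, adapted to each fragment in \figurename~\ref{fig:frag}, is the one non-generic ingredient, and it is the step I expect to be the main obstacle; everything else (regularity, the $\beta$-compatibility identities, closure properties, the interpretations) is routine once the encoding/decoding maps are fixed.
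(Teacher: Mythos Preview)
Your first direction (from an $\Fs$-separator over $\wfA$ to an $\Fs^+$-separator over $A$) is essentially the paper's argument: build a canonical well-formed word $\widehat w$ from $w$ using a bounded-length local factorization, observe $\beta(\widehat w)=\alpha(w)$, and then relativize an $\Fs$-formula for $\Kbb$ to the distinguished positions of $w$, replacing each letter predicate by an $\Fs^+$-formula reading the local $2|S|$-neighbourhood. Nothing to add there.

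The second direction has a genuine gap in the way you set it up. Your proposed separator is $\Kbb=\{\wbb\mid\pi(\wbb)\in K\}\cap(\text{well-formed words})$, and you justify this by claiming that ``intersecting with the regular set of well-formed words is harmless for $\Fs$ since all our fragments are closed under intersection with regular languages of the appropriate kind.'' That claim is false: none of $\foeq$, $\sio1$, $\bso1$, $\fod$ are closed under intersection with arbitrary regular languages, and the set of well-formed words is only $\pio2$ in general, so for the weaker fragments it simply is not in $\Fs$. Nor can you translate an $\Fs^+$-formula on $\pi(\wbb)$ back into an $\Fs$-formula on $\wbb$ letter-by-letter, because the decoding $\pi$ inflates each letter of $\wbb$ into a block of unbounded length in $A^+$, so quantification over positions of $\pi(\wbb)$ does not correspond to quantification over positions of $\wbb$.

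The fix is the one the paper uses, and you in fact have the key lemma for it. Do not try to build an explicit $\Fs$-separator from $K$; instead argue the contrapositive. If $\Lbb$ is \emph{not} $\Fs$-separable from $\Lbb'$, then for every rank $\ell$ there exist $\ubb\in\Lbb$ and $\ubb'\in\Lbb'$ with $\ubb\preccurlyeq_\ell\ubb'$ (this is the standard reformulation of non-separability via the canonical rank-$\ell$ upward closure of $\Lbb$). Your \efgame transfer lemma --- for every $k$ there are $\ell,i$ such that $\ubb\preccurlyeq_\ell\ubb'$ implies $\pi_i(\ubb)\preccurlyeq_k^{+1}\pi_i(\ubb')$ --- then gives, for every $k$, words $u\in L$ and $u'\in L'$ with $u\preccurlyeq_k^{+1}u'$, whence $L$ is not $\Fs^+$-separable from $L'$. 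This avoids the well-formedness intersection entirely; the \efgame lemma is still the non-generic, fragment-specific step, exactly as you anticipated.
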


Theorem~\ref{thm:sdo} reduces $\Fs^+$-separation to \Fs-separation. The latter
was already known to be decidable for several weak variants in
\figurename~\ref{fig:frag}, namely for \foeq~\cite{ltltt:2013},
\fod~\cite{DBLP:conf/mfcs/PlaceRZ13}, $\sio 1$~\cite{sep_icalp13},
$\bso 1$~\cite{sep_icalp13,DBLP:conf/mfcs/PlaceRZ13} and
\sdo~\cite{PZ:icalp14}. Hence, we get the following corollary.

\begin{corollary} \label{cor:sdp}
  Let $L,L'$ be regular languages. Then the following problems are
  decidable:
  \begin{itemize}
  \item whether $L$ is \foeqp-separable from $L'$.
  \item whether $L$ is \fodp-separable from $L'$.
  \item whether $L$ is $\sip 1$-separable from $L'$.
  \item whether $L$ is $\bsp 1$-separable from $L'$.
  \item whether $L$ is \sdp-separable from $L'$.
  \end{itemize}
\end{corollary}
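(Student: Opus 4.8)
The plan is to derive Corollary~\ref{cor:sdp} as an immediate consequence of Theorem~\ref{thm:sdo} together with the known decidability results for $\Fs$-separation. First I would observe that for each of the five enriched fragments listed, its weak variant is one of $\foeq$, $\fod$, $\sio1$, $\bso1$, $\sdo$, and $\Fs$-separation for each of these weak variants has been established in the cited references. It therefore suffices to show that, given two regular input languages $L,L'$, the question whether $L$ is $\Fs^+$-separable from $L'$ can be effectively transformed into an instance of the $\Fs$-separation problem that we already know how to decide.

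The key steps are as follows. Given regular $L$ and $L'$, I would first compute a single finite semigroup $S$ and a surjective morphism $\alpha\colon A^+\to S$ recognizing both $L$ and $L'$; this is possible by the discussion in Section~\ref{sec:semigroups-morphisms} (take a product of recognizers and restrict to the image). Next, I would form the alphabet of well-formed words $\wfA$ associated with $\alpha$, and compute the associated languages of well-formed words $\Lbb,\Lbb'\subseteq\wfA^+$. By Fact~\ref{fct:reg2}, both $\Lbb$ and $\Lbb'$ are regular and effectively computable from recognizers of $L$ and $L'$. Now Theorem~\ref{thm:sdo} states that $L$ is $\Fs^+$-separable from $L'$ if and only if $\Lbb$ is $\Fs$-separable from $\Lbb'$. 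Since $\Fs$-separation is decidable for the relevant weak variant $\Fs$, we can decide the right-hand side, hence the left-hand side; this settles each of the five bullet points.

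There is essentially no obstacle here beyond bookkeeping: the whole content has been packaged into Theorem~\ref{thm:sdo} and Fact~\ref{fct:reg2}, and the corollary is just a matter of matching each strong fragment in the list with its weak counterpart and invoking the appropriate citation. The only point deserving a word of care is that the reduction must be \emph{effective}, i.e.\ one must genuinely be able to build $\alpha$, $\wfA$, $\Lbb$ and $\Lbb'$ algorithmically from automata for $L$ and $L'$ --- but this is exactly the content of Fact~\ref{fct:reg2} and the standard constructions on regular languages, so nothing new is required. I would close by noting that, once the membership-transfer result is also in place (announced after Theorem~\ref{thm:sdo} as a consequence of the separation theorem), the analogous corollary for membership --- decidability for $\foeqp$, $\fodp$, $\sip1$, $\bsp1$, $\sdp$, and by iterating for $\bdp$ and $\stp$ --- follows in precisely the same way.
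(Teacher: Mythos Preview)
Your proposal is correct and follows exactly the paper's approach: the corollary is obtained directly from Theorem~\ref{thm:sdo} by noting that $\Fs$-separation is already known to be decidable for each of the weak variants $\foeq$, $\fod$, $\sio1$, $\bso1$, $\sdo$, with the effectivity of the reduction guaranteed by Fact~\ref{fct:reg2}. The paper's own justification is a single sentence to this effect placed just before the corollary.
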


Notice that since the membership problem reduces to the separation problem,
this also gives a new proof that all these fragments have a decidable
membership problem. This is of particular interest for $\fodp$, $\bsp 1$
and \sdp for which the previous proofs, which can be found in, or derived from~\cite{Str85,Almeida:1996c,PSDAD},
\cite{Knast:dd1:1983a}, and \cite{glasser-dd3/2,pw:wreath,pwdelta} respectively, are known to be quite
involved. It turns out that for \sdp, we can do even better and entirely avoid
separation. Indeed, when \Fs is expressive enough, Theorem~\ref{thm:sdo} can be
used to prove a similar theorem for the membership~problem.

\begin{theorem} \label{thm:memb}
  Let \Fs and $\Fs^+$ be respectively the weak and strong variants of
  one of the logical fragments in \figurename~\ref{fig:frag}. Moreover,
  assume that for any alphabet of well-formed words, the set of
  well-formed words over this alphabet is definable in \Fs.

  Let $L$ be a language recognized by a morphism $\alpha: A^+ \rightarrow S$ into
  a finite semigroup~$S$. Let $\Lbb \subseteq\wfA^+$ be the language of
  well-formed words associated with $L$. Then $L$ is definable in $\Fs^+$ iff\/
  $\Lbb$ is definable in $\Fs$.
\end{theorem}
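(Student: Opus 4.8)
The plan is to derive Theorem~\ref{thm:memb} from the separation statement Theorem~\ref{thm:sdo} together with the extra hypothesis that well-formed words are \Fs-definable. Recall that for a class $\Cs$ closed under complement, a regular language $M$ is in $\Cs$ if and only if $M$ is $\Cs$-separable from its complement. The fragments in \figurename~\ref{fig:frag}, however, are not all closed under complement (the $\sic n$ levels are not), so membership is not literally the ``self-separation'' instance of the separation problem. The workaround is standard: $M$ is definable in a fragment $\Fs$ iff both $M$ is $\Fs$-separable from its complement \emph{and} the complement of $M$ is $\Fs$-separable from $M$ — equivalently, there is an $\Fs$-formula for $M$ and an $\Fs$-formula for its complement. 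Actually, for the $\Sigma$-levels the cleaner equivalence is: $M$ is in $\sic n$ iff $M$ is $\sic n$-separable from $A^+\setminus M$; and for $\Pi$ and Boolean-combination levels the class is already closed under complement. In all cases, ``$M$ is \Fs-definable'' is equivalent to ``$M$ is \Fs-separable from $A^+\setminus M$'' when \Fs is a $\Sigma$ or $\Pi$ level or a Boolean combination, since in the $\Sigma$ case a $\sic n$ separator of $M$ from its complement must equal $M$.

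With that reduction in hand, the key step is to relate the membership instance ``$L \in \Fs^+$?'' to the separation instance ``$L$ is $\Fs^+$-separable from $A^+\setminus L$?'' and then push this through Theorem~\ref{thm:sdo}. Apply Theorem~\ref{thm:sdo} with $L' = A^+\setminus L$; note $L'$ is recognized by the same morphism $\alpha$ (replacing the accepting set $F=\alpha(L)$ by $S\setminus F$, after making $\alpha$ surjective as arranged in Section~\ref{sec:semigroups-morphisms}). The associated language of well-formed words $\Lbb'$ is then $\{\wbb \in \wfA^+ \mid \wbb \text{ well-formed}, \beta(\wbb)\notin\alpha(L)\}$, i.e.\ $\Lbb'$ is exactly the set of well-formed words \emph{not} in $\Lbb$: inside the set $\Wbf$ of well-formed words, $\Lbb$ and $\Lbb'$ partition $\Wbf$. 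Hence $\Lbb' = \Wbf \setminus \Lbb$ and $\Lbb = \Wbf\setminus\Lbb'$. So Theorem~\ref{thm:sdo} gives: $L$ is $\Fs^+$-separable from $A^+\setminus L$ iff $\Lbb$ is $\Fs$-separable from $\Wbf\setminus\Lbb$.

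It remains to see that, under the hypothesis that $\Wbf$ is \Fs-definable, ``$\Lbb$ is $\Fs$-separable from $\Wbf\setminus\Lbb$'' is equivalent to ``$\Lbb$ is $\Fs$-definable''. One direction is trivial: if $\Lbb \in \Fs$ then $\Lbb$ itself separates $\Lbb$ from anything disjoint from it. For the converse, suppose $K \in \Fs$ separates $\Lbb$ from $\Wbf\setminus\Lbb$, i.e.\ $\Lbb \subseteq K$ and $K \cap (\Wbf\setminus\Lbb) = \emptyset$. Then $K \cap \Wbf = \Lbb$ (the inclusion $\Lbb\subseteq K\cap\Wbf$ holds since $\Lbb\subseteq\Wbf$ and $\Lbb\subseteq K$; the reverse holds since any word of $K\cap\Wbf$ that is not in $\Lbb$ would lie in $\Wbf\setminus\Lbb$, contradicting disjointness). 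Since $\Wbf$ is \Fs-definable by hypothesis and \Fs (for the fragments in the figure) is closed under intersection, $\Lbb = K \cap \Wbf$ is \Fs-definable. Combining the three equivalences — membership $\Leftrightarrow$ self-separation on the word side, the transfer of Theorem~\ref{thm:sdo}, and separation $\Leftrightarrow$ definability relative to $\Wbf$ — yields $L \in \Fs^+$ iff $\Lbb \in \Fs$, as desired.

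The main obstacle I expect is the first equivalence: carefully justifying that for each concrete fragment \Fs in \figurename~\ref{fig:frag}, ``$M \in \Fs$'' coincides with ``$M$ is \Fs-separable from $A^+\setminus M$'', since \Fs need not be closed under complement. For the $\sic n$ and $\pic n$ levels one argues that a separator living in $\sic n$ with $M \subseteq K$ and $K\cap(A^+\setminus M)=\emptyset$ forces $K = M$, so $M\in\sic n$; for Boolean-combination levels the class is closed under complement and the statement is the familiar one. A secondary point to check is that the fragments are closed under finite intersection (they all are, trivially, being closed under $\wedge$ — for $\sic n$ this is immediate since a conjunction of $\sic n$-formulas is again $\sic n$), which is what makes $K \cap \Wbf$ stay in \Fs. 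Neither point is deep, but both must be spelled out to make the reduction airtight.
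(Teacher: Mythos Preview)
Your proof is correct and follows essentially the same route as the paper's: apply Theorem~\ref{thm:sdo} to $L$ and its complement $L'=A^+\setminus L$, observe that $\Lbb$ and $\Lbb'$ partition the set of well-formed words, and then recover $\Lbb$ as the intersection of an $\Fs$-separator with the (by hypothesis $\Fs$-definable) set of well-formed words. Your ``main obstacle'' is in fact a non-issue and the paper does not dwell on it: for \emph{any} class $\Cs$ whatsoever, $M$ is $\Cs$-separable from $A^+\setminus M$ iff $M\in\Cs$, simply because any separator $K$ satisfies $M\subseteq K$ and $K\cap(A^+\setminus M)=\emptyset$, forcing $K=M$ --- closure under complement plays no role here.
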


\begin{proof}
  Set $K = A^+ \setminus L$ and let \Kbb be the associated language of
  well-formed words. Observe that by definition, $\Kbb \cup \Lbb$ is the
  set of all well-formed words.

  If $\Lbb$ is definable in $\Fs$, then $\Lbb$ is $\Fs$-separable from \Kbb,
  hence by Theorem~\ref{thm:sdo}, $L$ is $\Fs^+$-separable from $K$, and so
  $L$ is definable in $\Fs^+$. Conversely, if $L$ is definable in
  $\Fs^+$, then $L$ is $\Fs^+$-separable from $K$ and by Theorem~\ref{thm:sdo},
  \Lbb is $\Fs$-separable from \Kbb. Since $\Kbb \cup \Lbb$ is
  the set of all well-formed words, \Lbb is the intersection of the
  separator with the set of all well-formed words, which by hypothesis is also definable in \Fs.
  Therefore, \Lbb is definable in~\Fs.
\end{proof}

Observe that being well-formed  can be expressed in \pio{2}:
essentially, a word is well-formed if for all pairs of positions,
either there is a third one in-between, or the labels of the two
positions are ``compatible''. Hence, among the fragments of
\figurename~\ref{fig:frag}, Theorem~\ref{thm:memb} applies to all
fragments including and above \pio{2} in the quantifier alternation
hierarchy. While such a transfer result was previously
known~\cite{Str85,pw:wreath}, the presentation and the proof are
new. In particular, since membership is known to be decidable for
\pio{2}~\cite{pwdelta}, \bso{2}~\cite{PZ:icalp14} and
\sio{3}~\cite{PZ:icalp14}, we obtain new and simpler proofs of the
following results.

\begin{corollary} \label{cor:memb}
  Given a regular language $L$, one can decide whether
  \begin{itemize}
  \item $L$ is definable by a \sdp (resp. by a \pdp) formula.
  \item $L$ is definable by a \bdp formula.
  \item $L$ is definable by a \stp (resp. by a \ptp) formula.
  \end{itemize}
\end{corollary}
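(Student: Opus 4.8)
The plan is to derive each item of Corollary~\ref{cor:memb} as a direct instance of Theorem~\ref{thm:memb}, using the observation already made right before the corollary statement: the set of well-formed words over any alphabet \wfA is definable in \pio2, hence a fortiori in any fragment \Fs sitting above \pio2 in the quantifier alternation hierarchy. Concretely, for each target strong fragment $\Fs^+$ appearing in the corollary, I identify the corresponding weak fragment \Fs from \figurename~\ref{fig:frag} (so $\Fs^+ = \sdp$ pairs with $\Fs = \sdo$, $\Fs^+ = \bdp$ with $\Fs = \bso2$, $\Fs^+ = \stp$ with $\Fs = \sio3$, and dually for the $\Pi$-levels), check that \Fs can define well-formed words, invoke Theorem~\ref{thm:memb} to reduce $\Fs^+$-membership of $L$ to \Fs-membership of the associated language of well-formed words $\Lbb$, and finally cite the known decidability of \Fs-membership.

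First I would handle the membership-of-well-formed-words hypothesis uniformly. As noted in the excerpt, a word $\wbb\in\wfA^+$ is well-formed essentially when, for every pair of positions, either some position lies strictly between them, or the two letters at these positions are ``compatible'' in the sense prescribed by the definition of well-formed words (the letter types match the $S\times E(S)$ / $E(S)\times S\times E(S)$ / $E(S)\times S$ pattern, and the shared idempotent coordinates agree); one also needs to check that the first and last letters have the correct type and that there is at least one position. All of these are conjunctions of: a \sic1 constraint on single positions, a universal ($\forall x\,\forall y$) sentence whose matrix contains one existential quantifier (``$\exists z$ strictly between''), i.e.\ a \pio2 sentence. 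Hence the conjunction is \pio2-definable. Since $\Pi_2(<)\subseteq\sio3$, $\Pi_2(<)\subseteq\bso2$, and $\Pi_2(<)\subseteq\sdo$ (these are all among or above level~2), the hypothesis of Theorem~\ref{thm:memb} is met for $\Fs\in\{\sdo,\bso2,\sio3\}$, and dually the set of well-formed words is also \sic2-definable? — no: it is more robustly just stated as \pio2, which is contained in all the relevant fragments including \pio2 itself, \pdp's weak companion. For the $\Pi$-side items I use the same argument with $\Fs\in\{\pio2,\pio3\}$ (well-formedness is \pio2-definable, hence definable in \pio2 and in \pio3).

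Second, with the hypothesis discharged, each item is immediate: by Theorem~\ref{thm:memb}, $L$ is $\Fs^+$-definable iff the effectively computable (Fact~\ref{fct:reg2}) regular language $\Lbb\subseteq\wfA^+$ is \Fs-definable, where $\alpha:A^+\to S$ is a recognizing morphism for $L$; decidability of $\Fs^+$-membership then follows from decidability of \Fs-membership, which is known for $\sdo$, $\bso2$ and $\sio3$ by~\cite{pwdelta,PZ:icalp14} and for $\pio2$, $\pio3$ either by the same references or by closure of these hierarchy levels under complement of the corresponding decision procedure (a \sio{n}-membership algorithm yields a \pio{n}-membership algorithm, since $L\in\pio n$ iff $A^+\setminus L\in\sio n$). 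I do not foresee a genuine obstacle: the only point requiring a line of care is making fully explicit that well-formedness lands in \pio2 — in particular that the ``compatibility'' predicate on two letters, and the boundary conditions on the first and last letters, really are first-order over $(<)$ with the stated quantifier pattern — but this is routine once one unfolds the definition of well-formed word, and it is exactly the remark already sketched in the paragraph preceding the corollary.
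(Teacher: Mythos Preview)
Your overall plan is the paper's plan, and the \bdp, \stp, \pdp and \ptp items go through exactly as you describe. The one genuine gap is the \sdp item. You write ``$\Pi_2(<)\subseteq\sdo$'', but $\sdo=\sio2$ and the inclusion $\pio2\subseteq\sio2$ is false (the quantifier alternation hierarchy is strict, so $\pio2\not\subseteq\sio2$). Hence you have \emph{not} verified the hypothesis of Theorem~\ref{thm:memb} for $\Fs=\sio2$: the only fact established (by you and by the paper) is that the set of well-formed words is \pio2-definable, and this does not yield \sio2-definability. Your parenthetical ``these are all among or above level~2'' conflates level with class; $\sio2$ and $\pio2$ are incomparable.

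The fix is the one the paper uses (implicitly in the sentence just before the corollary, where it lists \pio2, \bso2, \sio3 as the weak fragments): apply Theorem~\ref{thm:memb} with $\Fs=\pio2$ to obtain decidability of \pdp-membership, and then conclude for \sdp via complement at the \emph{strong} level, namely $L\in\sdp$ iff $A^+\setminus L\in\pdp$. You already state the complement trick, but you only invoke it at the weak level (to get decidability of \pio{n}-membership from \sio{n}-membership); for \sdp you must invoke it at the $+1$ level instead, since the well-formedness hypothesis is only available on the $\Pi$ side at level~2. Symmetrically, \ptp is obtained from \stp. With that correction, your argument matches the paper's.
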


It remains to prove Theorem~\ref{thm:sdo}. We devote the rest of the section
to this proof. An important remark is that the proof of the right to left
direction, presented in Section~\ref{sec:from-fod-to-fodp}, is constructive:
we start with an $\Fs$ formula that separates $\Lbb$ from $\Lbb'$ and use it
to construct an $\Fs^+$ formula that separates $L$ from $L'$. Note that the
argument is generic for all fragments we consider.

On the other hand, the converse direction to which
Section~\ref{sec:from-fodp-to-fod} is devoted, namely
Proposition~\ref{prop:corr} below, requires a specific argument tailored to
each fragment: a straightforward but tedious \efgame argument.

\subsection{\texorpdfstring{From $\Fs$-separation to $\Fs^+$-separation}{From F-separation to F+-separation}}
\label{sec:from-fod-to-fodp}

We now prove that if $\Lbb$ is $\Fs$-separable from $\Lbb'$, then $L$ is
$\Fs^+$-separable from~$L'$. We do so by building an $\Fs^+$-definable
separator. This proof is constructive and entirely generic. We rely on a construction
that associates to any word $w \in A^+$
a canonical well-formed word $\croch{w} \in \wfA^+$.

\highlight{Canonical Well-formed Word Associated to a Word} To any word $w$
of $A^+$, we associate a canonical well-formed word $\croch{w} \in
\wfA^+$ such that $\alpha(w) = \beta(\croch{w})$. This construction is
adapted from~\cite{PSDAD} and is originally inspired by \cite{Str85}.

Fix an arbitrary order on the set $E(S)$.
For a position $x$ of $w$, let $u_x \in A^+$ be the infix of $w$ obtained by
keeping only positions $x-(|S|-1)$ to $x$. If position $x-(|S|-1)$ does not
exist, $u_x$ is just the prefix of $w$ ending at $x$. A position $x$ is said
\emph{distinguished} if there exists an idempotent $e \in E(S)$ such that
$\alpha(u_x) \cdot e = \alpha(u_x)$. Additionally, we always define the
rightmost position as distinguished, even if it does not satisfy the
property. Set $x_1<\cdots <x_{n+1}$ as the distinguished~positions in~$w$, so
that $x_{n+1}$ is the rightmost position. Let $e_1,\dots,e_{n}\in E(S)$ be such
that for all $1\leqslant i\leqslant n-1$, $e_i$ is the smallest idempotent such that
$\alpha(u_{x_{i}}) \cdot e_i = \alpha(u_{x_{i}})$.

If $n=0$, \emph{i.e.}, if the only distinguished position is the rightmost one,
set $\croch{w} = \alpha(w) \in \wfA$. Otherwise, we define $\croch{w} \in
\wfA^+$ as the word:
\begin{equation}
  \label{eq:croch}
  \croch{w} = (\alpha(w_0),e_1)\cdot (e_1,\alpha(w_1),e_2) \cdots
  (e_{n-1},\alpha(w_{n-1}),e_n)\cdot (e_{n},\alpha(w_{n}))
\end{equation}
\noindent where $w_0$ is the prefix of $w$ ending at position $x_1$, for all $1 \leqslant i
\leqslant n-1$, $w_i$ is the infix of $w$ obtained by keeping positions $x_{i}+1$
to $x_{i+1}$, and $w_{n}$ is the suffix of $w$ starting at position
$x_n+1$. Note that by construction, $\croch{w}$ is well-formed.

The next statement follows from the definition of $\beta$, and from the fact
that by definition of the words $w_i$ and of the chosen idempotents, we have
$\alpha(w_0\cdots w_{i})e_{i+1}=\alpha(w_0\cdots w_{i})$.

\begin{fact} \label{fct:sametype}
  For all $w \in A^+$, we have $\alpha(w) = \beta(\croch{w})$. Therefore, $w \in L$
  iff $\croch{w}\in\Lbb$ and $w\in L'$ iff $\croch{w}\in \Lbb'$.
\end{fact}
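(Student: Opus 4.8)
The plan is to establish the algebraic identity $\alpha(w)=\beta(\croch w)$ by unfolding the definition of the morphism $\beta$ on the word $\croch w$ produced by the construction, and then to read off the two membership equivalences directly from the definition of the languages $\Lbb$ and $\Lbb'$. The case $n=0$ is immediate: there $\croch w=\alpha(w)$ is a single letter of $S\subseteq\wfA$, and $\beta(\alpha(w))=\alpha(w)$ by the clause defining $\beta$ on $S$, so $\alpha(w)=\beta(\croch w)$ already holds.

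For $n\geqslant 1$, I would apply $\beta$ letter by letter to the word displayed in~\eqref{eq:croch}. Using the four clauses defining $\beta$, this gives
\[
\beta(\croch w)=(\alpha(w_0)e_1)\cdot(e_1\alpha(w_1)e_2)\cdots(e_{n-1}\alpha(w_{n-1})e_n)\cdot(e_n\alpha(w_n)),
\]
and since each $e_i$ is idempotent, every consecutive factor $e_i\cdot e_i$ collapses to $e_i$, so the right-hand side equals $\alpha(w_0)\,e_1\,\alpha(w_1)\,e_2\cdots e_n\,\alpha(w_n)$. It then remains to erase the idempotents one at a time, from left to right. The key fact --- exactly the identity announced in the sentence preceding the statement --- is that
\[
\alpha(w_0\cdots w_i)\,e_{i+1}=\alpha(w_0\cdots w_i)\qquad\text{for every }0\leqslant i\leqslant n-1 .
\]
This is the only place where the details of the construction intervene: by definition $w_0\cdots w_i$ is the prefix of $w$ ending at the distinguished position $x_{i+1}$, hence it factors as $v\cdot u_{x_{i+1}}$, where $u_{x_{i+1}}$ consists of the last $|S|$ positions of that prefix (or of the whole prefix, if it is shorter) and $v$ is the remaining part, with $\alpha(v)$ read in $S^1$ when $v$ is empty. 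Since $x_{i+1}$ differs from the forced rightmost position $x_{n+1}$, it is distinguished in the genuine sense, so the idempotent $e_{i+1}$ attached to it was chosen so that $\alpha(u_{x_{i+1}})\,e_{i+1}=\alpha(u_{x_{i+1}})$; multiplying $\alpha(w_0\cdots w_i)=\alpha(v)\alpha(u_{x_{i+1}})$ on the right by $e_{i+1}$ therefore changes nothing. A one-line induction on $j$ using this identity then shows $\alpha(w_0)\,e_1\,\alpha(w_1)\cdots e_j\,\alpha(w_j)=\alpha(w_0\cdots w_j)$ for all $0\leqslant j\leqslant n$, and for $j=n$ this yields $\beta(\croch w)=\alpha(w_0\cdots w_n)=\alpha(w)$.

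For the ``therefore'' part: $\croch w$ is well-formed by construction, so by definition of $\Lbb$ we have $\croch w\in\Lbb$ iff $\beta(\croch w)\in\alpha(L)$, i.e.\ iff $\alpha(w)\in\alpha(L)$, which --- since $L$ is recognized by $\alpha$, so that $L=\alpha^{-1}(\alpha(L))$ --- holds iff $w\in L$; the same reasoning with $L'$ and $\Lbb'$ gives the remaining equivalence. I do not expect a genuine obstacle here: the only step that needs care is the bookkeeping around the displayed identity $\alpha(w_0\cdots w_i)e_{i+1}=\alpha(w_0\cdots w_i)$ --- checking that the idempotent erased at the $i$-th boundary is precisely the one attached to $x_{i+1}$, that $w_0\cdots w_i$ is indeed the prefix of $w$ ending at $x_{i+1}$, and that the border cases ($n=0$, and $u_{x_{i+1}}$ coinciding with the whole prefix so that $v$ is empty) are absorbed by passing to $S^1$.
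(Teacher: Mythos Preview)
Your proposal is correct and follows exactly the approach the paper indicates in the sentence preceding the Fact: unfold $\beta$ on $\croch{w}$, then use the identity $\alpha(w_0\cdots w_i)e_{i+1}=\alpha(w_0\cdots w_i)$ to absorb the idempotents one by one. The paper leaves the details implicit, and your write-up fills them in carefully (including the border cases), so there is nothing to add.
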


To any distinguished position $x_i$ in $w$, we now
associate the position $\croch{x} = i$ in $\croch{w}$. Our main
motivation for using this construction is its local canonicity, which is stated
in the following lemma.

\begin{lemma} \label{lem:canonic}
  Let $w \in A^+$. Then we have the following
  properties:
  \begin{enumerate}[label=$(\alph*)$,ref=$(\alph*)$]
  \item\label{item:5} whether a position $x$ is distinguished in~$w$, and if so the label of position $\croch{x}$
    in $\croch{w}$ only depends on the infix of~$w$ of length $2|S|$
    ending at position $x$. That is, if the infixes of length $2|S|$ ending at
    $x$ and $y$ are equal, then $x$ is distinguished iff so is $y$, and in that case,
    the labels of $\croch{x}$ and $\croch{y}$ in $\croch{w}$ are equal.
  \item the label of the last position of \croch{w} only depends on
    the suffix of length $2|S|$ of $w$.
  \end{enumerate}
\end{lemma}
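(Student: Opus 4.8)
The plan is to trace through the definition of $\croch{w}$ in~\eqref{eq:croch} and observe that every piece of data attached to a position is computed from a bounded window. Start with part~\ref{item:5}. The key observation is that whether a position $x$ is distinguished depends only on $\alpha(u_x)$, where $u_x$ is the infix of length $|S|$ ending at $x$ (or the prefix, if $x < |S|$). But $\alpha(u_x)$ itself is determined by the word $u_x$, which is read off from the window of length $|S|$ ending at $x$ — a fortiori from the window of length $2|S|$ ending at $x$. (The reason the statement uses $2|S|$ rather than $|S|$ will become clear in a moment; I would point out that for the ``distinguished'' predicate alone, length $|S|$ already suffices, but the uniform bound $2|S|$ is convenient for the whole lemma.) For the rightmost position, the convention makes it distinguished unconditionally, so there is nothing to check; and the statement of \ref{item:5} is really about the ``interior'' distinguished positions, where the genuine combinatorial criterion applies — I would phrase it so that the degenerate case is absorbed cleanly, or simply note that two positions with equal length-$2|S|$ windows are either both the rightmost position or both non-rightmost, since being rightmost is itself visible from the window (it is the position after which the word ends — though to be careful here one may instead just argue about non-final positions and handle the last position via part~(b)).

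Next I would handle the label of $\croch{x}$ when $x = x_i$ is distinguished. By~\eqref{eq:croch}, this label is the triple $(e_{i-1}, \alpha(w_{i-1}), e_i)$ (with the appropriate truncation to a pair at the two ends). So I need to show each of the three components depends only on the length-$2|S|$ window ending at $x_i$. The idempotent $e_i$ is the smallest $e \in E(S)$ with $\alpha(u_{x_i}) e = \alpha(u_{x_i})$, hence a function of $\alpha(u_{x_i})$, hence of the length-$|S|$ window. The middle component $\alpha(w_{i-1})$, where $w_{i-1}$ is the infix strictly between $x_{i-1}$ and $x_i$, is the part that genuinely needs room: one must argue that even though $w_{i-1}$ can be long, its $\alpha$-image is recoverable from the last $|S|$ letters plus the idempotent $e_{i-1}$. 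Here is where the defining property of distinguished positions pays off. Since $x_{i-1}$ is distinguished, $\alpha(u_{x_{i-1}}) e_{i-1} = \alpha(u_{x_{i-1}})$, and because $e_{i-1}$ is idempotent, $\alpha(w_0 \cdots w_{i-2}) e_{i-1} = \alpha(w_0 \cdots w_{i-2})$; moreover there is no distinguished position strictly inside $w_{i-1}$, which forces a ``non-distinguished'' combinatorial pattern that, via a pigeonhole argument on the $|S|$-many states, lets one factor $w_{i-1}$ and show $e_{i-1}\,\alpha(w_{i-1})$ is determined by $e_{i-1}$ together with the last $|S|$ letters of $w_{i-1}$. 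This is the one step that requires a small argument rather than a direct reading-off. The upshot is that the whole triple is a function of the last $2|S|$ letters: $|S|$ letters to pin down $\alpha(u_{x_i})$ (hence $e_i$ and the tail of $w_{i-1}$), and another $|S|$ to carry the information needed to reconstruct $e_{i-1}$ and the left part of the triple — actually the cleanest framing is that $e_{i-1}$ is visible because $x_{i-1}$ lies within distance $\le |S|$ of the start of $u_{x_i}$, or one appeals to the pigeonhole factorization directly.

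Part~(b) is a special case of the same reasoning applied to the last letter of $\croch{w}$, which is $(e_n, \alpha(w_n))$ where $w_n$ is the suffix starting just after $x_n$. The component $\alpha(w_n)$ is handled exactly as $\alpha(w_{i-1})$ above: $x_n$ is the last distinguished position before the end, so $w_n$ contains no interior distinguished position, and the same pigeonhole factorization shows $e_n\,\alpha(w_n)$ depends only on $e_n$ and the last $|S|$ letters of $w$; and $e_n$ itself is fixed by $\alpha(u_{x_n})$. I expect the main obstacle to be precisely this pigeonhole factorization lemma — formalizing why a stretch of word with no distinguished position has bounded ``reachable'' behavior modulo the ambient idempotent — rather than the bookkeeping of windows, which is routine. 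Everything else is a matter of unwinding~\eqref{eq:croch} and invoking Fact~\ref{fct:sametype}.
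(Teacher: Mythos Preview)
Your plan circles around the right pigeonhole idea but misapplies it, and the resulting argument has a real gap.

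The label at $\croch{x_i}$ is the triple $(e_{i-1},\alpha(w_{i-1}),e_i)$, with $\alpha(w_{i-1})$ stored as a \emph{separate} component of the letter in $\wfA$. Your strategy is to allow $w_{i-1}$ to be long and to recover only the product $e_{i-1}\cdot\alpha(w_{i-1})$ from the window. That is not enough: two different values of $\alpha(w_{i-1})$ with the same $e_{i-1}\cdot\alpha(w_{i-1})$ give different letters of $\wfA$, so you must determine $\alpha(w_{i-1})$ itself. Your sketch never explains how to unglue it from $e_{i-1}$. The same issue bites in part~(b), where the last letter is $(e_n,\alpha(w_n))$ and you again only promise $e_n\cdot\alpha(w_n)$.

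The fix---and the paper's actual argument---is simpler than what you attempt: the pigeonhole should be used to show that $w_{i-1}$ \emph{cannot} be long. Among any $|S|+1$ consecutive positions of $w$ there is always a distinguished one (look at the $|S|+1$ prefixes of that infix, find $i<j$ with equal $\alpha$-images, and observe that $(\alpha(a_{i+1}\cdots a_j))^\omega$ right-stabilizes $\alpha(a_1\cdots a_i)$). Hence every $w_i$ has length at most $|S|$. Once you know this, the whole triple is read off literally from the length-$2|S|$ window ending at $x_i$: the last $|S|$ positions contain $u_{x_i}$ (giving $e_i$) and all of $w_{i-1}$ (giving $\alpha(w_{i-1})$), and the preceding $|S|$ positions contain $u_{x_{i-1}}$ (giving $e_{i-1}$). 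Your aside that ``$x_{i-1}$ lies within distance $\le |S|$ of the start of $u_{x_i}$'' is exactly this bound, but you treat it as an alternative rather than as the main point, and you never prove it. Make that bound the centerpiece and the rest of your outline (including part~(b)) becomes a one-line reading of~\eqref{eq:croch}.
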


\begin{proof}
  It is immediate that whether $x$ is distinguished and if so the associated
  idempotent only depends on the infix $u_x$ of length at most $|S|$ ending at
  $x$. Therefore, to prove~\ref{item:5}, it suffices to show that all infixes
  $w_i$ used in~\eqref{eq:croch} are of size at most $|S|$, or in other words, that among $|S| + 1$
  consecutive positions, at least one is distinguished. So let us consider an
  infix $a_1 \cdots a_{|S|+1}$ of $w$ of length $|S| + 1$. It is immediate
  from the pigeonhole principle that there exist $i<j$ such that $\alpha(a_1
  \cdots a_{i}) = \alpha(a_1 \cdots a_j) = \alpha(a_1 \cdots a_{i}) \cdot
  (\alpha(a_{i+1} \cdots a_{j}))^\omega$. Hence, the position corresponding to
  $a_i$ is distinguished. The proof of the second assertion is similar.
\end{proof}

\noindent
{\bf $L$ is $\Fs^+$-separable from $L'$.} We can now construct our
separator. The construction follows from the next proposition.

\begin{proposition} \label{prop:comp}
  Let $\Kbb \subseteq \wfA^+$ that can be defined using an $\Fs$ formula
  $\varphi$. Then there exists an $\Fs^+$ formula $\Psi$ over alphabet
  $A$ such that for every word $w \in A^+$:
  \[
  w \models \Psi \text{ if and only if } \croch{w} \models \varphi.
  \]
\end{proposition}

\begin{proof}
  Proposition~\ref{prop:comp} follows from the following simple
  consequence of Lemma~\ref{lem:canonic}.

  \begin{clm} \label{clm:canonic} For any $\abb \in \wfA$ there exists a formula
    $\gamma_\abb(x)$ of $\Fs^+$ with a free variable~$x$, such that for any $w
    \in A^+$ and any position $x$ of $w$, we have $w \models \gamma_\abb(x)$ iff
    $x$ is distinguished and $\croch{x}$ has label $\abb$ in $\croch{w}$.
  \end{clm}

  This claim holds since by Lemma~\ref{lem:canonic}, formula $\gamma_\abb(x)$ only needs to
  explore the neighborhood of size $2|S|$ of $x$, which is trivially
  possible for all fragments $\Fs^+$ we consider.
  To conclude the proof of Proposition~\ref{prop:comp}, it suffices to
  define $\Psi$ as the formula constructed from $\varphi$ by restricting
  all quantifiers to positions that are distinguished and to replace
  all tests $P_\abb(x)$ by~$\gamma_\abb(x)$.
\end{proof}

We can now finish the proof of Theorem~\ref{thm:sdo}. Assume that
$\Lbb$ is $\Fs$-separable from $\Lbb'$ and let $\varphi$ be an $\Fs$
formula defining a separator. We denote by $\Psi$ the $\Fs^+$ formula
obtained from $\varphi$ as defined in Proposition~\ref{prop:comp}. We
prove that $\Psi$ defines a language separating $L$ from $L'$.

We first prove that $L \subseteq \{w \mid w \models \Psi\}$. Assume that $w
\in L$. Then by Fact~\ref{fct:sametype}, we have $\croch{w} \in
\Lbb$. Hence, $\croch{w} \models \varphi$ and so $w \models \Psi$
by definition of $\Psi$. The proof that $L' \subseteq \{w
\mid w \not\models \Psi\}$ is identical: if  $w \in L'$,
we have $\croch{w} \in \Lbb'$ by Fact~\ref{fct:sametype}. Hence,
$\croch{w} \not\models \varphi$ and $w \not\models \Psi$ by definition
of $\Psi$.\qed

\subsection{\texorpdfstring{From $\Fs^+$-separation to $\Fs$-separation}{From F+-separation to F-separation}}
\label{sec:from-fodp-to-fod}

To complete the proof of Theorem~\ref{thm:sdo}, it remains to prove that if
$L$ is $\Fs^+$-separable from $L'$, then $\Lbb$ is $\Fs$-separable from
$\Lbb'$. The proof is this time specific to each fragment, as it requires, in
one direction of the reduction, a dedicated (but simple) \efgame argument. We
actually prove the contrapositive: if $\Lbb$ is \emph{not} $\Fs$-separable
from $\Lbb'$, then $L$ is \emph{not} $\Fs^+$-separable from~$L'$. We rely on a
construction that is dual to the one used previously: to any well-formed word
$\ubb \in \wfA^+$ and any integer $i>0$, we associate a canonical word
$\ucroch{\ubb}_i \in A^+$.

\highlight{Canonical Word Associated to a Well-formed Word} To any $s \in
S$, we associate an arbitrarily chosen nonempty word $\ucroch{s} \in A^+$
such that $\alpha(\ucroch{s})=s$ (which is possible since $\alpha$ has been
chosen surjective). Let $i>0$. From a well-formed word $\ubb \in \wfA^+$, we
build a word $\ucroch{\ubb}_i \in A^+$ as follows. If $\ubb = s \in S$, then
$\ucroch{\ubb}_i = \ucroch{s}$ for all $i$. Otherwise, we have by definition
\[
\ubb = (s_0,e_1)(e_1,s_1,e_2)\cdots (e_{n-1}s_{n-1}e_n)(e_n,s_n).
\]
For a natural $i > 0$, we set
\[
\ucroch{\ubb}_i = \ucroch{s_0}\ucroch{e_1}^{i}\ucroch{s_1}\ucroch{e_2}^{i} \cdots \ucroch{e_{n-1}}^i\ucroch{s_{n-1}}\ucroch{e_n}^{i}\ucroch{s_n}.
\]
Recall that $\beta$ is the morphism $\beta: \wfA^+ \rightarrow
S$ mapping $\ubb$ to $s_0e_1s_1\cdots s_{n-1}e_ns_n$. Since $e_j \in
E(S)$ for all $j$, it is immediate that $\alpha(\ucroch{\ubb}_i) =
\beta(\ubb)$, hence we get the following fact:
\begin{fact} \label{fct:cons1}
  For all $i > 0$ and all well-formed $\ubb \in \wfA^+$, we have $\ubb \in
  \Lbb$ (resp. $\in \Lbb'$) if and only if $\ucroch{\ubb}_i \in L$ (resp
  $\in L'$).
\end{fact}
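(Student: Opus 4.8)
The plan is to prove Fact~\ref{fct:cons1} directly from the definitions, with the only real content being the verification that $\alpha(\ucroch{\ubb}_i) = \beta(\ubb)$ for every well-formed word $\ubb$ and every $i > 0$. Once this equality is established, the statement follows: since $L$ and $L'$ are recognized by $\alpha$ via some accepting sets $F, F' \subseteq S$, and since $\Lbb$ (resp.\ $\Lbb'$) consists exactly of the well-formed words $\ubb$ with $\beta(\ubb) \in \alpha(L)$ (resp.\ $\beta(\ubb) \in \alpha(L')$), we have $\ubb \in \Lbb$ iff $\beta(\ubb) \in \alpha(L)$ iff $\alpha(\ucroch{\ubb}_i) \in \alpha(L)$ iff $\ucroch{\ubb}_i \in L$, using that $L = \alpha^{-1}(\alpha(L))$ because $L$ is recognized by $\alpha$; the argument for $L'$ is identical.

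First I would dispose of the base case: if $\ubb = s \in S$, then by definition $\ucroch{\ubb}_i = \ucroch{s}$, and $\ucroch{s}$ was chosen so that $\alpha(\ucroch{s}) = s = \beta(s) = \beta(\ubb)$, so the equality holds trivially for all $i$. Then I would treat the general case where $\ubb = (s_0,e_1)(e_1,s_1,e_2)\cdots(e_{n-1},s_{n-1},e_n)(e_n,s_n)$. Applying $\alpha$ to the concatenation $\ucroch{\ubb}_i = \ucroch{s_0}\,\ucroch{e_1}^{\,i}\,\ucroch{s_1}\,\ucroch{e_2}^{\,i}\cdots\ucroch{e_{n-1}}^{\,i}\,\ucroch{s_{n-1}}\,\ucroch{e_n}^{\,i}\,\ucroch{s_n}$ and using that $\alpha$ is a morphism, we obtain
\[
\alpha(\ucroch{\ubb}_i) = s_0\, e_1^{\,i}\, s_1\, e_2^{\,i} \cdots e_{n-1}^{\,i}\, s_{n-1}\, e_n^{\,i}\, s_n,
\]
since $\alpha(\ucroch{s_j}) = s_j$ and $\alpha(\ucroch{e_j}) = e_j$. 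Now the key observation is that each $e_j$ is idempotent by the definition of well-formed word (all the middle coordinates and the boundary coordinates come from $E(S)$), so $e_j^{\,i} = e_j$ for every $i \geqslant 1$. Substituting this gives $\alpha(\ucroch{\ubb}_i) = s_0\, e_1\, s_1\, e_2 \cdots e_{n-1}\, s_{n-1}\, e_n\, s_n$, which is exactly $\beta(\ubb)$ by the definition of $\beta$ recalled just before the fact. This completes the verification.

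This proof is entirely routine; there is no real obstacle. The one point that deserves a moment's care is making sure the exponent $i$ genuinely disappears, which is precisely where idempotency of the $e_j$'s is used — this is in fact the whole reason the alphabet $\wfA$ was designed to carry idempotents in its boundary components. One should also be slightly careful to state explicitly that $L = \alpha^{-1}(\alpha(L))$, which holds because $L$ is recognized by $\alpha$, since otherwise the passage from "$\beta(\ubb) \in \alpha(L)$" to "$\ucroch{\ubb}_i \in L$" would not be justified; but this is standard and was set up in the discussion of recognizability in Section~\ref{sec:semigroups-morphisms}. I would present the argument in two short paragraphs, one establishing $\alpha(\ucroch{\ubb}_i) = \beta(\ubb)$ and one deducing the membership equivalences.
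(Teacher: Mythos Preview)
Your proposal is correct and follows exactly the paper's approach: the paper also simply observes (in the sentence immediately preceding the fact) that since each $e_j \in E(S)$ is idempotent, $\alpha(\ucroch{\ubb}_i) = \beta(\ubb)$, from which the fact is immediate. Your write-up merely spells out this one-line justification in full detail, including the careful remark that $L = \alpha^{-1}(\alpha(L))$ because $L$ is recognized by~$\alpha$.
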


We now proceed with the proof. We use the classical preorders associated to
fragments of first-order logic. The \emph{(quantifier) rank} $\rk\varphi$ of a first-order
formula $\varphi$ is the largest number of quantifiers along a branch in the
parse tree of~$\varphi$. Formally, $\rk\varphi=0$ if $\varphi$ is an atomic
formula, $\rk{\neg\varphi}=\rk\varphi$,
$\rk{\varphi_1\lor\varphi_2}=\max(\rk{\varphi_1},\rk{\varphi_2})$ and $\rk{\exists x\,\varphi}=\rk\varphi+1$.

 Given $u,v \in A^+$, we write $u \ksieqp v$ if any
$\Fs^+$ formula of rank $k$ that is satisfied by $u$ is satisfied by $v$ as
well. Similarly, for $\ubb,\vbb\in\wfA^+$, we write $\ubb \ksieq \vbb$ if any
$\Fs$ formula of rank $k$ that is satisfied by $\ubb$ is satisfied by $\vbb$
as well. One can verify that $\ksieq$ and $\ksieqp$ are preorders, as well as
the following standard fact:
\begin{align}
  \label{eq:1}
  \begin{aligned}
    &L \subset A^+ \text{ is definable by an $\Fs^+$ formula of rank $k$}
    \text{ iff }  L = \{u' \mid \exists u \in L \text{ st. } u \ksieqp
    u'\}\\
    &\Lbb \subset \wfA^+ \text{ is definable by an $\Fs$ formula of rank $k$}
    \text{\quad iff } \Lbb = \{\ubb' \mid \exists \ubb \in \Lbb \text{ st. } \ubb \ksieq
    \ubb'\}.
  \end{aligned}
\end{align}
Note that when $\Fs$ and $\Fs^+$ are closed under complement,
then $\ksieq$ and $\ksieqp$ are actually equivalence relations. We can now
state the main proposition of this direction.

\begin{proposition} \label{prop:corr}
  For any $k \in \nat$, there exist $\ell \in \nat$ and $i \in \nat$ such
  that  for any well-formed words $\ubb,\ubb' \in \wfA^+$ satisfying
  $\ubb \sieq{\ell} \ubb'$, we have
  $\ucroch{\ubb}_{i} \ksieqp \ucroch{\ubb'}_{i}$.
\end{proposition}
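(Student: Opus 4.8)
The plan is to prove Proposition~\ref{prop:corr} by an \efgame argument transported through the encoding $\ubb \mapsto \ucroch{\ubb}_i$. The key point is that $\ucroch{\ubb}_i$ is built from blocks $\ucroch{s_j}$ of bounded size, interspersed with large powers $\ucroch{e_j}^i$ of idempotent-encoding blocks; making $i$ large forces the two words $\ucroch{\ubb}_i$ and $\ucroch{\ubb'}_i$ to look locally identical around any position except near a bounded number of ``interesting'' positions, namely the boundaries between the $\ucroch{s_j}$-blocks and the idempotent pumps. So first I would fix $k$, and decide $i$ and $\ell$ afterwards: set $i$ to be a suitably large function of $k$ and $|S|$ (essentially $i \geq 2^k\cdot 2|S|$ or so, enough to absorb the usual pumping argument for rank-$k$ formulas over the block structure), and set $\ell = k + c$ for a small additive constant $c$ depending only on $|S|$, to be pinned down by the argument. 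The heart is then: given a winning strategy for Duplicator in the $\ell$-round $\Fs$-game on $(\ubb,\ubb')$, construct a winning strategy for Duplicator in the $k$-round $\Fs^+$-game on $(\ucroch{\ubb}_i, \ucroch{\ubb'}_i)$.

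The strategy translation proceeds as follows. A position $x$ in $\ucroch{\ubb}_i$ lies in exactly one of the ``slots'': either inside some $\ucroch{s_j}$-block, or inside some $\ucroch{e_j}^i$-pump. When Spoiler plays $x$ in $\ucroch{\ubb}_i$, Duplicator first identifies the slot and, if the slot corresponds to a letter-position $j$ of $\ubb$, consults the $\Fs$-strategy with Spoiler playing position $j$ in $\ubb$; this yields a position $j'$ in $\ubb'$, and Duplicator answers with the ``same offset within the corresponding slot'' in $\ucroch{\ubb'}_i$. If $x$ lies inside a pump $\ucroch{e_j}^i$, Duplicator plays a copy of the same relative position inside the corresponding pump in $\ucroch{\ubb'}_i$, choosing the copy-index by the standard back-and-forth accounting (close to a previously played pebble if there is one within distance $2^{\text{remaining rounds}}$, far otherwise), exactly as in the classical proof that large powers of a word are $\equiv_k$-equivalent. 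The point is that order, successor, $min$, $max$, and the labels are all respected: successor and labels are respected because corresponding slots carry identical words $\ucroch{s_j}$ or $\ucroch{e_j}$, and the only places where ``$y=x+1$'' could cross a slot boundary are the bounded number of boundary positions, which are handled by the $\Fs$-game on $\ubb,\ubb'$ (this is where the additive constant in $\ell$ is spent — roughly $|S|$ extra rounds to let Duplicator peek at the neighbouring slots and keep boundary adjacencies synchronized). Labels need care since $\ucroch{s_j}$ can be an arbitrary word over $A$, not a single letter, but since the \emph{same} chosen word $\ucroch{s_j}$ (resp.\ $\ucroch{e_j}$) is used on both sides whenever the $\Fs$-strategy matches letter $j$ to letter $j'$ with the same label, label tests agree; for $j \neq j'$ with different $\wfA$-letters but, say, the same underlying $S$-component, one has to check the encoding is consistent, which follows because $\ucroch{\cdot}$ is a fixed function of the $\wfA$-letter.

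The main obstacle I expect is the interaction between the successor predicate and the \emph{boundaries} of blocks: the $\Fs$-game on $\wfA^+$ only sees the linear order (for the fragments with $<$) or equality (for \foeq), so Duplicator's $\Fs$-strategy gives no direct information about which $\wfA$-positions are $+1$-adjacent — but in a well-formed word consecutive letters automatically satisfy the compatibility $f_i = e_{i+1}$, and crucially Duplicator in the \emph{rank-$\ell$} game can, by spending a few extra rounds, probe the immediate $\wfA$-neighbours of any played position and thereby recover the successor relation on $\wfA^+$ up to the needed depth; so the extra $c \approx 2|S|$ rounds in $\ell$ are exactly what pays for reconstructing $+1$, $min$, $max$ on the coarse block level from pure $<$ (or $=$) information. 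For the quantifier-alternation fragments $\sio n, \bso n$ one additionally checks that the translation preserves the alternation pattern — Spoiler's moves in the $\Fs^+$-game are relayed to Spoiler's moves in the $\Fs$-game in the same existential/universal role, and the bounded ``peek'' moves can be folded into the same block without adding alternations, since they are within a single quantifier block of the game tree — this is the only fragment-specific bookkeeping, and it is routine but tedious, precisely as the text warns. Once the strategy is verified to be winning, \eqref{eq:1} and Fact~\ref{fct:cons1} are not needed here; they are what the next section uses to combine Proposition~\ref{prop:corr} with the contrapositive statement, so I would stop the proof at ``$\ucroch{\ubb}_i \ksieqp \ucroch{\ubb'}_i$''.
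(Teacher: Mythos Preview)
Your overall architecture---transport an \efgame strategy through the encoding by playing a shadow $\Fs$-game on $(\ubb,\ubb')$ and copying relative offsets inside matched blocks---is exactly what the paper does. But your reasoning about needing $\ell = k + c$ extra rounds to ``probe neighbouring slots'' misses the central point of the well-formed alphabet, and as stated is both unnecessary and insufficient.

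The observation you are missing is that the $\wfA$-label at an interior position $j$ of a well-formed word is the \emph{triple} $(e_j, s_j, e_{j+1})$: it already records not just the ``center'' $s_j$ but also the idempotents controlling the pumps immediately to its left and right. Hence, when Duplicator's shadow strategy matches $j$ in $\ubb$ to $j'$ in $\ubb'$ with the same label, she automatically knows that the entire factor $\ucroch{e_j}^{\,i}\,\ucroch{s_j}\,\ucroch{e_{j+1}}^{\,i}$ coincides letter-for-letter with $\ucroch{e'_{j'}}^{\,i}\,\ucroch{s'_{j'}}\,\ucroch{e'_{j'+1}}^{\,i}$, so she can copy Spoiler's relative position verbatim, including across a block boundary. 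No probing is needed, and the paper takes $\ell = k$ (with $i = 2k$ for \fod and $i = 2^{k+1}$ for $\sic n$). Endpoints are free for the same reason: letters in $S \times E(S)$ and $E(S) \times S$ occur only at the first and last positions of a well-formed word, so even a one-round match forces the endpoints of $\ubb,\ubb'$ to agree, handling $min$ and $max$.

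Regarding your parameters: even if probing were the right idea, an \emph{additive} constant $c$ could not pay for it. Spoiler may play near $k$ distinct boundaries in the main game, and each would cost at least one extra shadow move, so your strategy as described would require $\ell = \Theta(k)$ extra rounds, not $O(1)$. (Your phrase ``bounded number of boundary positions'' is also off: the number of boundaries in $\ucroch{\ubb}_i$ is the length of $\ubb$, which is unbounded.) Once you drop the probing and use the encoding as intended, each main-game move corresponds to at most one shadow-game move, the alternation pattern is preserved trivially, and what remains is exactly the distance-tracking invariant you sketch for handling $+1$ inside the pumps---which is what the paper's invariants $\Is(\ell)$ make precise.
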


Before proving Proposition~\ref{prop:corr}, we explain how to use it to
show the first direction of Theorem~\ref{thm:sdo}.
We argue by contrapositive: assume that $\Lbb$ is \emph{not} $\Fs$-separable from $\Lbb'$. By
definition this means that no language definable in \Fs separates \Lbb
from $\Lbb'$. In particular, for any $\ell$, the
language $$\{\ubb' \mid \exists \ubb \in \Lbb \text{   st.\;} \ubb
\sieq{\ell} \ubb'\},$$ which is definable in $\Fs$ by~\eqref{eq:1}, cannot be a separator. Note that this language
contains \Lbb. Hence, for all $\ell \in \nat$, there exist $\ubb \in \Lbb$ and
$\ubb' \in \Lbb'$ such that $\ubb \sieq{\ell} \ubb'$. We deduce from
Proposition~\ref{prop:corr} and Fact~\ref{fct:cons1} that for all $k \in
\nat$, there exist $u \in L$ and $u' \in L'$ such that $u \ksieqp u'$. It
follows, again by~\eqref{eq:1}, that $L$ is \emph{not} $\Fs^+$-separable
from $L'$, which concludes the~proof.

We now prove Proposition~\ref{prop:corr} for fragments we are interested
in. As already explained, this proposition is proved using classical, but
specific \efgame arguments for each fragment. While each proof is specific,
the underlying ideas are similar.

Here, we consider two main cases, $\Fs = \fod$ and $\Fs = \sio{n}$ for some
$n$. Note that we will obtain the case $\Fs = \bso{n}$ as a simple consequence
of the case $\Fs=\sio{n}$. Finally, we leave out the case $\Fs = \foeq$, as the
argument is essentially a copy and paste of the argument for \sio{n}.

\subsubsection{\texorpdfstring{\fod and \fodp}{FO2(<) and FO2(<,+1)}}
\label{sec:fod-fodp-1}
Observe that since \fod and \fodp are both closed under complement,
the preorders $\ksieq$ and $\ksieqp$ are actually equivalence
relations. To avoid confusion with other fragments, we denote by
\kfodeq and \kfodeqp, these two equivalences. We prove
the following proposition, which clearly entails Proposition~\ref{prop:corr}.

\begin{proposition} \label{prop:fodcor}
  For any $k \in \nat$, given $\ubb,\ubb' \in \wfA^+$ we have the
  following implication:
  \[
  \ubb \kfodeq \ubb' \Rightarrow \ucroch{\ubb}_{2k} \kfodeqp
  \ucroch{\ubb'}_{2k}.
  \]
\end{proposition}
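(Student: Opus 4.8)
The plan is to prove Proposition~\ref{prop:fodcor} via an \efgame-style argument for \fodp. Recall that $u \kfodeqp v$ holds iff Duplicator wins the $k$-round two-pebble \efgame game on $u$ and $v$ where the moves respect both the order $<$ and the successor relation $+1$ (and $min,max$, which in the two-variable setting are built from these). So the core task is: starting from a winning Duplicator strategy $\sigma$ in the $k$-round $\kfodeq$-game on $\ubb,\ubb' \in \wfA^+$ (order-only), construct a winning Duplicator strategy in the $k$-round $\kfodeqp$-game on $\ucroch{\ubb}_{2k}$ and $\ucroch{\ubb'}_{2k}$. The word $\ucroch{\ubb}_{2k}$ is a concatenation of blocks $\ucroch{s_j}$ and $\ucroch{e_j}^{2k}$; the key structural feature is that each idempotent-block $\ucroch{e_j}^{2k}$ is a long power, and powers of a fixed word are exactly what two-variable logic with successor cannot count beyond a threshold linear in the number of rounds.

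First I would set up the correspondence between positions. A position of $\ucroch{\ubb}_{2k}$ lies in exactly one of the ``letter blocks'' $\ucroch{s_j}$ or in one of the ``pumping blocks'' $\ucroch{e_j}^{2k}$; I associate to it the position $j$ (resp.\ the appropriate position) of $\ubb$ that its block comes from. Duplicator's strategy in the new game will be: when Spoiler plays a pebble on a position of $\ucroch{\ubb}_{2k}$, consult $\sigma$ on the corresponding position of $\ubb$ to get a block of $\ucroch{\ubb'}_{2k}$, and then place the pebble at the ``same relative offset'' within that block — with a correction when the block is a pumping block $\ucroch{e}^{2k}$ versus $\ucroch{e}^{2k}$: since these have equal length, the offset can literally be copied, and within a single copy of $\ucroch{e}$ Duplicator plays identically. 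The subtlety is handling the successor relation across block boundaries and, more importantly, the case where $\sigma$ maps a position inside a pumping block of $\ubb$ (there $e_j$ is a single letter!) — here I need the observation that in $\ubb$ a letter $(e,s,f)$, $(s_0,e_1)$, etc.\ is a single position, so $\sigma$ acting on $\ubb$ only ever sees these composite letters as atoms, and the game on $\ubb$ never ``enters'' the expansion; the expansion happens purely on the $\ucroch{\cdot}_{2k}$ side and must be dealt with by an independent sub-strategy on the words $\ucroch{e}^{2k}$.

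Thus the proof naturally factors into two lemmas. Lemma A (the boundary/gluing lemma): if $x_1 \kfodeq y_1$ and $x_2 \kfodeqp y_2$ as words, then $x_1 x_2 \kfodeqp y_1 y_2$ provided the ``interface letters'' agree — this is a standard composition property of \efgame games for \fodp, proved by the obvious round-by-round strategy that keeps both pebbles on the same side as long as possible and uses the component strategies otherwise; I would cite or reprove the two-variable Ehrenfeucht--Fra\"iss\'e composition lemma. Lemma B (the pumping lemma): for a fixed word $e \in A^+$ and any $m, m' \geq 2k$, we have $e^{m} \kfodeqp e^{m'}$; equivalently Duplicator wins the $k$-round two-pebble game with successor on two long powers of the same word. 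This is where I expect the main obstacle, because successor makes the game sensitive to being ``at the very start'' or ``very end'' of a block and to exact distances up to $k$: the standard argument is that Duplicator maintains the invariant that after round $r$, corresponding pebbles are either at equal distance $\leq 2^{k-r}$ (or a linear bound $2(k-r)$, which is why the exponent $2k$ appears) from the nearest block-boundary on the same side, or both are ``deep in the middle'', and that the cyclic structure of $e^m$ lets her always find a matching response. Getting the bookkeeping exactly right so that the constant $2k$ in the statement suffices — tracking how the safety margin halves or decreases by a constant each round and interacts with the two reusable pebbles — is the delicate part.

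Finally I would assemble the two lemmas: write $\ucroch{\ubb}_{2k} = \ucroch{s_0}\,\ucroch{e_1}^{2k}\,\ucroch{s_1}\cdots\ucroch{s_n}$ and likewise for $\ubb'$, and push the hypothesis $\ubb \kfodeq \ubb'$ through letter by letter. Since $\ubb$ and $\ubb'$ are words over $\wfA$ with $\ubb \kfodeq \ubb'$, a straightforward induction on the block decomposition, alternating applications of Lemma~A at the boundaries and Lemma~B on each pumping block $\ucroch{e_j}^{2k}$ vs.\ $\ucroch{e_j}^{2k}$ (here lengths already match, so Lemma~B is only needed in the generality where $\ubb$ and $\ubb'$ have \emph{different} numbers of blocks but $\kfodeq$-equivalent block sequences — e.g.\ when one has an idempotent block that the $\kfodeq$-game collapses), yields $\ucroch{\ubb}_{2k} \kfodeqp \ucroch{\ubb'}_{2k}$. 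The one genuinely fragment-specific ingredient is Lemma~B's rank bound; everything else is the generic composition machinery, which is why the paper remarks that only this \efgame step is not generic. I would end by noting that Proposition~\ref{prop:fodcor} gives Proposition~\ref{prop:corr} for $\Fs = \fod$ by taking $\ell = k$ and $i = 2k$.
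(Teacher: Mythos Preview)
Your plan has a genuine gap at the assembly step. You propose decomposing $\ucroch{\ubb}_{2k}$ and $\ucroch{\ubb'}_{2k}$ into their block sequences and gluing via a composition lemma (your Lemma~A) together with a pumping lemma (your Lemma~B). But the two block sequences have different lengths in general: $\ubb$ has $m+1$ letters and $\ubb'$ has $m'+1$ letters with typically $m\neq m'$, and the hypothesis $\ubb\kfodeq\ubb'$ is a global game-theoretic statement, not a letter-by-letter correspondence. A composition lemma needs the same number of factors on both sides with pairwise equivalences; here there is no alignment to induct on. Your Lemma~B, as you yourself observe, is vacuous in the construction since every exponent is exactly $2k$; pumping a single power does not change the \emph{number} of idempotent blocks, and the parenthetical about ``an idempotent block that the $\kfodeq$-game collapses'' is not how the game on $\ubb,\ubb'$ behaves --- those letters of $\wfA$ are atomic positions, not blocks to be collapsed. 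Your Lemma~A, mixing $\kfodeq$ on one factor with $\kfodeqp$ on another, is also not a standard composition statement and would itself need justification.

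The missing idea, which is what the paper does, is to replace any static block alignment by a \emph{dynamic} one: Duplicator plays a shadow $\fod$-game $\Ss$ on $\ubb,\ubb'$ in parallel with the main $\fodp$-game $\Gs$ on $\ucroch{\ubb}_{2k},\ucroch{\ubb'}_{2k}$. She maintains an invariant $\Is(\ell)$ stating that the two $\Gs$-pebbles lie at the same relative offset inside matching factors $(b_i)^{2k}a_i(b_{i+1})^{2k}$ and $(b'_{i'})^{2k}a'_{i'}(b'_{i'+1})^{2k}$, where $i,i'$ are the current $\Ss$-pebbles, with at least $\ell$ copies of buffer on each side. If Spoiler stays within the current factor respecting the buffer, Duplicator copies the move locally; if Spoiler jumps out, Duplicator feeds the corresponding move to $\Ss$, obtains a target factor in the other word from her winning strategy there, and plays at the same relative offset in that factor. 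The linearly shrinking buffer (hence the $2k$) guarantees the $+1$ constraint is never violated at a jump. This shadow-game technique is precisely what substitutes for the block-by-block induction you attempted.
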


This is proved using an \efgame argument. We first define the \efgame
game associated to $\fod$ (\emph{i.e.,} corresponding to \kfodeq) and
then explain how to adapt it to $\kfodeqp$.

\highlight{\efgame Game} The board of the $\fod$-game consists of two words
and lasts a predefined number $k$ of rounds. There are two players
called \emph{Spoiler and Duplicator}. At any time during the game there
is one pebble placed on a position of one word and one pebble placed
on a position of the other word, and both positions have the same
label. When the game starts, both pebbles are placed on the first
position of each words. Each round starts with Spoiler choosing one
of the pebbles, and moving it inside its word from its original
position $x$ to a new position $y$. Duplicator must answer by moving
the other pebble in the other word from its original position $x'$ to
a new position $y'$. Moreover, $x'$ and $y'$ must satisfy the same
relations as $x$ and $y$ among '$<$' and the label predicates.

Duplicator wins if she manages to play for all $k$ rounds. Spoiler
wins as soon as Duplicator is unable to play.

The \fodp-game is defined similarly with additional constraints for
Duplicator. When Spoiler makes a move, Duplicator must choose her
answer $y'$ so that $x'$ and $y'$ satisfy the same relations as $x$
and $y$ among $+1$, $<$ and the label predicates.

\begin{lemma}[Folklore] \label{lem:efgame}
  For any integer $k$ and any words $v,v'$, we have the following facts:
  \begin{itemize}
  \item $v \kfodeq v'$ iff Duplicator has a winning strategy in the
    $k$-round \fod-game on $v$ and $v'$.
  \item $v \kfodeqp v'$ iff Duplicator has a winning strategy in the
    $k$-round \fodp-game on $v$ and~$v'$.
  \end{itemize}
\end{lemma}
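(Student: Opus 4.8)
The plan is to prove the Ehrenfeucht--Fra\"iss\'e correspondence at the level of game \emph{configurations} and only then specialize to the prescribed initial placement. To a word $v$ together with a distinguished position $x$ I would attach its rank-$k$ \emph{one-variable type}: the set of all $\fod$ formulas $\delta(z)$ with a single free variable $z$ and quantifier rank at most $k$ with $v\models\delta(x)$. The first preliminary step is the standard finiteness observation: up to logical equivalence there are only finitely many such $\delta(z)$, so for each $(v,x)$ one can form a single \emph{Hintikka formula} $\chi^k_{v,x}(z)$ of rank at most $k$ whose models are exactly the pairs $(v',x')$ realizing the rank-$k$ type of $(v,x)$. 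For $\fodp$ the same holds, the only change being that the binary atoms $z<w$, $z=w$ are augmented by the successor atoms $w=z+1$ and $z=w+1$.

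The core is then the claim, proved by induction on $k$: Duplicator wins the $k$-round game from the configuration $(v,x;v',x')$ iff $(v,x)$ and $(v',x')$ have the same rank-$k$ one-variable type. The base case $k=0$ is immediate, since a rank-$0$ one-variable type is merely the label of the pebbled position, which is exactly the game invariant. For the inductive step, both implications rest on a single \textbf{two-variable normal form}: every $\fod$ formula $\psi(z,w)$ of rank at most $k$ is logically equivalent to a Boolean combination of (i) the binary atoms relating $z$ and $w$ (their order, and in the $+1$ case their adjacency), (ii) one-variable formulas $\alpha(z)$ of rank at most $k$, and (iii) one-variable formulas $\beta(w)$ of rank at most $k$. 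This is the heart of the matter and is where the two-variable restriction is genuinely used: going under a quantifier re-binds one of the two variables, instantly collapsing the sub-formula to a single free variable. This is precisely mirrored on the game side by each word carrying only one pebble, which Spoiler re-places on every move while the move constraint records how the new position relates to the old one.

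Granting the normal form, the induction runs as follows. For the implication from equal types to a winning strategy, given a Spoiler move of the $v$-pebble from $x$ to $y$, I read $y$ as a witness for the outermost existential of the relevant Hintikka formula and use equality of rank-$(k+1)$ types to obtain $y'$ with $(v,y)\equiv_k(v',y')$; clause (i) forces $y'$ to bear to $x'$ the same order and successor relations that $y$ bears to $x$, so the move is legal, and the induction hypothesis supplies a winning continuation. Conversely, any rank-$(k+1)$ one-variable formula satisfied by $(v,x)$ is, after stripping its outer quantifier, an $\exists w\,\psi(z,w)$; decomposing $\psi$ by the normal form, the clause-(ii) parts transfer because $x,x'$ already agree up to rank $k+1$, while Duplicator's answer $y'$ to the witness $y$ matches the binary atoms of clause (i) and, by the induction hypothesis on the residual $k$-round game, the rank-$k$ clause-(iii) behaviour. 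The $\fodp$ case is verbatim the same argument with the enlarged set of binary atoms and the correspondingly enlarged move constraint.

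Finally I specialize to the initial configuration. Since both pebbles start on the first positions and ``being the first position'' is itself $\fod$-definable, the first round models the outermost quantifier of a sentence; as every two-variable sentence of rank at most $k$ is a Boolean combination of formulas $\exists z\,\delta(z)$ with $\delta$ one-variable of rank at most $k-1$, the configuration-level equivalence at the positions reached after round one matches the $(k-1)$-round residual game and yields exactly $v\kfodeq v'$, and symmetrically $v\kfodeqp v'$ for the successor game. The main obstacle is establishing the two-variable normal form cleanly while tracking quantifier rank through it; once that is in place, both the induction and the specialization are routine.
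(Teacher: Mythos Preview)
The paper does not prove this lemma at all: it is labelled ``Folklore'' and used as a black box. There is therefore no paper proof to compare against, and your sketch is supplying an argument the authors deliberately omitted.

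Your approach via one-variable Hintikka types and the two-variable normal form is the standard one and the core induction is sound. Two points deserve tightening. First, in the converse direction of the inductive step you write that the clause-(ii) formulas $\alpha(z)$ transfer ``because $x,x'$ already agree up to rank $k{+}1$''; but rank-$(k{+}1)$ agreement is precisely what you are in the middle of establishing. What you actually need (and have) is rank-$k$ agreement, obtained from the induction hypothesis together with the observation that a $(k{+}1)$-round winning strategy restricts to a $k$-round one. Second, the specialization to the paper's initial configuration is more delicate than you indicate. The game here starts with both pebbles on the first positions, and the invariant already requires equal labels there; so for instance $v=ab$ and $v'=ba$ make Duplicator lose before round~1, yet $v\equiv_1 v'$ in $\fod$ since rank-$1$ sentences only see the set of occurring letters. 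Your argument that round~1 ``models the outermost quantifier'' glosses over the fact that Duplicator's first reply is constrained by the relation to the initial pebble, which is extra information not present in an unconstrained $\exists z$. In the paper's applications this off-by-one is harmless (only large $k$ matter), but if you want the lemma exactly as stated you must either argue that the initial-pebble constraint is absorbed for the relevant $k$, or reformulate the game so that round~1 is a free placement.
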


To prove Proposition~\ref{prop:fodcor}, let $\ubb,\ubb'\in\wfA^+$, and set
$u = \ucroch{\ubb}_{2k}$ and $u'= \ucroch{\ubb'}_{2k}$. We want to show that
$u \kfodeqp u'$.  In view of Lemma~\ref{lem:efgame}, it is enough to prove to
exhibit a winning strategy for Duplicator in the $k$-round \fodp-game played
on $u$ and $u'$. We call \Gs this game. The strategy involves playing a shadow
\fod-game $\Ss$ on \ubb and $\ubb'$. Observe that by hypothesis and by
Lemma~\ref{lem:efgame}, Duplicator has a winning strategy for $k$ rounds in
the game \Ss. We begin by setting up some notation to help us define
Duplicator's strategy in $\Gs$.

\highlight{Notation}  Assuming that $\ubb \kfodeq \ubb'$, we need
to prove that $u \kfodeqp u'$. If $\ubb \in S$ or $\ubb' \in S$, then
$\ubb = \ubb' = s \in S$ (since the only well-formed word that contains
letter $s \in S$ is $s$ itself) and the result is immediate.

Otherwise, by hypothesis, the words $\ubb$ and $\ubb'$ are of the form
\[
\begin{array}{lcl}
  \ubb  & = & (s_0,e_1)(e_1,s_1,e_2)\cdots (e_m,s_m)\\
  \ubb' & = & (s'_0,e'_1)(e'_1,s'_1,e'_2)\cdots(e'_{m'},s'_{m'}).
\end{array}
\]
In particular, observe that since $\ubb \kfodeq \ubb'$ and the labels
of the leftmost and rightmost positions occur only at these positions in $\ubb$ and
$\ubb'$, we have $(s_0,e_1)= (s'_0,e'_1)$ and
$(e_m,s_m) = (e'_{m'},s'_{m'})$. For the sake of simplifying the presentation, we
assume that for all $i \leqslant m$, we have $\ucroch{s_i} = a_i \in A$
and $\ucroch{e_i} = b_i \in A$ (this does not harm the generality
of the proof). Similarly, for all $i \leqslant m'$, we assume that
$\ucroch{s'_i} = a'_i \in A$ and $\ucroch{e'_i} = b'_i \in A$. By
definition, we have
\[
\begin{array}{lcl}
  u     & = & \ucroch{\ubb}_{2k} =
              a_{0}(b_{1})^{2k}a_{1}(b_{2})^{2k} \cdots (b_m)^{2k}a_{m}\\
  u'    & = & \ucroch{\ubb'}_{2k} =  a'_{0}(b'_1)^{2k}a'_1(b'_2)^{2k}
              \cdots (b'_{m'})^{2k}a'_{m'}.
\end{array}
\]
To treat the beginning and the end of the words uniformly as the other
factors, we set $b^{}_0,b'_0,b^{}_{m+1},b'_{m'+1}$ as the empty word.

\highlight{Winning Strategy} Let $\ell$ be the number of remaining rounds at
some point in the game. We define an invariant $\Is(\ell)$ that Duplicator has
to satisfy when playing. Assume that the pebbles in $u,u'$ are at positions
$x,x'$ in \Gs and that the pebbles in $\ubb,\ubb'$ are at positions $i,i'$ in
\Ss. Then, $\Is(\ell)$ holds when so do all following properties:
\begin{enumerate}
\item\label{item:6} Duplicator has a winning strategy for playing $\ell$ rounds in
  \Ss. In particular, this means that $i,i'$ have the same label, and therefore
  that $(b_i,a_i,b_{i+1}) = (b'_{i'},a'_{i'},b'_{i'+1})$.
\item\label{item:7} Pebbles $x$ and $x'$ are inside the identical factors
  $(b_{i})^{2k}a_{i}(b_{i+1})^{2k}$ and
  $(b'_{i'})^{2k}a'_{i'}(b'_{i'+1})^{2k}$, and at the same relative position.
\item\label{item:8} There are at least $\ell$ copies of $b_{i}$ (resp $b'_{i'}$) to
  the left of $x$ (resp. $x'$) and $\ell$ copies of $b_{i+1}$
  (resp. $b'_{i'+1}$) to the right of $x$ (resp. $x'$).
\end{enumerate}

It is clear that $\Is(k)$ holds at the beginning of the game. Moreover, since
Duplicator will follow her strategy in \Ss, Item~\ref{item:6} will be
fulfilled. Assume now that $\Is(\ell+1)$ holds and that there are $(\ell+1)$
rounds left to play. We explain how Duplicator can answer a move by Spoiler
while enforcing $\Is(\ell)$. Assume that Spoiler moves the pebble in $u$ to a
new position~$y$ (the dual case, when Spoiler plays in $u'$, is treated
similarly). There are two distinct cases.

\begin{itemize}
\item If $y$ remains in the factor $(b_{i})^{2k}a_{i}(b_{i+1})^{2k}$
  and satisfies Item~\ref{item:8} of $\Is(\ell)$, then Duplicator simply copies
  Spoiler's move in $(b'_{i'})^{2k}a'_{i'}(b'_{i'+1})^{2k}$. The
  positions $i$ and $i'$ remain unchanged and $\Is(\ell)$ is clearly
  satisfied.

\item Otherwise, observe that by $\Is(\ell+1)$, Spoiler's move $y$ cannot be
  equal to $x\pm 1$. This means that Duplicator has to answer in the same
  direction and on the same label as Spoiler did, but not on positions
  $x'\pm1$. Because Item~\ref{item:8} is not satisfied, position $y$ belongs
  to some $(b_{j})^{2k}a_{j}(b_{j+1})^{2k}$ with $j\neq i$, with at least
  $\ell$ copies of $b_{j}$ to its left and $\ell$ copies of $b_{j+1}$ to its
  right. To compute her answer, Duplicator simulates a move by Spoiler in~\Ss
  by moving the pebble from position $i$ to position $j$. From her winning
  strategy in~\Ss, she obtains a position $j'$ in $\ubb'$ such that
  $(b_j,a_j,b_{j+1}) = (b'_{j'},a'_{j'},b'_{j'+1})$. She picks as position
  $y'$ the same relative position in $(b'_{j'})^{2k}a'_{j'}(b'_{j'+1})^{2k}$
  as $y$ in $(b_{j})^{2k}a_{j}(b_{j+1})^{2k}$. Observe that since $j\neq i$,
  we have $y\neq x\pm 1$. Hence, this is a legal move for Duplicator. The new positions $y,y',j,j'$ satisfy $\Is(\ell)$,
  which terminates the proof.
\end{itemize}

\subsubsection{\texorpdfstring{\sio{n} and \sip{n}}{Σ-n(<) and Σ-n(<,+1,min,max)}}
\label{sec:sion-sipn}
We fix some $n \in \nat$. We keep using the symbols $\ksieq$ and
$\ksieqp$ to denote the preorders associated to $\sio{n}$ and
$\sip{n}$. Furthermore, we denote by $\kbceq$ and $\kbceqp$ the
equivalence relations associated to $\bso{n}$ and $\bsp{n}$. We prove
the following proposition, which again yields Proposition~\ref{prop:corr} with
$k'=k$ and $i=2^{k+1}$.

\begin{proposition} \label{prop:sicor}
  For any $k \in \nat$, given $\ubb,\ubb' \in \wfA^+$ we have the
  following implications:
  \[
  \begin{array}{lcl}
    \ubb \ksieq \ubb' & \Rightarrow & \ucroch{\ubb}_{2^{k+1}} \ksieqp
                                      \ucroch{\ubb'}_{2^{k+1}} \\[1.5ex]
    \ubb \kbceq \ubb' & \Rightarrow & \ucroch{\ubb}_{2^{k+1}} \kbceqp
                                      \ucroch{\ubb'}_{2^{k+1}}.
  \end{array}
  \]
\end{proposition}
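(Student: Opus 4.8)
The plan is to prove Proposition~\ref{prop:sicor} by an Ehrenfeucht-Fra\"iss\'e argument for the \sio{n}-game, mirroring the structure of the \fod\ case but with the extra machinery needed to handle quantifier alternation. I would first recall the standard $n$-round pebble game characterizing \ksieq: Spoiler places $k$ pebbles across $n$ rounds, in round $j$ placing a block of pebbles on one of the two words, and the alternation pattern of which word Spoiler is allowed to touch is what distinguishes $\Sigma_n$ from $\Pi_n$ from $\mathcal{B}\Sigma_n$. The \sip{n}-game is the same, except Duplicator must additionally preserve the $+1$ relation (and $min$, $max$) among all pebbled positions. Lemma~\ref{lem:efgame}'s analogue for these fragments is folklore and I would cite it. Then, given $\ubb \ksieq \ubb'$, set $u = \ucroch{\ubb}_{2^{k+1}}$, $u' = \ucroch{\ubb'}_{2^{k+1}}$, and build a Duplicator strategy in the $k$-round \sip{n}-game on $u, u'$ by playing a shadow \sio{n}-game $\Ss$ on $\ubb, \ubb'$, exactly as in the proof of Proposition~\ref{prop:fodcor}.

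The key idea for the exponential bound $i = 2^{k+1}$ is that each \ucroch{e}-block must be long enough that, after $k$ rounds of ``zooming in,'' Duplicator still has room to respond to successor constraints. I would maintain an invariant $\Is(\ell)$ parameterized by the number $\ell$ of remaining rounds, asserting: (i) the shadow-game positions in $\ubb, \ubb'$ are such that Duplicator still has a winning $\ell$-round \sio{n}-strategy from them, hence the corresponding letters of $\wfA$ coincide; (ii) each pebble $x$ in $u$ lies in a factor $(b_i)^{2^{k+1}} a_i (b_{i+1})^{2^{k+1}}$ with its counterpart $x'$ at the same relative position in the matching factor of $u'$; (iii) crucially, for each pebble there remain at least $2^{\ell}$ copies of the relevant $b$-letter on each side within its block. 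When Spoiler drops a new block of pebbles, Duplicator distinguishes two cases as before: either a new pebble lands close (within a $+1$-chain) to an existing one, in which case Duplicator copies the relative offset exactly (which is legal because the blocks are identical, as letters of $\wfA$ match by the shadow game), or it lands ``far'' in a new block, in which case Duplicator consults her shadow strategy to locate the matching block in $\ubb'$ and then places her answer at the same relative position. The halving $2^{\ell} \to 2^{\ell-1}$ accounts for the possibility that a pebble lands in the middle of a block, splitting the available ``room'' in half.

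The main obstacle — and the place requiring the most care — is handling \emph{multiple pebbles placed in a single round} that land inside the \emph{same} block, possibly forming long $+1$-chains or straddling the $a_i$ letter in the middle of a block. In the \fod\ case Spoiler moves one pebble at a time, so the successor constraint only ever couples the moved pebble to one other position; with $\Sigma_n$-style block moves, Duplicator must simultaneously respect all pairwise $+1$-relations among the newly placed pebbles and the old ones. The resolution is that within a uniform block $(b)^{2^{k+1}}$, any configuration of at most $k$ pebbles together with their successor/order relations is determined by their relative offsets and gaps, and since the corresponding block in $u'$ is literally the same word $(b')^{2^{k+1}} = (b)^{2^{k+1}}$, Duplicator can reproduce the offsets verbatim; the only subtlety is ensuring the block is long enough that no two distinct offsets are forced to collide, which is exactly what the $2^{k+1}$ length and the $2^{\ell}$ room-invariant guarantee. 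One also must check the boundary blocks $b_0, b_{m+1}$ (empty) and the $min$/$max$ predicates, but as in the \fod\ proof these are handled by the convention setting the end-blocks empty and by the equality $(s_0,e_1) = (s'_0,e'_1)$, $(e_m,s_m) = (e'_{m'},s'_{m'})$ forced by the shadow game.

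Finally, the $\mathcal{B}\Sigma_n$ case follows immediately: since $\kbceq$ is the equivalence generated by $\ksieq$ in both directions (i.e., $\ubb \kbceq \ubb'$ iff $\ubb \ksieq \ubb'$ and $\ubb' \ksieq \ubb$), and likewise for $\kbceqp$ in terms of $\ksieqp$, applying the \sio{n} implication to both $(\ubb,\ubb')$ and $(\ubb',\ubb)$ yields $\ucroch{\ubb}_{2^{k+1}} \kbceqp \ucroch{\ubb'}_{2^{k+1}}$. This gives Proposition~\ref{prop:corr} with $i = 2^{k+1}$ and $\ell = k$, completing the \sio{n} and \bso{n} cases and, together with Section~\ref{sec:fod-fodp-1} and the remark that \foeq\ is analogous, the proof of Proposition~\ref{prop:corr} and hence of Theorem~\ref{thm:sdo}.
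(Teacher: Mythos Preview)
Your overall plan matches the paper's: reduce $\bso{n}$ to $\sio{n}$ via $\kbceq\ =\ {\ksieq} \cap ({\ksieq})^{-1}$, then prove the $\sio{n}$ implication by an \efgame argument with a shadow $\sio{n}$-game on $\ubb,\ubb'$ and a two-case analysis of Spoiler's moves. Two differences are worth flagging.

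First, the paper uses a $k$-round game in which Spoiler places \emph{one} pebble per round, together with an alternation counter bounded by $n-1$ (Spoiler may switch the ``active word'' at most $n-1$ times). This entirely sidesteps the ``multiple pebbles placed in a single round'' obstacle you single out as the main difficulty: the invariant is updated one pebble at a time, so there is no need to treat simultaneous $+1$-chains.

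Second---and this is where your invariant as written does not quite close---``same relative position in the matching factor'' is too coarse once several pebbles accumulate in a single $b$-segment. The factors for consecutive shadow positions $i$ and $i{+}1$ overlap in the segment $(b_{i+1})^{2^{k+1}}$; two pebbles sitting in that segment of $u$ may be assigned to $i$ and to $i{+}1$ respectively, and if the corresponding shadow positions $i',j'$ in $\ubb'$ are not consecutive, then ``same relative position'' in their respective factors no longer guarantees that the $+1$ relation (or even the gap) between the two pebbles is reproduced in $u'$. The paper repairs this with a more structured invariant: each shadow position $i$ carries an ordered list of \emph{marked positions} in $u$ (the $a_i$ position plus all pebbles assigned to $i$); marked positions for $i$ all precede those for $i{+}1$ with at least $2^{\ell+1}$ copies of $b_{i+1}$ strictly in between; and for consecutive marked positions attached to the \emph{same} $i$, the gaps in $u$ and $u'$ are either equal or both $\geqslant 2^{\ell}$. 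This last clause is what makes your halving $2^{\ell}\to 2^{\ell-1}$ rigorous, and it is the piece your item~(iii) gestures at but does not state.
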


Observe first that the second implication is an immediate consequence
of the first one. Indeed, since $\bsc{n}$ formulas are boolean
combinations of \sic{n} formulas, we have
\[
\begin{array}{rcl}
  v \ksieq v' \text{ and } v' \ksieq v & \text {if and only if} & v \kbceq
                                                                  v' \\[1.5ex]
  v \ksieqp v' \text{ and } v' \ksieqp v & \text{if and only if} & v \kbceqp
                                                                   v'.
\end{array}
\]
Therefore, we concentrate on the first 
implication. As for \fod, this an \efgame argument. We first define
the \efgame game associated to $\sio{n}$ (\emph{i.e.,} corresponding
to \ksieq) and then explain how to adapt it to $\ksieqp$.

\highlight{\efgame Game} The board of the $\sio{n}$-game consists of two words
$v,v'$ and there are two players, again called \emph{Spoiler and
  Duplicator}. Moreover, initially, there exists a distinguished word among
$v,v'$ that we call the \emph{active word} (this word may change as the game
progresses). The game is set to last a predefined number $k$ of rounds. When
the game starts, both players have $k$ pebbles. Contrary to the \fod-game,
once a pebble is dropped, it cannot be moved again during the game. Finally,
there is a parameter that gets updated during the game, a counter $c$ called
the \emph{alternation counter}. Initially, $c$ is set to $0$. It may be
incremented, but it has to remain bounded by $n-1$.

At the start of each round $\ell$, Spoiler chooses a word, either $v$ or
$v'$. Spoiler can always choose the active word, in which case both $c$
and the active word remain unchanged. However, Spoiler can only choose
the word that is not active when $c <n-1$, in which case the active
word is switched and $c$ is incremented by $1$ (in particular,
this may happen at most $n-1$ times). If Spoiler chooses $v$
(resp. $v'$), he puts a pebble on a position $x_\ell$ in $v$
(resp. $x'_\ell$ in~$v'$).

Duplicator must answer by putting a pebble at a position $x'_\ell$ in
$v'$ (resp. $x_\ell$ in $v$). Moreover, Duplicator must ensure that all
pebbles that have been placed up to this point verify the following
condition: for all  $\ell_1,\ell_2 \leqslant \ell$, the labels at positions
$x_{\ell_1},x'_{\ell_1}$ are the same, and $x_{\ell_1} < x_{\ell_2}$
if and only if $x'_{\ell_1} < x'_{\ell_2}$.

Duplicator wins if she manages to play for all $k$ rounds, and Spoiler
wins as soon as Duplicator is unable to play.

The \sip{n}-game is defined similarly with the following additional constraint
for Duplicator: at any time, 
for all $\ell_1,\ell_2\leqslant\ell$, we have $x_{\ell_1} = x_{\ell_2} + 1$ if and only if
$x'_{\ell_1} = x'_{\ell_2} + 1$, $min(x_{\ell_1})$ if and only if
$min(x'_{\ell_1})$ and $max(x_{\ell_1})$ if and only if $max(x'_{\ell_1})$.

\begin{lemma}[Folklore] \label{lem:efgame2}
  For all $k \in \nat$ and $v,v'$, we have the following facts:
  \begin{itemize}
  \item $v \ksieq v'$ iff Duplicator has a winning strategy in the
    $k$-round \sio{n}-game on $v$ and $v'$ with $v$ as initial active
    word.
  \item $v \ksieqp v'$ iff Duplicator has a winning strategy in the
    $k$-round \sip{n}-game on $v$ and~$v'$ with $v$ as initial active
    word.
  \end{itemize}
\end{lemma}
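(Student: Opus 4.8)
Here is how I would attack Lemma~\ref{lem:efgame2}.

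The plan is the classical \efgame back--and--forth argument, by induction on the number of rounds, carrying the alternation counter as a parameter. I would prove both items at once: the \sip{n}-game differs from the \sio{n}-game only in the atomic relations that Duplicator must preserve ($+1$, $min$, $max$) and that formulas are allowed to mention, which affects the base case and the local matching step but nothing structural. So I fix one of the two signatures $\tau$ (either $\{<,(P_a)_a\}$ or $\{<,+1,min,max,(P_a)_a\}$) and run the argument once, pointing out the two spots where $\tau$ enters.

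\textbf{Generalization to partial plays.} For words $v,v'$ each carrying $r$ already--placed, pairwise matched pebbles, a bound $j$ on the rounds still to play, a bound $c\leqslant n-1$ on the alternations still available, and a designated active word $w\in\{v,v'\}$, I call the configuration a \emph{$(j,c)$-win from $w$} if Duplicator has a winning strategy in the resulting $j$-round game from that position. On the logical side, using the inductive (non--prenex) definition of the $\Sigma_m/\Pi_m$ hierarchy over $\tau$, let $\Sigma^{j,c}$ be the set of $\Sigma_{c+1}$-formulas of quantifier rank $\leqslant j$ with $r$ free variables. The lemma is the case $r=0$, $j=k$, $c=n-1$ of the claim: the configuration is a $(j,c)$-win from $w$ iff every $\Sigma^{j,c}$-formula true of $w$ (with its pebbles substituted for the free variables) is true of the other word. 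I prove this by induction on $j$. For $j=0$, $\Sigma^{0,c}$ is the set of boolean combinations of atomic formulas on the pebbles, and the configuration is a $0$-round win exactly when the matched pebbles carry equal labels and satisfy the same $\tau$-relations, which is exactly when the two $\Sigma^{0,c}$-theories agree. For the inductive step, I unfold the first round. Spoiler staying in the active word $w$ corresponds to the outermost existential block: by the induction hypothesis, the condition ``for every new position $q$ in $w$ there is a position $q'$ in $w'$, same label, such that the extended configuration is a $(j-1,c)$-win from $w$'' is equivalent to transfer (from $w$ to $w'$) of all formulas $\exists x\,\theta$ with $\theta\in\Sigma^{j-1,c}$, i.e.\ of all the $\exists$-led members of $\Sigma^{j,c}$. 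Spoiler switching once to $w'$ costs an alternation and makes the remaining game a $(j-1,c-1)$-win from $w'$; by the induction hypothesis together with the standard facts that each such formula class is finite up to logical equivalence and closed under conjunction, the condition ``for every $q'$ in $w'$ there is $q$ in $w$, same label, such that the extended configuration is a $(j-1,c-1)$-win from $w'$'' is equivalent to transfer of all the $\Pi_{c+1}$-type conjuncts of $\Sigma^{j,c}$-formulas. Since $\Sigma^{j,c}$ is closed under positive boolean combinations, transferring the whole $\Sigma^{j,c}$-theory is the same as transferring each member, and assembling the two cases yields the claim for $j$.

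\textbf{The delicate point.} The only genuinely subtle part is the alternation bookkeeping: one must verify that the game's counter of active--word switches corresponds exactly to the number of $\exists/\forall$ alternations, and in particular that $\Sigma_n$ being closed under $\exists$ and positive boolean combinations but not under negation is mirrored by the asymmetry of the game (Spoiler may stay in the active word for free, but switching is rationed). The clean way around it is to work throughout with the inductive definition of $\Sigma_m$ and $\Pi_m$, for which the normal form used above --- ``positive boolean combination of existentially quantified $\Sigma$-type formulas and $\Pi$-type conjuncts'' --- is immediate, and then to note, by a routine structural induction (itself folklore), that this coincides with the prenex \sio{n}, resp.\ \sip{n}, of quantifier rank $k$ used elsewhere in the paper. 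For the \sip{n} bullet the only change is that $+1$, $min$ and $max$ are among the atomic predicates, so they appear in the base case and in the ``same label'' matching conditions above (which become ``same atomic $\tau$-type''), but they create no new alternations; everything else goes through verbatim. In all other respects this is the textbook \efgame argument, which is why the lemma is quoted as folklore.
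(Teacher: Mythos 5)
The paper offers no proof of this lemma---it is invoked as folklore---so there is nothing to compare against except the standard argument, and that is exactly what you give: a back-and-forth induction on the number of remaining rounds, with the alternation budget carried as a parameter and the active word marking the direction of transfer. Your outline is correct, including the two points that actually need care (finiteness of each class up to equivalence and closure under conjunction for extracting Duplicator's answers, and the asymmetry whereby staying in the active word is free while switching consumes an alternation and reverses the transfer direction via the contrapositive of the induction hypothesis). Two small remarks. First, there is an off-by-one in your labelling of the switch case: inside a $\Sigma_{c+1}$ formula the conjuncts handled by a switch are the $\Pi_c$ ones (a switch reduces the budget to $c-1$, i.e.\ to $\Sigma_c$ transfer from the new active word, whose contrapositive is $\Pi_c$ transfer from the old one); writing ``$\Pi_{c+1}$-type conjuncts'' as you do would put you one level too high, though the surrounding computation makes the intended meaning clear. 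Second, your reconciliation of the inductive definition of $\Sigma_m/\Pi_m$ with the prenex-form definition and the paper's notion of rank (maximum number of quantifiers along a branch) is indeed routine but is where the remaining bookkeeping lives; since the lemma is only used through the existence, for each $k$, of a sound and complete game bound, this slack is harmless.
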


We now prove Proposition~\ref{prop:sicor}. Let $\ubb,\ubb'\in\wfA^+$. We have
to prove that $\ucroch{\ubb}_{2^{k+1}} \ksieqp \ucroch{\ubb'}_{2^{k+1}}$. In
view of Lemma~\ref{lem:efgame2}, this can be done by giving a winning strategy
for Duplicator in the corresponding $k$-round \sip{n}-game. We call \Gs this
game. Duplicator's strategy involves playing a \sio{n}-game \Ss, called the
shadow game, on \ubb and $\ubb'$. By hypothesis and by Lemma~\ref{lem:efgame2},
she has a winning strategy in $k$ rounds in the shadow game \Ss. We begin by
setting up some notation that will help us define Duplicator's strategy.

\highlight{Notation} Set $u = \ucroch{\ubb}_{2^{k+1}}$ and $u'=
\ucroch{\ubb'}_{2^{k+1}}$. Assuming that $\ubb \ksieq \ubb'$, we need
to prove that $u \ksieqp u'$. If $\ubb \in S$ or $\ubb' \in S$, then
$\ubb = \ubb' = s \in S$ (again, the only well-formed word that contains the
letter $s \in S$ is $s$). Therefore, $u=u'$ and the result is immediate.

\medskip
Otherwise, by hypothesis, the words $\ubb$ and $\ubb'$ are of the form
\[
\begin{array}{lcl}
  \ubb  & = & (s_0,e_1)(e_1,s_1,e_2)\cdots (e_m,s_m)\\
  \ubb' & = & (s'_0,e'_1)(e'_1,s'_1,e'_2)\cdots(e'_{m'},s'_{m'})
\end{array}
\]
In particular, observe that since $\ubb \ksieq \ubb'$ and the labels
of the leftmost and rightmost positions occur only at these positions in $\ubb$ and
$\ubb'$, we have  $(s_0,e_1)= (s'_0,e'_1)$ and
$(e_m,s_m) = (e'_{m'},s'_{m'})$. For the sake of simplifying the presentation, we
assume that for all $i \leqslant m$, we have $\ucroch{s_i} = a_i \in A$
and $\ucroch{e_i} = b_i \in A$ (this does not harm the generality
of the proof). Similarly, for all $i \leqslant m'$, we assume that
$\ucroch{s'_i} = a'_i \in A$ and $\ucroch{e'_i} = b'_i \in A$. By
definition, we have
\[
\begin{array}{lcl}
  u     & = & \ucroch{\ubb}_{2^{k+1}} =
              a_{0}(b_{1})^{2^{k+1}}a_{1}(b_{2})^{2^{k+1}} \cdots (b_m)^{2^{k+1}}a_{m}\\
  u'    & = & \ucroch{\ubb'}_{2^{k+1}} =  a'_{0}(b'_1)^{2^{k+1}}a'_1(b'_2)^{2^{k+1}}
              \cdots (b'_{m'})^{2^{k+1}}a'_{m'}.
\end{array}
\]
Again, to treat the beginning and the end of the words uniformly as the other
factors, we set $b^{}_0,b'_0,b^{}_{m+1},b'_{m'+1}$ as the empty word.

\highlight{Winning Strategy} Let $\ell$ be the number of remaining rounds at
some point in the game. We define an invariant $\Is(\ell)$ that Duplicator has
to satisfy when playing.

As she plays, Duplicator associates to each position $i \in \ubb$, (resp.
$i' \in \ubb'$) a set of positions in $u$ (resp. $u'$) called the set of
\emph{marked positions} for $i$ (resp.~for $i'$). All marked positions for $i$
(resp.~for $i'$) must belong to the $b_i$, $a_i$ or $b_{i+1}$
(resp.~$b'_{i'}$, $a'_{i'}$ or $b_{i'+1}$) positions in $u$ (resp.~in~$u'$). Initially, for all $i$ (resp.\ $i'$), only $a_i$ (resp. $a'_{i'}$) is
marked for $i$ (resp.~for $i'$). Duplicator may define more positions as
marked as the game progresses. All these new marked positions will be
positions holding pebbles in \Gs.

Assume that there are $\ell$ rounds left to play and that pebbles have
already been placed on $u,u'$ in the main game \Gs and on $\ubb,\ubb'$
in \Ss in a way that satisfies the conditions of both \efgame games. We
denote by $c_\Gs$ the alternation counter of the main game \Gs and by
$c_{\Ss}$ that of the shadow game $\Ss$. For all $i \in \ubb$
(resp. $i' \in \ubb'$) we denote by $x_1(i) < \cdots < x_{m_i}(i)$
(resp. $x'_1(i') < \cdots < x'_{m_{i'}}(i')$) the marked positions for $i$
(resp. $i'$). Then $\Is(\ell)$ holds if the following properties hold:
\begin{enumerate}
\item\label{item:9} Duplicator has a winning strategy for playing at least $\ell$
  more rounds in \Ss. Furthermore, either $c_{\Ss} > c_\Gs$, or $c_\Ss =
  c_{\Gs}$ and the active words in \Ss and \Gs are either $\ubb$ and
  $u$, or $\ubb'$ and~$u'$.
\item\label{item:10} Any position $x \in u$ (resp.~$x' \in u'$) that holds a pebble in \Gs
  is marked for some $i \in \ubb$ (resp. $i' \in \ubb'$) holding a
  pebble in \Ss. Conversely, any position that is marked for $i \in
  \ubb$, (resp. $i' \in \ubb'$) is either $a_i$ (resp. $a'_{i'}$) or a
  position holding a pebble in \Gs.

\item\label{item:11} For all $i \in \ubb$ (resp. $i' \in \ubb'$), we have $x_{m_i}(i) <
  x_1(i+1)$ (resp. $x'_{m_{i'}}(i') < x'_1(i'+1)$). Moreover, there
  are at least $2^{\ell+1}$ copies of $b_{i+1}$ (resp. $b'_{i'+1}$)
  that are strictly between these two positions.

\item\label{item:12} Let $i,i'$ be positions of $\ubb,\ubb'$ on which there are corresponding
  pebbles in \Ss (meaning that one position corresponds to a move of Spoiler
  and the other one is Duplicator's answer). Observe that since $i,i'$ have
  the same label, we have $a_i = a'_{i'}$, $b_i = b'_{i'}$ and $b_{i+1} = b'_{i'+1}$.
  In that case, the number of marked positions for $i$ is the same as the
  number of marked positions for $i'$, that is $m_i = m_{i'}$. Furthermore,
  for all $j \leqslant m_i$:
  \begin{itemize}
  \item $x_j(i)$ is the $a_i = a'_{i'}$ position of $u$ iff $x'_j(i')$ is
    the $a_i = a'_{i'}$ position of $u'$, and
  \item  $x_j(i)$ holds a pebble of
    \Gs iff $x'_j(i')$ holds the corresponding pebble.
  \end{itemize}

  Finally, given $j < m_i$, let $d$ and $d'$ be the number of positions that
  are strictly between $x_j(i)$ and $x_{j+1}(i)$ (resp.~between $x'_j(i')$ and
  $x'_{j+1}(i')$). Note that by the condition above these positions are all
  labeled by $b^{}_i = b'_{i'}$, or all labeled by $b^{}_{i+1} =
  b'_{i'+1}$.
  We require that either $d = d'$, or $d \geqslant 2^\ell$ and
  $d' \geqslant 2^\ell$.
\end{enumerate}

\begin{figure}[ht]
  \centering
  \begin{tikzpicture}[xscale=.8]
    \node (u) at (-2,0) {$u'$};
    \node (cu) at (-2,-1.5) {$\ubb'$};
    \node (uend) at (15,0) {};
    \node (cuend) at (15,-1.5) {};
    \node (ca1) at (0,-1.5) {}; 
    \node (ca7) at (14,-1.5) {}; 

    \tikzstyle{every node}=[draw,thin,circle,inner sep=1pt,radius=3pt, minimum size=.2cm,fill=black!50];
    \node (a1) at (0,0) {};
    \node[fill=white] (a2) at (6,0) {};
    \node (a3) at (7.2,0) {};    
    \node (a4) at (10,0) {};
    
    \node (b1) at (6,-1.5) {}; 

    \tikzstyle{every node}=[above=-3pt];
    \node at (a1.north) {$\begin{array}{c}x_1(i')\\b'_{i'}(=b_i)\end{array}$}; 
    \node at (a2.north) {$\begin{array}{c}x_2(i')\\a_i\end{array}$}; 
    \node at (a3.north) {$\begin{array}{c}x_3(i')\\b_{i+1}\end{array}$}; 
    \node at (a4.north) {$\begin{array}{c}x_4(i')\\b_{i+1}\end{array}$}; 
    
    \tikzstyle{every node}=[below=-2pt];
    \node at ($(b1.south)$) {$\begin{array}{c}a_i\\i'\end{array}$};

    \draw[dashed] (b1) to (a1);
    \draw[dashed] (b1) to (a2);
    \draw[dashed] (b1) to (a3);
    \draw[dashed] (b1) to (a4);

    \draw (a1) to (a2) to (a3) to (a4) to ($(uend)-(.5,0)$) ;
    \draw [dotted] (u) to (a1);
    \draw [dotted] (a4) to (uend);

    \draw ($(ca1)-(1,0)$) to (b1) to ($(ca7)+(.5,0)$);
    \draw ($(ca1)-(1,0)+(a2)-(b1)$) to (a1);
    \draw [dotted] (cu) to (ca1);
    \draw [dotted] (ca7) to (cuend);
    \tikzstyle{every node}=[];

    \begin{scope}[yshift=3.5cm]
    \node (u) at (-2,0) {$u$};
    \node (cu) at (-2,1.5) {$\ubb$};
    \node (uend) at (15,0) {};
    \node (cuend) at (15,1.5) {};
    \node (ca1) at (0,1.5) {}; 
    \node (ca7) at (14,1.5) {}; 

    \tikzstyle{every node}=[draw,thin,circle,inner sep=1pt,radius=3pt, minimum size=2mm,fill=black!50];
    \node (a1) at (0,0) {}; 
    \node[fill=white] (a2) at (3,0) {}; 
    \node (a3) at (4.2,0) {};    
    \node (a4) at (7,0) {};
    
    \node (a5) at (10.2,0) {}; 
    \node (a6) at (12,0) {};        
    \node (a7) at (14,0) {};

    \node (b1) at (3,1.5) {}; 
    \node (b2) at (12,1.5)  {}; 

    \tikzstyle{every node}=[below=-2pt];
    \node at (a1.south) {$\begin{array}{c}b_i\\x_1(i)\end{array}$};
    \node at (a2.south) {$\begin{array}{c}a_i\\x_2(i)\end{array}$};
    \node at (a3.south) {$\begin{array}{c}b_{i+1}\\x_3(i)\end{array}$};
    \node at (a4.south) {$\begin{array}{c}b_{i+1}\\x_4(i)\end{array}$};
    
    \node at (a5.south) {$\begin{array}{c}b_{i+1}\\x_1(i+1)\end{array}$};
    \node at (a6.south) {$\begin{array}{c}a_{i+1}\\x_2(i+1)\end{array}$};
    \node at (a7.south) {$\begin{array}{c}b_{i+2}\\x_3(i+1)\end{array}$};
    
    \tikzstyle{every node}=[above=-3pt];
    \node at (b1.north) {$\begin{array}{c}i\\a_i\end{array}$};
    \node at (b2.north) {$\begin{array}{c}{i+1}\\a_{i+1}\end{array}$};

    \draw[dashed] (b1) to (a1);
    \draw[dashed] (b1) to (a2);
    \draw[dashed] (b1) to (a3);
    \draw[dashed] (b1) to (a4);

    \draw[dashed] (b2) to (a5);
    \draw[dashed] (b2) to (a6);
    \draw[dashed] (b2) to (a7);

    \draw (a1) to (a2) to (a3) to (a4) to (a5) to (a6) to (a7) to ($(a7)+(.5,0)$);
    \draw [dotted] (u) to (a1);
    \draw [dotted] (a7) to (uend);

    \draw ($(ca1)-(1,0)$) to (b1) to (b2) to ($(ca7)+(.5,0)$);
    \draw ($(ca1)-(1,0)+(a2)-(b1)$) to (a1);
    \draw [dotted] (cu) to (ca1);
    \draw [dotted] (ca7) to (cuend);
    \draw[decorate,decoration=brace,very thick] ($(a5)-(0.05,1.2)$) -- ($(a4)-(0,1.2)$) ;
    \draw[decorate,decoration=brace,very thick] ($(a2)-(0,1.2)$) -- ($(a1)-(-0.05,1.2)$) ;
    \node at ($.5*(a4)+.5*(a5)+(0,-1.9)$) {$\geqslant 2^{\ell+1}$};
    \node at ($.5*(a1)+.5*(a2)+(0,-1.9)$) {$\geqslant 2^{\ell}$};
    \tikzstyle{every node}=[];
  \end{scope}      
 \end{tikzpicture}
  \caption{Marked positions}
  \label{fig:marked}
\end{figure}

\figurename~\ref{fig:marked} shows positions $i$ and $i+1$ in $\ubb$ and~$i'$
in $\ubb'$ corresponding to $i$ in the \Ss-game, as well as marked positions for $i$ and $i+1$ in~$u$ (resp.~for $i'$
 in~$u'$). Greyed positions are the ones holding a pebble. Note that
by Item~\ref{item:10}, all marked positions in $u$ (resp.~$u'$) except
possibly some $a_i$ (resp.~$a'_i$) positions have to hold a
pebble. Item~\ref{item:12} means that the picture for $u'$ and $\ubb'$ look
the same: for instance, since there are $m_i=4$ marked positions for $i$
in~$u$, there are also 4 marked positions for $i'$ in $u'$, where $i$ and $i'$
are corresponding moves in \Ss. Furthermore, all are marked except the
$a=a^{}_i=a'_{i'}$ position in both $u$ and $u'$, and this position has the
same index in both lists of marked positions for $i$ (resp.~$i'$), namely
index~2. Finally, distances between ``corresponding'' consecutive marked
positions in $u$ and $u'$ are either equal, or both are at least~$2^\ell$. In
\figurename~\ref{fig:marked}, $x_2(i)-x_1(i)\neq x'_2(i')-x'_1(i')$, therefore
these quantities have to be at least $2^\ell$.

\medskip
It is clear that $\Is(k)$ holds before the initial round. Assume now
that there are $(\ell+1)$ rounds left to play and that $\Is(\ell+1)$
holds. We explain how Duplicator can play in order to enforce
$\Is(\ell)$. Assume that Spoiler puts a pebble at a position $x \in
u$ in \Gs (the case when Spoiler plays in $u'$ is symmetric).

Duplicator first defines a position $i'$ in $\ubb'$ as follows. If there is
already a pebble on $i$ in \Ss, then we set $i'$ as the position holding the
matching pebble in $\ubb'$. Otherwise, Duplicator simulates a move by Spoiler
in \Ss by putting a pebble on position $i$ and sets $i'$ as the answer she
obtains from her strategy in \Ss. Note that by hypothesis all pebbles in
$\ubb,\ubb'$ (including $i,i'$) satisfy the conditions of the \sio{n}-game.
We now distinguish two cases depending on the position $x$.

\highlight{There exists $i \in \ubb$ such that
  $x_{1}(i) \leqslant x \leqslant x_{m_{i}}(i)$} 
We distinguish two subcases:
\begin{itemize}
\item If $x$ is already a marked position $x_j(i)$ for $i$, then Duplicator
  answers by putting a corresponding pebble on $x'_{j}(i')$. Note that this
  answer is correct by hypothesis on $i,i'$ for the \sio{n}-game and by
  hypothesis on the marked positions for $i,i'$ as stated in Item~\ref{item:10} of
  $\Is(\ell+1)$. Since both positions were already marked for $i,i'$, it is
  then simple to verify that $\Is(\ell)$ holds.

\item Assume now that $x$ is not yet marked. Since $a_i$ positions are always
  marked, $x$ is a $b_i$ or a $b_{i+1}$ position. Assume that $x$ is a $b_i$
  position (the other case is similar). Recall that $m_i = m_{i'}$ by
  Item~\ref{item:10} in $\Is(\ell+1)$. Let $j$ be such that $x_j(i) < x < x_{j+1}(i)$.  By
  Item~\ref{item:12} of $\Is(\ell + 1)$ it is immediate than one can find an answer
  $x' \in u'$ such that $x'_j(i') < x' < x'_{j+1}(i')$ and Item~\ref{item:12} of
  $\Is(\ell)$ remains satisfied with $x,x'$ as new marked positions for
  $i,i'$. Again this answer is correct by hypothesis on $i,i'$ for the
  \sio{n}-game and by hypothesis on the marked positions for $i,i'$ as stated
  in Item~2 of $\Is(\ell+1)$. It is then simple to verify that $\Is(\ell)$
  remains satisfied.
\end{itemize}

\highlight{There exists $i \in \ubb$ such that $x_{m_{i-1}}(i-1) < x <
  x_1(i)$} From Item~\ref{item:11} in $\Is(\ell+1)$, we know that there are at
least $2^{\ell+2}$ copies of $b_{i}$ between $x_{m_{i-1}}(i-1)$ and
$x_1(i)$. It follows that there are either at least $2^{\ell+1}$
copies of $b_{i}$ between $x_{m_{i-1}}(i-1)$ and $x$ or at least
$2^{\ell+1}$ copies of $b_{i}$ between $x$ and $x_1(i)$. Since both cases are
symmetric, assume that we are in the first case: there are at least
$2^{\ell+1}$ copies of $b_{i}$ between $x_{m_{i-1}}(i-1)$ and $x$.

\smallskip

Let $d$ be the number of copies of $b_i$ between $x$ and $x_1(i)$, \emph{i.e.},
$x = x_1(i) - (d+1)$. If $d < 2^\ell$, we set $x' \in u'$ as the position
$x' = x_1(i') - (d+1)$. Otherwise we set $x' \in u'$ as the position
$x' = x_1(i) - (2^\ell+1)$. In both cases, $x'$ is Duplicator's answer and we
set $x,x'$ as new marked positions for $i,i'$. Note that this answer is
correct by hypothesis on $i,i'$ for the \sio{n}-game. It is immediate that
$\Is(\ell)$ are satisfied by choice of $x'$.

\section{Tools for the Algebraic Approach: Varieties, Semidirect Product}
\label{sec:tools-algebr-appr}

In this section, we set up the terminology needed for the algebraic version of
our result. As explained in the introduction, we use varieties to capture our
classes of separator languages. Informally, a variety is a class of finite
algebras canonically associated to such a class of separators. We build our
algebraic version of the transfer theorem from a weak fragment \Fs to its enriched
version $\Fs^+$ on three ingredients:
\begin{enumerate}[label=\itemfmt{I\arabic*.},ref=\itemref{I\arabic*}]
\item\label{item:13} A solution to the separation problem for \Fs, as in the logical approach.
\item\label{item:14} An algebraic description of the weak variant \Fs as a variety 
  \Vbf.
\item\label{item:15} An algebraic description of the strong variant $\Fs^+$ as the variety 
  $\Vbf * \Dbf$, built from $\Vbf$ and from a fixed variety $\Dbf$ with an
  operator called the semidirect product.
\end{enumerate}
These three points have already been solved for all fragments of
\figurename~\ref{fig:frag}. The transfer result, Theorem~\ref{thm:main} below,
reduces separability by languages associated with $\Vbf * \Dbf$ to
separability by languages associated with $\Vbf$. Therefore, relying on the
solution of Items~\ref{item:14} and \ref{item:15}, it provides a
reduction from the separation problem by~\Fs languages to
the separation problem by $\Fs^+$~languages. If in addition Item~\ref{item:13}
if fulfilled, then the latter problem is decidable.

\medskip
This section is devoted to making these notions precise. It is organized as
follows: we first recall the notion of variety of ordered semigroups and
monoids, and how varieties can be used to capture classes of regular languages
we are interested in. We then recall the construction of the semidirect
product of two varieties in order to define the variety $\Vbf * \Dbf$.  We
finally present a bibliography giving, for each fragment $\Fs$ in
\figurename~\ref{fig:frag}, references for solving the above
questions~\ref{item:13}--\ref{item:15}. The statement and the proof of the transfer result,
Theorem~\ref{thm:main}, is postponed to Section~\ref{sec:algebra}.

\subsection{Varieties}
\label{sec:varieties}

A \emph{variety of semigroups (resp. monoids)} is a class of finite semigroups
(resp. monoids) closed under three natural operations: finite direct product,
subsemigroup (or submonoid), and homomorphic image. This makes it possible to
define classes of regular languages based on the monoids that recognize these
languages: a variety \Vbf defines the class of all languages
recognized by semigroups (resp. monoids) in~\Vbf. There is an issue however:
all classes of languages defined in this way have to be closed under
complement, since the set of languages recognized by any semigroup is closed
under complement. This prevents us from capturing logical fragments that are
not closed under complement, such as \sdo. This problem has been solved
in~\cite{porder} with the notions of \emph{ordered semigroups and
  monoids}. Intuitively, such a semigroup is parametrized by a partial order
and the set of languages it recognizes is then restricted with respect to this
partial order.

Let us recall this notion, which leads to the definition of variety of ordered
semigroups or monoids. All classes considered in this paper may be defined in
terms of such varieties.

\highlight{Ordered Semigroups} An ordered semigroup is a pair $(S,\leqslant)$
where $S$ is a semigroup and~$\leqslant$ is a partial order on $S$, which
is compatible with multiplication: $s \leqslant t$ and $s' \leqslant t'$ imply
$ss' \leqslant tt'$. To simplify the notation, we will often omit the
partial order $\leqslant$ when it is clear from the context and simply
speak of an ordered semigroup $S$. Observe that any semigroup endowed
with equality as the partial order is an ordered semigroup. In
particular we view $A^+$ as an ordered semigroup with equality as the
partial order.

If $(S,\leqslant_S)$ and $(T,\leqslant_T)$ are ordered semigroups, an ordered
semigroup morphism is a mapping $\alpha: S \rightarrow T$ which
is a semigroup morphism and preserves the partial order, \emph{i.e.}, for all
$s,s' \in S$, $s \leqslant_S s' \Rightarrow \alpha(s) \leqslant_T \alpha(s')$. Let
$L \subseteq A^+$ and $(S,\leqslant)$ be an ordered semigroup. Then, $L$ is
said to be \emph{recognized by $(S,\leqslant)$} if there exist an ordered
semigroup morphism $\alpha: A^+ \rightarrow S$ and $F \subseteq S$,
such that $L = \alpha^{-1}(F)$ and $F$ is \emph{upward closed}, that is:
\[
s \in F \text{ and } s \leqslant t \Rightarrow t \in F.
\]
When $\leqslant$ is trivial, then any subset of $S$ is upward closed, and we
recover exactly the classical notion of recognizability by semigroups
presented just above. However, when $\leqslant$ is nontrivial, the set of
recognized languages gets restricted because of the additional condition on
the recognizing set~$F$. In particular it may happen that a language is
recognized by $(S,\leqslant)$, while its complement is not (its complement is
recognized by $(S,\geqslant)$).

\highlight{Varieties of Ordered Semigroups} A \emph{variety of finite
  ordered semigroups} is a class \Vbf of finite ordered semigroups
that satisfies the following properties:
\begin{enumerate}
\item \Vbf is closed under ordered subsemigroup: if $(S,\leqslant) \in
  \Vbf$, then $(T,\leqslant) \in \Vbf$ when $T$ is a subsemigroup of $S$ and the order
  on $T$ is the restriction of the order on $S$.
\item \Vbf is closed under ordered quotient: if $(S,\leqslant) \in
  \Vbf$ and $\alpha: (S,\leqslant) \rightarrow (T,\leqslant)$ is a surjective ordered semigroup
  morphism, then we have $(T,\leqslant) \in \Vbf$.
\item \Vbf is closed under Cartesian direct product: if $(S_1,\leqslant_1),(S_2,\leqslant_2) \in \Vbf$,
  then we have $(S_1\times S_2,\leqslant)\in\Vbf$, where the semigroup $S_1 \times S_2$ is
  equipped with the componentwise multiplication and $(s_1,s_2) \leqslant
  (t_1,t_2)$ if $s_1\leqslant_1t_1$ and $s_2\leqslant_2t_2$.
\end{enumerate}

Note that for technical reasons, we have to consider both varieties of
semigroups and monoids: non-enriched fragments correspond to varieties of
monoids while enriched ones correspond to varieties of semigroups. For the
sake of simplifying the presentation, we only give the definitions for
semigroups. Ordered monoids and varieties of ordered monoids are defined in a
similar way, as well as the non-ordered versions.


\highlight{Varieties and Classes of Languages} To any variety \Vbf of ordered
semigroups (resp.~of ordered monoids), we can associate the class of all languages
that are recognized by an ordered semigroup (resp.~ordered monoid) in
\Vbf. As for logics and for the sake of simplifying the presentation, we may
abuse notation and use \Vbf to denote both a variety and the class of
languages it defines.

It turns out that all classes from \figurename~\ref{fig:frag} can be defined
in such a way. Therefore, they all have an associated a variety. This follows
actually from a general result, Eilenberg's theorem.  One should however keep
in mind that in this framework, there is:

\begin{enumerate}[label=$(\alph*)$]
\item\label{item:16} Eilenberg's theorem, a generic result establishing a
  correspondence between varieties and classes of languages (indexed by
  alphabets) enjoying certain closure properties: closure under Boolean
  operations, inverse morphisms and left and right residuals. It
  was first obtained by S.~Eilenberg for classes closed under complement, and
  later generalized by J.E.~Pin~\cite{porder} when this assumption does not
  necessarily hold.

\item\label{item:17} Specific instances of Eilenberg's theorem, one for each
  particular class, relating such a class of languages
  with a corresponding variety of ordered semigroups or monoids.
\end{enumerate}

We will not state Eilenberg's theorem precisely, as we do not need it. On the
other hand, Item~\ref{item:17} is useful to provide an alternate version of
our transfer result, Theorem~\ref{thm:sdo}, in the algebraic framework of
Section~\ref{sec:algebra}. This alternate version, Theorem~\ref{thm:main}, is
generic, in the sense that it transfers decidability of the separation problem
for a variety $\Vbf$ to the variety $\Vbf * \Dbf$, with no assumption on the
variety \Vbf. However, in order to instantiate this generic theorem for our
logical fragments, we need Item~\ref{item:15} above, \emph{i.e.}, to show that
for each weak fragment~\Fs, if the variety associated to \Fs is \Vbf, then the
variety associated to the enriched variant~$\Fs^+$ is $\Vbf\ast\Dbf$. In other
words, we shall rely on the aforementioned specific connections,
Item~\ref{item:17} above, between a class of languages and a variety of
ordered semigroups or monoids. Each fragment will be described in
Section~\ref{sec:algebraic-charac}, and the fact that for all of them, if \Fs
corresponds to the variety \Vbf, then $\Fs^+$ corresponds to the variety
$\Vbf * \Dbf$ is stated in Theorem~\ref{thm:vstard}.

\subsection{The Semidirect Product}
\label{sec:semidirect-product}

Let $M$ be an ordered monoid and let $T$ be an ordered semigroup. A
\emph{semidirect product} of $M$ and $T$ is an operation which is parametrized
by an \emph{action} of $T$ on $M$ and outputs a new ordered semigroup, whose
base set is $M\times T$. In particular, one can obtain different semidirect
products out of the same $M$ and $T$, depending on the chosen~action.

Let `$+$' and `$\cdot$' be the operations of $M$ and $T$ respectively. Note
that we choose to denote the operation on $M$ additively. This is for the sake
of simplifying the presentation. However, this does not mean that we assume
$M$ to be commutative. An \emph{action} `$\ast$' of $T$ on $M$ is a mapping
$(t,s) \mapsto t \ast s$ from $T^1 \times M$ to $M$ such that, for all
$s,s' \in M$ and all $t,t' \in T$: %
\begin{multicols}{2}\ignorespaces
  \begin{itemize}[itemsep=.8mm,topsep=0mm,parsep=0mm,partopsep=0mm]
  \item $t \ast (t' \ast s) = (t \cdot t') \ast s$.
  \item $1_T \ast s = s$.
  \item if $s \leqslant s'$, then $t \ast s \leqslant t \ast s'$.
  \item $t \ast (s + s') = t \ast s + t \ast s'$.
  \item $t \ast 1_M = 1_M$.
  \item if $t \leqslant t'$, then $t \ast s \leqslant t' \ast s$.
  \end{itemize}
\end{multicols}
Given a fixed action `$\ast$' of $T$ on $M$, the \emph{semidirect
  product  $M \ast T$ of $M$ and $T$ with respect to action $\ast$}\/ is the set
$M \times T$ equipped with the following operation:
\[
(s,t) \cdot (s',t') = (s + t \ast s',t \cdot t')
\]
\noindent
and the componentwise order:
\[
(s,t) \leqslant (s',t') \text{ if } s \leqslant s' \text{ and }
t \leqslant t'.
\]
One can verify that this does yield an ordered
semigroup, see~\cite{semidirect-ordered:2002}.

Given a variety \Vbf of ordered monoids and a variety \Wbf of ordered
semigroups, we denote by $\Vbf\ast\Wbf$ the variety of ordered semigroups
generated by all semidirect products of the form $M\ast T$, with $M\in\Vbf$
and $T\in\Wbf$, where $\ast$ ranges over all possible actions of $T$ on~$M$.

\highlight{The Variety \Dbf}
We will only use the semidirect product with semigroups $T$ from a specific
variety, denoted by $\Dbf$. This is because such a semidirect product
$\Vbf\ast\Dbf$ of $\Vbf$ with $\Dbf$ is often related to the enrichment with
the successor relation of the fragment captured by $\Vbf$.

The variety \Dbf consists of all finite ordered semigroups $S$ such that for
all $s \in S$ and all $e \in E(S)$, we have $se = e$. From a language
perspective, a language $L$ is recognized by a semigroup in \Dbf iff there
exists $k \in \nat$ such that membership of a word $w$ to~$L$ only depends on
the suffix of length $k$ of $w$.

The reason why we introduce such semidirect products is the following theorem,
which gathers several nontrivial results from the literature listed in
Section~\ref{sec:algebraic-charac}, and which answer our
requirement~\ref{item:15} towards our transfer theorem.

\begin{theorem}
  \label{thm:vstard}
  Let \Vbf be a variety corresponding to a fragment \Fs from the ones
  presented in \figurename~\ref{fig:frag}. Then, the variety corresponding to the
  fragment $\Fs^+$ is $\Vbf*\Dbf$.
\end{theorem}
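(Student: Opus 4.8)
The statement is a compilation result: for each concrete fragment $\Fs$ in \figurename~\ref{fig:frag}, one must exhibit a variety $\Vbf$ (of ordered monoids) that corresponds to $\Fs$ under Eilenberg's correspondence (requirement~\ref{item:14}), and then verify that the enriched fragment $\Fs^+$ corresponds precisely to $\Vbf*\Dbf$ (requirement~\ref{item:15}). Since this is not a single uniform argument but a case analysis over finitely many fragments, the plan is to treat each family in turn, invoking the appropriate characterization from the literature. Concretely: for $\foeq$ one takes $\Vbf$ to be the variety of commutative aperiodic monoids satisfying $x^{\omega+1}=x^\omega$ together with the locally-trivial-type identities isolating threshold testability; then $\Vbf*\Dbf$ is exactly the variety of locally threshold testable languages, by the theorem of Thérien–Weiss / Thomas~\cite{Thom82}. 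For $\fod$ one takes $\Vbf=\mathbf{DA}$ (or its ordered refinement); the identity $\mathbf{DA}*\Dbf$ corresponding to $\fodp$ is the content of the algebraic characterization recalled in~\cite{Str85,pw:wreath} and the work on $\mathbf{DA}$. For the quantifier-alternation levels $\sio n$, $\pio n$, $\bso n$, one uses the varieties from the (ordered) Straubing–Thérien hierarchy, and the fact that the dot-depth hierarchy level $n$ is obtained from the Straubing–Thérien level $n$ by the operator $\Vbf\mapsto\Vbf*\Dbf$ — this is the classical result relating the two hierarchies, established for the Boolean levels by~\cite{pwdelta} and for the $\Sigma/\Pi$ levels in the ordered setting by~\cite{pw:wreath}.

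The structure I would follow is: (i) fix, for each fragment, the variety $\Vbf$ from Section~\ref{sec:algebraic-charac}; (ii) cite the specific instance of Eilenberg's theorem (item~\ref{item:17}) that $\Fs$ corresponds to $\Vbf$; (iii) cite the literature result that $\Fs^+$ corresponds to $\Vbf*\Dbf$. Since the theorem as stated is really a pointer assembling these facts, the bulk of the ``proof'' is bibliographic: the delicate point is to ensure that the ordered-versus-unordered distinction is handled correctly — the non-enriched fragments that are closed under complement correspond to plain varieties, but $\sio n$ and $\pio n$ are not closed under complement and require ordered monoids, and one must check that the semidirect product $\Vbf*\Dbf$ is taken in the ordered category (with $\Dbf$ the ordered variety defined in Section~\ref{sec:tools-algebr-appr}) so that it captures $\sip n$ and not some larger Boolean-closed class.

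The main obstacle is precisely this last verification for the $\Sigma$ and $\Pi$ levels: the classical results of~\cite{pwdelta,pw:wreath} are most cleanly stated for Boolean-closed classes (dot-depth $n$ from Straubing–Thérien $n$), and extending the wreath-product principle ``$\Fs^+ = \Fs * \Dbf$'' to the one-sided $\Sigma_n$ / $\Pi_n$ levels requires working with ordered semigroups throughout. I would handle this by invoking the ordered refinement of the delay theorem and of the wreath-product principle, noting that the action-compatibility conditions on $\ast$ in Section~\ref{sec:semidirect-product} already include the order-monotonicity clauses needed for this to go through; the verification that these clauses match the logical semantics of adding $+1$, $min$, $max$ to a $\Sigma_n$-formula is the one step that genuinely needs care rather than a citation. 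For the two-variable and equality fragments the corresponding statements are already fully established and ordered/unordered issues do not arise (both $\foeqp$ and $\fodp$ are closed under complement), so those cases are immediate references.
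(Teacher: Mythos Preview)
Your proposal is correct and matches the paper's own treatment: the paper does not give a self-contained proof of Theorem~\ref{thm:vstard} but presents it as a compilation of known results, with Section~\ref{sec:algebraic-charac} supplying the case-by-case bibliographic references (Th\'erien--Weiss~\cite{TherienWeiss:ltt-wreath:1985} for $\foeq$, Th\'erien--Wilke~\cite{TW-FO2} for $\fod$, and Simon~\cite{simon75}, Knast~\cite{Knast:dd1:1983a}, Straubing~\cite{Str85}, Pin--Weil~\cite{pwdelta,pw:wreath} for the alternation hierarchy). Your concern about the ordered case for the $\Sigma_n/\Pi_n$ levels is legitimate, but the paper simply defers this to~\cite{pwdelta,pw:wreath}, which already work in the ordered setting; there is no additional argument in the paper beyond these citations.
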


\subsection{Algebraic Characterizations of Logically Defined Fragments}
\label{sec:algebraic-charac}

In this section, we consider Items~\ref{item:14} and \ref{item:15}, which were
to be solved in order to apply our generic theorem. All logical fragments of
\figurename~\ref{fig:frag} correspond to varieties that have been fully
identified.  We present, for each such fragment, bibliographic references
relating its weak and strong variants to varieties. In particular, we will see
that Theorem~\ref{thm:vstard} holds: for each fragment whose non-enriched
variant corresponds to a variety \Vbf of ordered monoids, its enriched version
corresponds to the variety of ordered semigroups $\Vbf * \Dbf$ built
from~$\Vbf$.

\subsubsection{First-order with Equality}
\label{sec:foeq}
The logic \foeq is the restriction of \fow in which the linear order cannot be
used, and only equality between two positions can be tested. It is folklore
that \foeq-definable languages are exactly those that can be defined using a
monoid in the variety of monoids {\sf ACom} of aperiodic and commutative
monoids.

The enriched fragment is $\foeqp$, as $min$ and $max$ can be
eliminated in the formulas. It defines locally threshold testable
languages~\cite{Thom82}. In~\cite{TherienWeiss:ltt-wreath:1985}, it was proved
that \foeqp-definable languages are exactly those that can be defined in
${\sf ACom} * \Dbf$. In particular this was used to solve the membership
problem for \foeqp.

That separation is decidable for \foeq is simple (essentially, the
problem can be reduced to the decision of Presburger logic,
see~\cite{ltltt:2013}). Hence Theorem~\ref{thm:sdo} and
Theorem~\ref{thm:main} yield two different proofs of the following
corollary.

\begin{corollary} \label{cor:ltt}
  Let $L,L'$ be regular languages. It is decidable to test whether $L$
  is \foeqp-separable from $L'$.
\end{corollary}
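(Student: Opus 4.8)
The plan is to combine the already-established machinery of the paper. First I would observe that \foeqp\ is the strong variant of the weak fragment $\Fs = \foeq$ appearing in \figurename~\ref{fig:frag}. By Theorem~\ref{thm:sdo}, $\foeqp$-separability of two regular languages $L,L'$ (recognized by a common surjective morphism $\alpha\colon A^+\to S$ into a finite semigroup) reduces effectively to $\foeq$-separability of the associated languages of well-formed words $\Lbb,\Lbb'\subseteq\wfA^+$, which by Fact~\ref{fct:reg2} are regular and computable from recognizers of $L$ and $L'$. So it suffices to argue that $\foeq$-separation for regular languages is decidable, which is the content pointed to in the text (essentially a reduction to Presburger arithmetic, see~\cite{ltltt:2013}). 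Since $\foeq$ can only compare positions for equality and count labels up to a threshold, a language definable in $\foeq$ is a Boolean combination of conditions of the form ``the number of $a$-positions is $\geqslant t$'' or ``$=t$''; membership of a word in such a language depends only on the Parikh image truncated at a threshold. One then reduces $\foeq$-separability of two regular languages to a Presburger-definable question about their (semilinear) Parikh images, whose truth is decidable.

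Alternatively — and this is the point the corollary is really making — I would invoke the algebraic route. By Section~\ref{sec:foeq}, $\foeq$ corresponds to the variety ${\sf ACom}$ of aperiodic commutative monoids, and by Theorem~\ref{thm:vstard} (specialising the result of~\cite{TherienWeiss:ltt-wreath:1985}), $\foeqp$ corresponds to ${\sf ACom}\ast\Dbf$. Since separation is decidable for ${\sf ACom}$ (the Presburger argument above), Theorem~\ref{thm:main} — which asserts that $\Vbf\mapsto\Vbf\ast\Dbf$ preserves decidability of separation — yields decidability of separation for ${\sf ACom}\ast\Dbf$, hence for $\foeqp$. This gives the ``two different proofs'' mentioned in the statement: one via the purely logical Theorem~\ref{thm:sdo}, one via the algebraic Theorem~\ref{thm:main}.

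The only genuine content not already packaged as a black box is the base case: decidability of $\foeq$-separation itself. The main obstacle is therefore to make precise the reduction to Presburger arithmetic. Concretely, I would fix a threshold $t$ large enough (depending on the sizes of the automata for $L$ and $L'$) so that $\foeq$-separability is witnessed, if at all, by a separator of ``counting threshold'' at most $t$; then $L$ is $\foeq$-separable from $L'$ iff the truncated-at-$t$ Parikh images of $L$ and of $L'$ are disjoint as subsets of $(\{0,1,\dots,t\})^{|A|}$ — a condition one checks by testing, for each truncated Parikh vector $\bar{p}$, whether the Presburger formula ``$\exists w\in L$ with truncated Parikh image $\bar p$'' and the analogous formula for $L'$ are simultaneously satisfiable, using that the Parikh images of regular languages are effectively semilinear. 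Establishing that a finite threshold suffices is the crux; everything else is routine. Since this base case is already known~\cite{ltltt:2013}, the corollary follows immediately from Theorem~\ref{thm:sdo} (or Theorem~\ref{thm:main}).
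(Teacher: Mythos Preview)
Your proposal is correct and follows exactly the paper's own argument: the paper does not give a standalone proof of Corollary~\ref{cor:ltt} but simply invokes the known decidability of \foeq-separation (via the Presburger reduction of~\cite{ltltt:2013}) together with the transfer theorems Theorem~\ref{thm:sdo} and Theorem~\ref{thm:main}, precisely the two routes you describe. Your additional sketch of the threshold/Parikh-image argument for the base case goes beyond what the paper spells out (it treats~\cite{ltltt:2013} as a black box), but is consistent with that reference.
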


As we already explained, while the proof of Corollary~\ref{cor:ltt} is new,
the result itself is not. A specific proof was presented in~\cite{ltltt:2013}
and the result can also be obtained through indirect means by combining
results from~\cite{MR1709911,Steinberg:delay-pointlikes:2001}.

\subsubsection{Quantifier Alternation Hierarchy}
\label{sec:quant-altern-hier}
One can classify first-order formulas by counting the number of alternations
between $\exists$ and $\forall$ quantifiers in the prenex normal form of the
formula. For $i \in \nat$, a formula is said to be \sio{i} (resp.\ \pio{i}) if
its prenex normal form has $(i -1)$ quantifier alternations (that is, $i$
blocks of quantifiers) and starts with an $\exists$ (resp.~ a\ $\forall$)
quantifier. For example, a formula whose prenex normal form is
\[
\exists x_1 \exists x_2  \forall x_3 \exists x_4
\ \varphi(x_1,x_2,x_3,x_4) \quad \text{(with $\varphi$ quantifier-free)}
\]
\noindent
is {\sio 3}. Observe that a \pio{i} formula is by definition the
negation of a \sio{i} formula. Finally, a \bso{i} formula is a boolean
combination of \sio{i} formulas.

Both this hierarchy and the enriched variant are known to be
strict~\cite{BroKnaStrict,ThomStrict}. Furthermore, they correspond to
well-known hierarchies of classes of languages: the non-enriched hierarchy
corresponds to the Straubing-Thérien hierarchy~\cite{StrauConcat,TheConcat},
while the enriched hierarchy corresponds to the dot-depth
hierarchy~\cite{BrzoDot}. Note that for all fragments above \sdo, the
predicates $min$ and $max$ can be eliminated from the logic. Hence, we denote
the enriched fragments by {\sip 1}, \mbox{{\bsp 1}}, $\mbox{\sdp},\dots$

Solving the membership problem for all levels in both hierarchies has been an
open problem for a long time. As of today, only the lower levels are known to
be decidable. Historically, {\bso 1} and {\bsp 1} have been investigated
first. It is known from~\cite{simon75} that {\bso 1} has decidable membership
and corresponds to the variety of monoids~{\sf J}. For {\bsp 1}, decidability
was proved in~\cite{Knast:dd1:1983a}, as well as the correspondence with the
variety of semigroups ${\sf J} * \Dbf$ in~\cite{Str85}.

The fragments {\sio 1} and {\sio 2} were shown to have decidable
membership in~\cite{pwdelta}. Moreover, the authors also prove that
each of these two fragments correspond to varieties of ordered monoids and that
{\sip 1} and \sdp correspond to the varieties of semigroups obtained
by taking the semidirect product with \Dbf. From this correspondence,
they obtain decidability of {\sip 1}. This is more involved for \sdp
and was proved later in~\cite{glasser-dd3/2}.

Recently, membership has been shown to be decidable for both {\bso 2}
and {\sio 3}~\cite{PZ:icalp14}. These results can be transferred to
\bdp and \stp using a result by Straubing~\cite{Str85}, or
Theorem~\ref{thm:memb} in this paper. For all levels
above, the membership problem is open.

Separation is known to be decidable for {\sio 1}~\cite{sep_icalp13},
{\bso   1}~\cite{DBLP:conf/mfcs/PlaceRZ13,sep_icalp13} and
\sdo~\cite{PZ:icalp14}. Hence Theorem~\ref{thm:sdo} and
Theorem~\ref{thm:main} yield two different proofs of the following
corollary.

\begin{corollary}
  Let $L,L'$ be regular languages, then the following problems are
  decidable:
  \begin{itemize}
  \item whether $L$ is $\sip 1$-separable from $L'$.
  \item whether $L$ is $\bsp 1$-separable from $L'$.
  \item whether $L$ is \sdp-separable from $L'$.
  \end{itemize}
\end{corollary}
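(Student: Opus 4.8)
The plan is to read this corollary off the generic transfer theorems already established, instantiated at the three pairs $(\sio1,\sip1)$, $(\bso1,\bsp1)$ and $(\sdo,\sdp)$; no new combinatorial work is required. Two independent derivations are available: one through the logical form, Theorem~\ref{thm:sdo}, and one through the algebraic form, Theorem~\ref{thm:main}.

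For the logical route, I would argue as follows. Given two regular languages $L,L'$, first compute a single surjective morphism $\alpha\colon A^+\to S$ into a finite semigroup recognizing both (possible as explained in the preliminaries). Form the alphabet $\wfA$ of well-formed words of $\alpha$ and the associated languages $\Lbb,\Lbb'\subseteq\wfA^+$; by Fact~\ref{fct:reg2} these are regular and effectively computable from recognizers of $L$ and $L'$. Theorem~\ref{thm:sdo} then asserts that, for $\Fs$ any weak variant from \figurename~\ref{fig:frag}, $L$ is $\Fs^+$-separable from $L'$ if and only if $\Lbb$ is $\Fs$-separable from $\Lbb'$. Hence $\sip1$-, $\bsp1$- and $\sdp$-separability reduce effectively to $\sio1$-, $\bso1$- and $\sdo$-separability, which are decidable by~\cite{sep_icalp13}, by~\cite{DBLP:conf/mfcs/PlaceRZ13,sep_icalp13}, and by~\cite{PZ:icalp14}, respectively. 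This settles the three items.

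Alternatively, for the algebraic route, I would invoke the characterizations recalled in Section~\ref{sec:algebraic-charac}: $\sio1$, $\bso1$ and $\sdo$ correspond to varieties of ordered monoids $\Vbf$, and by Theorem~\ref{thm:vstard} the enriched fragments $\sip1$, $\bsp1$, $\sdp$ correspond to $\Vbf*\Dbf$. Since separation is decidable for each of the weak-fragment varieties, Theorem~\ref{thm:main}—decidability of separation is preserved by $\Vbf\mapsto\Vbf*\Dbf$—immediately yields decidability for the enriched ones. The only points needing care in either derivation are that the reductions are genuinely effective (guaranteed by Fact~\ref{fct:reg2} in the logical version, and by effectiveness of the relevant Eilenberg correspondences together with the construction underlying Theorem~\ref{thm:main} in the algebraic version) and that the cited separation results for the weak fragments apply verbatim. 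I do not expect a real obstacle at this stage: all the difficult content has been absorbed into Theorems~\ref{thm:sdo}, \ref{thm:main} and~\ref{thm:vstard}, and this corollary is a routine instantiation.
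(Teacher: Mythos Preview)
Your proposal is correct and matches the paper's own reasoning exactly: the corollary is stated immediately after noting that separation is decidable for $\sio1$, $\bso1$ and $\sdo$ (citing \cite{sep_icalp13}, \cite{DBLP:conf/mfcs/PlaceRZ13,sep_icalp13} and \cite{PZ:icalp14}), and the paper explicitly says that Theorem~\ref{thm:sdo} and Theorem~\ref{thm:main} yield two different proofs of it. Your care about effectiveness via Fact~\ref{fct:reg2} is appropriate but more detail than the paper itself provides.
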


As we explained in Section~\ref{sec:prelims}, the result for {\bsp 1}
as it can also be obtained through indirect means by
combining results
from~\cite{MR1709911,Steinberg:delay-pointlikes:2001}. On the other
hand, the results are new for both $\sip 1$ and \sdp.

\subsubsection{Two-Variable First-Order Logic}
\label{sec:two-variable-first}
The logic \fod is the restriction of \fow using only two (reusable)
variables. The corresponding enriched fragment is \fodp ($min$ and $max$ can
be eliminated from the logic).

In~\cite{TW-FO2}, it was proved that \fod and \fodp correspond
respectively to the varieties {\sf DA} and ${\sf DA} * \Dbf$. This
immediately yields decidability of membership for \fod. For \fodp,
this additionally requires a deep algebraic result by
Almeida~\cite{Almeida:1996c} (a simpler self-contained proof also
exists~\cite{PSDAD}). The separation problem has been proved to be
decidable for \fod in~\cite{DBLP:conf/mfcs/PlaceRZ13}. Hence
Theorem~\ref{thm:sdo} and Theorem~\ref{thm:main} yield two different
proofs of the following corollary.

\begin{corollary} \label{cor:dodp}
  Let $L,L'$ be regular languages. It is decidable to test whether $L$
  is \fodp-separable from $L'$.
\end{corollary}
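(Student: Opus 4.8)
The plan is to obtain the corollary as a direct instantiation of the transfer theorem, since \fod is one of the fragments of \figurename~\ref{fig:frag}, with strong variant \fodp. The only external input needed is the decidability of separation for the weak variant, which is the main result of~\cite{DBLP:conf/mfcs/PlaceRZ13}.

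First I would check that the reduction furnished by Theorem~\ref{thm:sdo} is effective in this case. Given two regular languages $L,L'$, one first produces, as in the discussion of semigroups and separation, a single surjective morphism $\alpha \colon A^+ \to S$ into a finite semigroup recognizing both of them. From $\alpha$ one builds the alphabet of well-formed words \wfA, and by Fact~\ref{fct:reg2} one effectively computes recognizers of the associated languages of well-formed words $\Lbb,\Lbb' \subseteq \wfA^+$. Theorem~\ref{thm:sdo}, instantiated with $\Fs = \fod$ (so that $\Fs^+ = \fodp$), then states that $L$ is \fodp-separable from $L'$ if and only if $\Lbb$ is \fod-separable from $\Lbb'$.

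Second, I would feed the computed recognizers of $\Lbb$ and $\Lbb'$ into the \fod-separation algorithm of~\cite{DBLP:conf/mfcs/PlaceRZ13}. This decides whether $\Lbb$ is \fod-separable from $\Lbb'$, hence, by the equivalence above, whether $L$ is \fodp-separable from $L'$; this proves the corollary. As an alternative route, one could use the algebraic form Theorem~\ref{thm:main}: by~\cite{TW-FO2}, \fod corresponds to the variety ${\sf DA}$ and \fodp to ${\sf DA}\ast\Dbf$, so Theorem~\ref{thm:main} reduces ${\sf DA}\ast\Dbf$-separation to ${\sf DA}$-separation, again decidable by~\cite{DBLP:conf/mfcs/PlaceRZ13}.

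The hard part has in fact already been discharged before this point: the substance lies in Theorem~\ref{thm:sdo}, whose nontrivial direction is the fragment-specific \efgame argument of Proposition~\ref{prop:fodcor} carried out in Section~\ref{sec:fod-fodp-1}, and in the cited decidability of \fod-separation. The only thing genuinely left to verify here is that the generic machinery does apply to the pair $(\fod,\fodp)$: that the predicates separating the two variants are exactly $min$, $max$ and $+1$, and that \fodp is expressive enough to implement the bounded-neighbourhood tests required by Claim~\ref{clm:canonic}. Both are immediate for two-variable logic with successor, so no real obstacle remains.
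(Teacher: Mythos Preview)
Your proposal is correct and follows exactly the paper's own argument: the corollary is obtained by instantiating Theorem~\ref{thm:sdo} (or equivalently Theorem~\ref{thm:main} via the correspondence of~\cite{TW-FO2}) with $\Fs=\fod$, and invoking the decidability of \fod-separation from~\cite{DBLP:conf/mfcs/PlaceRZ13}. The additional remarks you make about effectiveness and applicability are sound but not required beyond what the paper already establishes generically.
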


As we explained in Section~\ref{sec:prelims}, while the proof is new, the
result itself is not. It can also be obtained through indirect means, again by
combining results from~\cite{MR1709911,Steinberg:delay-pointlikes:2001}.

\section{Algebraic Approach}
\label{sec:algebra}

\medskip We are now ready to prove Theorem~\ref{thm:main}. Recall that we have
a non-trivial variety \Vbf of ordered monoids, two languages $L$ and $L'$
recognized by a morphism $\alpha: A^+ \rightarrow S$, and $\Lbb,\Lbb'
\subseteq \wfA^+$ the associated languages of well-formed words.

We prove that $L$ is $(\Vbf * \Dbf)$-separable from $L'$ if and only if
\Lbb is \Vbf-separable from $\Lbb'$. We prove each direction in its
own subsection.

We now present an algebraic version of Theorem~\ref{thm:sdo}: the operator
$\mathsf{V} \mapsto \mathsf{V}\ast\mathsf{D}$ preserves decidability of
separation.

We would like to emphasize again that the ideas behind this theorem are
essentially the same as for Theorem~\ref{thm:sdo}. In particular, proofs only
rely on elementary notions, thus bypassing complex constructions usually used
to prove this kind of result, even if the statement itself requires some
additional algebraic~vocabulary.

\smallskip
The section is organized in three parts.
\begin{itemize}
\item We first briefly recall how classes of languages corresponding to our
  logical fragments are given an algebraic definition: for each fragment, an
  associated class of finite semigroups (or monoids)~\Vbf, a \emph{variety},
  has already been characterized, such that the class of languages definable
  in the~fragment is exactly the class of languages that are recognized by a
  semigroup (or monoid) of \Vbf.
\item In the second part, we define what ``adding
  the successor relation'' means in this context. Given a variety \Vbf, this
  generally corresponds to considering a new variety built on top of~\Vbf via
  an operation called the \emph{semidirect product}. This new variety is
  denoted $\Vbf * \Dbf$.
\item  Finally, in the last part, we state our main
  theorem: for any variety \Vbf, separability for the variety $\Vbf * \Dbf$
  reduces to separability for the variety~\Vbf.
\end{itemize}

\subsection{Main Theorem}

We have now the machinery needed to state our main theorem. For any variety of
ordered monoids \Vbf, we reduce $(\Vbf * \Dbf)$-separability to
$\Vbf$-separability.

\begin{theorem}
 \label{thm:main}
  Let \Vbf be a non-trivial variety of ordered monoids. Let $L$ and $L'$
  be two languages both recognized by the same morphism $\alpha: A^+ \rightarrow S$
  into a finite semigroup~$S$. Set $\Lbb,\Lbb' \subseteq \wfA^+$ as the
  languages of well-formed words associated to $L,L'$, respectively. Then, $L$ is
  $(\Vbf * \Dbf)$-separable from $L'$ if and only if\/ $\Lbb$ is
  \Vbf-separable from $\Lbb'$.
\end{theorem}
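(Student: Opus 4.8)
The plan is to prove the two implications separately, in two subsections exactly as announced, mirroring the two directions of Theorem~\ref{thm:sdo} but replacing its logical arguments by algebraic ones phrased in terms of the semidirect product. As in the logical proof, both directions are organised around the two canonical maps already introduced: $w\mapsto\croch{w}$ (Section~\ref{sec:from-fod-to-fodp}) and $\ubb\mapsto\ucroch{\ubb}_i$ (Section~\ref{sec:from-fodp-to-fod}). Throughout, one works with ordered monoids and semigroups and upward-closed accepting sets, so all Boolean manipulations must be kept positive (finite unions, intersections, residuals) in order to stay inside the relevant ordered varieties.

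\emph{From $\Vbf$-separability to $(\Vbf\ast\Dbf)$-separability.} Suppose $\Kbb\subseteq\wfA^+$ is a $\Vbf$-recognizable separator of $\Lbb$ from $\Lbb'$, say $\Kbb=\eta^{-1}(F)$ for an ordered monoid morphism $\eta\colon\wfA^*\to N$ with $N\in\Vbf$ and $F$ upward closed. I would set $K=\{w\in A^+\mid\croch{w}\in\Kbb\}$. By Fact~\ref{fct:sametype} we have $L\subseteq K$ and $K\cap L'=\emptyset$, so the only thing to check is that $K$ is recognized by an ordered semigroup in $\Vbf\ast\Dbf$. This is exactly what the $2|S|$-locality of the construction gives (Lemma~\ref{lem:canonic}): whether a position of $w$ is distinguished and, if so, which letter of $\wfA$ it contributes to $\croch{w}$ depends only on the length-$2|S|$ factor of $w$ ending there, and the extra last letter of $\croch{w}$ depends only on the length-$2|S|$ suffix of $w$. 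Hence testing $\croch{w}\in\Kbb$ amounts to relabelling $w$ by a bounded window — the behaviour recorded by a semigroup in $\Dbf$ — feeding the produced letters of $\wfA$ (and $1_N$ on the non-distinguished positions) into $\eta$, and appending the contribution of the last letter, which the $\Dbf$-component also remembers. Spelling this out shows $K\in\Vbf\ast\Dbf$; explicitly $K$ is built from a semidirect product $N\ast T$ with $N\in\Vbf$ and $T\in\Dbf$ (equivalently, one invokes the ordered wreath product principle, see~\cite{semidirect-ordered:2002}).

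\emph{From $(\Vbf\ast\Dbf)$-separability to $\Vbf$-separability.} This is the counterpart of the \efgame direction and the harder one. Let $K\subseteq A^+$ separate $L$ from $L'$, recognized by a morphism $\psi\colon A^+\to M\ast T$ with $M\in\Vbf$ and $T\in\Dbf$. By the wreath product principle, membership of a word in $K$ is a positive Boolean combination of conditions of the form ``$\psi_M(\sigma(w))$ lies in some $X\subseteq M$'' — where $\psi_T\colon A^+\to T$ is the projection, $\sigma$ the induced $T$-windowed coding, and $\psi_M$ an honest morphism into $M$ — together with suffix conditions ``$\psi_T(w)$ lies in some $Z\subseteq T$''. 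I would take as candidate $\Kbb\subseteq\wfA^+$ the set of well-formed words $\ubb$ with $\ucroch{\ubb}_i\in K$, for a suitable $i$ depending only on $S$, $M$ and $T$; by Fact~\ref{fct:cons1} this $\Kbb$ contains $\Lbb$ and is disjoint from $\Lbb'$, so everything comes down to showing $\Kbb\in\Vbf$. The key is that each idempotent $e\in E(S)$ occurring in $\ubb$ is expanded into a long block $\ucroch{e}^{\,i}$ inside $\ucroch{\ubb}_i$: for $i$ large enough, the $T$-component $\psi_T$ saturates on such a block (as $T\in\Dbf$ only sees a bounded suffix), so beyond the first block the coding $\sigma$ depends only on the letters of $\ubb$; and $\psi_M$ accumulates over a long periodic block, hence stabilizes to an idempotent power inside $M$. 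Consequently $\ubb\mapsto\psi_M(\sigma(\ucroch{\ubb}_i))$ becomes, for $i$ large, the value of a genuine monoid morphism $\wfA^*\to M'$ with $M'\in\Vbf$ built from $M$; and the suffix conditions turn into conditions on the boundary letters of $\ubb$ — which, since a well-formed word contains exactly one letter from $S\times E(S)$ (at the start) and one from $E(S)\times S$ (at the end), are again recordable inside $\Vbf$ by a morphism over $\wfA$. A positive Boolean combination of such languages is $\Vbf$-recognizable, which finishes this direction.

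\emph{Main obstacle.} The technical heart is the second direction, and within it the claim that the pull-back along $\ucroch{\cdot}_i$ lands in $\Vbf$ and not merely in $\Vbf\ast\Dbf$: one has to verify carefully that inflation destroys everything of the $\Dbf$-component except a bounded amount of boundary data, and that this surviving data — the idempotent powers coming from the long blocks and the boundary letters of the well-formed word — can be recorded by a morphism whose codomain still lies in $\Vbf$. This is precisely where the rigid shape of well-formed words (consistent idempotent markers, unique boundary letters) is indispensable, playing the role it plays in the \efgame strategies of Section~\ref{sec:from-fodp-to-fod}; and all of it has to be carried out while tracking the partial order and staying with positive Boolean combinations and upward-closed sets.
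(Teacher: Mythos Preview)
Your plan is sound and follows the same overall architecture as the paper: both directions are organised around the maps $w\mapsto\croch{w}$ and $\ubb\mapsto\ucroch{\ubb}_i$, and the easy direction (from $\Vbf$-separability to $(\Vbf*\Dbf)$-separability) is essentially identical to the paper's, down to building $T$ as the semigroup of length-$\leqslant 2|S|$ suffixes and $M$ as a power of the given $\Vbf$-monoid.

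The execution of the harder direction differs. You invoke the wreath product principle to decompose the $(\Vbf*\Dbf)$-separator $K$ into an $M$-part composed with a sliding-window coding plus a $T$-suffix condition, and then argue that pulling back each piece along $\ucroch{\cdot}_i$ lands in $\Vbf$. The paper instead fixes $i=\omega(M*T)$, defines a single explicit morphism $\gamma:\wfA^+\to M\times N$ (with $N\in\Vbf$ an auxiliary factor), and proves by a direct computation in the semidirect product (Lemma~\ref{lem:msowillneverdie}) that $\gamma(\wbb)=(m,t')$ whenever $\delta(\ucroch{\wbb}_\omega)=(m,t)$. Your route is more structural, the paper's more computational; both yield the same separator up to presentation, and your stabilisation argument for the $M$-component is exactly what the paper's calculation establishes equationally.

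One point you leave implicit deserves to be spelled out: how the hypothesis that $\Vbf$ is \emph{non-trivial} enters. It is precisely what lets you ``record the boundary letters inside $\Vbf$''. A non-trivial $\Vbf$ contains a monoid with at least two elements, hence (by products) one with $|T|$ pairwise incomparable elements $t'_1,\dots,t'_{|T|}$; mapping each last-letter type of $\wfA$ to the $t'_j$ corresponding to its $T$-suffix value and every other letter to $1_N$ recovers the $T$-suffix information via an honest morphism into $N\in\Vbf$. Without this trick your claim that ``suffix conditions \dots\ are again recordable inside $\Vbf$'' is not automatic --- an arbitrary non-trivial ordered variety need not detect the presence of a fixed letter. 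This is exactly the paper's use of the second factor $N$, and it is the only place in the proof where non-triviality is invoked.
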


In view of Theorem~\ref{thm:vstard}, Theorem~\ref{thm:main} applies to
all fragments we introduced. This means that Theorem~\ref{thm:sdo} can
be given an alternate indirect proof within this algebraic framework by
combining Theorem~\ref{thm:main} and Theorem~\ref{thm:vstard}. Hence,
this also yields another proof of Corollary~\ref{cor:sdp}.

The proof of Theorem~\ref{thm:main} is presented in the rest of this
section. As it was the case for Theorem~\ref{thm:sdo}, the proof is both
elementary and constructive: if there exists a separator for \Lbb and $\Lbb'$
in \Vbf, we use it to construct a separator for $L$ and $L'$ in $\Vbf * \Dbf$.

This rest of the section is divided in three parts. In the first one, we
recall the formal definition of the semidirect product operation. In the next
two ones, we prove both directions of Theorem~\ref{thm:main}.

\subsection{\texorpdfstring{From $(\Vbf * \Dbf)$-separability to
    \Vbf-separability}{From (V * D)-separability to V-separability}}

We prove that if $L$ is $(\Vbf*\Dbf)$-separable from $L'$, then
$\Lbb$ is \Vbf-separable from $\Lbb'$. Note that we reuse the
construction which associates a canonical word $\ucroch{\wbb}_i \in
A^+$ to every word $\wbb \in \wfA^+$ and natural $i \geq 1$ (see
Section~\ref{sec:from-fodp-to-fod} for details).

Assume that $L$ is $(\Vbf*\Dbf)$-separable from $L'$. This means that there
exists an element of $(\Vbf*\Dbf)$ separating $L$ and $L'$. By
\cite[Prop.~3.5]{semidirect-ordered:2002}, such an ordered semigroup is an
ordered quotient of an ordered subsemigroup of a semidirect product $M * T$, with $M
\in \Vbf$ and $T \in \Dbf$. Therefore, $M * T$ itself separates $L$ and $L'$.
Hence, there is some upward closed $F \subseteq M * T$ and a morphism
$\delta: A^+ \rightarrow M * T$ such that $\delta^{-1}(F)$ separates
$L$ from $L'$.

We construct a separator in \Vbf for \Lbb and $\Lbb'$. Set $T =
\{t_1,\dots,t_n\}$ and observe that since $\Vbf$ is non-trivial, it
contains an ordered monoid $N$ containing at least $n$ distinct
elements. We choose $n$ such elements $t'_1,\dots,t'_n$ of $N$. The
choice is essentially arbitrary, but we ask $t'_1,\dots,t'_n$ to be
pairwise incomparable with respect to the partial order $\leqslant$. We
prove that \Lbb can be separated from $\Lbb'$ using the ordered monoid
$\Mbb = M \times N \in \Vbf$ (recall that a variety is closed under
Cartesian product). For an element $t=t_i$ of $T$, we denote by $t'$
the element $t'_i$ of $N$.

We define a morphism $\gamma: \wfA^+ \rightarrow \Mbb$ as follows. Let
$\omega$ be the idempotent power
$\omega(M * T)$ of $M * T$. Set $\abb = (e,s,f) \in \wfA$, so that
$\ucroch{\abb}_\omega=w_e^\omega w_s^{}w_f^\omega$. Let
$\delta(w_e^\omega)=(m_e,t_e)\in M*T$ and
$\delta(\ucroch{\abb}_\omega)=(m,t)$. We
define $\gamma(\abb)\in M\times N$ as follows:
\[
  \gamma(\abb) = \left\{\begin{array}{ll} (t_e\ast m,\ t') & \text{when $f = 1_S$,} \\
  (t_e\ast m,\ 1_N) & \text{otherwise.}\end{array}\right.
\]
This defines a morphism $\gamma:\wfA\to M\times N\in \Vbf$. It remains to
prove that $\gamma$ recognizes a separator of $\Lbb$ and $\Lbb'$. This is a
consequence of the next lemma.

\begin{lemma} \label{lem:msowillneverdie}
Let $\wbb \in \wfA^+$ be well-formed, and set $(m,t_i) =
\delta(\ucroch{\wbb}_\omega)$. Then $\gamma(\wbb) = (m,t'_i)$.
\end{lemma}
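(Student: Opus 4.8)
The plan is to expand both $\delta(\ucroch{\wbb}_\omega)$ and $\gamma(\wbb)$ letter by letter and to compare the two coordinates of $M\times N$. Write the well-formed word as $\wbb=\abb_0\abb_1\cdots\abb_q$ with $\abb_0=(s_0,e_1)$, $\abb_j=(e_j,s_j,e_{j+1})$ for $1\le j\le q-1$, and $\abb_q=(e_q,s_q)$ (reading a first letter of type $S\times E(S)$ as a triple with empty $w_e$, and a last letter of type $E(S)\times S$ as a triple with empty $w_f$); the single-letter case $\wbb=s\in S$ is immediate, since then $\ucroch{\wbb}_\omega=w_s$ and $\gamma(s)=(m,t'_i)$ is read off directly from $(m,t_i)=\delta(w_s)$. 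Put $(\mu_j,\tau_j)=\delta(\ucroch{\abb_j}_\omega)\in M * T$, and for each idempotent $e$ appearing in $\wbb$ let $t_e\in T$ be the $T$-coordinate of $\delta(w_e^\omega)$; since $\delta(w_e^\omega)=\delta(w_e)^{\omega}$ is idempotent in $M * T$ and $\omega=\omega(M * T)$, each such $t_e$ lies in $E(T)$.

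First I would note that $\delta(\ucroch{\wbb}_\omega)=\delta(\ucroch{\abb_0}_\omega)\,\delta(\ucroch{\abb_1}_\omega)\cdots\delta(\ucroch{\abb_q}_\omega)$: the left-hand side is the image of the word $w_{s_0}w_{e_1}^\omega w_{s_1}\cdots w_{e_q}^\omega w_{s_q}$ and the right-hand side that of $w_{s_0}w_{e_1}^{2\omega}w_{s_1}\cdots w_{e_q}^{2\omega}w_{s_q}$, and these agree because $\delta(w_{e_j}^{2\omega})=\delta(w_{e_j}^\omega)^2=\delta(w_{e_j}^\omega)$. Next I would invoke the identity defining $\Dbf$, namely $te=e$ for all $t\in T$ and $e\in E(T)$. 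Applied to the factorizations $\ucroch{\abb_0}_\omega=w_{s_0}w_{e_1}^\omega$, $\ucroch{\abb_j}_\omega=w_{e_j}^\omega w_{s_j}w_{e_{j+1}}^\omega$ (for $1\le j\le q-1$) and $\ucroch{\abb_q}_\omega=w_{e_q}^\omega w_{s_q}$, reading off the $T$-coordinates gives $\tau_0=t_{e_1}$, $\tau_j=t_{e_{j+1}}$ for $1\le j\le q-1$, and $\tau_q=t_{e_q}\sigma$ with $\sigma\in T$ the $T$-coordinate of $\delta(w_{s_q})$. Using the same identity once more to collapse products of idempotents, the prefix products become $\tau_0\tau_1\cdots\tau_{j-1}=t_{e_1}t_{e_2}\cdots t_{e_j}=t_{e_j}$ for $1\le j\le q$, and the total product is $\tau_0\tau_1\cdots\tau_q=t_{e_q}t_{e_q}\sigma=t_{e_q}\sigma=\tau_q$.

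It then remains to expand the product in $M * T$ via $(s,t)\cdot(s',t')=(s+t\ast s',\ t\cdot t')$, which yields
\[
\delta(\ucroch{\wbb}_\omega)=\left(\mu_0+\sum_{j=1}^{q}(\tau_0\tau_1\cdots\tau_{j-1})\ast\mu_j,\ \tau_0\tau_1\cdots\tau_q\right)=\left(\mu_0+\sum_{j=1}^{q}t_{e_j}\ast\mu_j,\ \tau_q\right)
\]
by the preceding paragraph. On the other side, unwinding the definition of $\gamma$: $\gamma(\abb_0)=(\mu_0,1_N)$ (its $t_e$ equals $1_T$ and its $f$-coordinate is the idempotent $e_1\neq 1_S$), $\gamma(\abb_j)=(t_{e_j}\ast\mu_j,1_N)$ for $1\le j\le q-1$ (same reason), and $\gamma(\abb_q)=(t_{e_q}\ast\mu_q,\ (\tau_q)')$ since $\abb_q$ is the only letter of $\wbb$ falling in the ``$f=1_S$'' case. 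Multiplying these componentwise in $M\times N$ gives $\gamma(\wbb)=(\mu_0+\sum_{j=1}^{q}t_{e_j}\ast\mu_j,\ (\tau_q)')$, which is precisely $(m,t'_i)$ for $(m,t_i)=\delta(\ucroch{\wbb}_\omega)$, as claimed.

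The step I expect to be the main obstacle is this bookkeeping: tracking which element of $T$ acts on which element of $M$ inside the semidirect-product expansion, and observing that the defining identity of $\Dbf$ forces each $T$-coefficient $\tau_0\tau_1\cdots\tau_{j-1}$ to collapse onto the single idempotent $t_{e_j}$ attached locally to the $j$-th letter of $\wbb$. This collapse is exactly what makes the letter-local definition of $\gamma$ reproduce the global product of $\delta$ on $\ucroch{\wbb}_\omega$, and it also explains why only the last letter of $\wbb$ must record the $T$-datum in its $N$-coordinate while every other letter can safely map to $1_N$.
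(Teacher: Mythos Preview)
Your proof is correct and follows essentially the same strategy as the paper: expand the semidirect-product formula for $\delta(\ucroch{\wbb}_\omega)$, use the defining identity of $\Dbf$ to collapse the $T$-prefixes onto the local idempotents $t_{e_j}$, and compare with the componentwise product defining $\gamma(\wbb)$.

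The organization differs slightly. The paper expands $\delta(\ucroch{\wbb}_\omega)$ all the way down to the atomic factors $m_{s_i},m_{e_i},t_{s_i},t_{e_i}$, obtaining an expression that does not immediately match $\gamma(\wbb)$; it then invokes the idempotent identity $m_{e_i}+t_{e_i}\ast m_{e_i}=m_{e_i}$ to duplicate the $m_{e_i}$ terms and regroup them so that each block matches the first coordinate of $\gamma(\abb_i)$. You instead keep the computation packaged per letter via $(\mu_j,\tau_j)=\delta(\ucroch{\abb_j}_\omega)$, first observing that $\delta(\ucroch{\wbb}_\omega)=\prod_j\delta(\ucroch{\abb_j}_\omega)$ thanks to idempotency of $\delta(w_{e_j}^\omega)$. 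Since $\gamma(\abb_j)$ is defined directly from $(\mu_j,\tau_j)$, the comparison becomes immediate and no regrouping step is needed. Both arguments rest on the same two facts (idempotency of the $\delta(w_e^\omega)$ and the $\Dbf$ identity), but your packaging is a bit cleaner.
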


Before proving the lemma, we use it to conclude the proof. Define
$\Fbb \subseteq \Mbb$ by $\Fbb=\{(m,t'_i) \mid (m,t_i) \in F\}\}$.
One can verify that \Fbb is upward closed. We claim that
$\gamma^{-1}(\Fbb)$ separates \Lbb from $\Lbb'$.

Assume first that $\wbb \in \Lbb$. By Fact~\ref{fct:cons1},
$\ucroch{\wbb}_\omega \in L$, hence $\delta(\ucroch{\wbb}_\omega) \in
F$. It then follows from Lemma~\ref{lem:msowillneverdie} that
$\gamma(\wbb) \in \Fbb$. Conversely if $\wbb \in \Lbb'$, we have
$\delta(\ucroch{\wbb}_\omega) \not\in F$. It then follows from
Lemma~\ref{lem:msowillneverdie} that $\gamma(\wbb) \not\in \Fbb$ which
terminates the proof. We now prove Lemma~\ref{lem:msowillneverdie}.

\begin{proof}[Proof of Lemma~\ref{lem:msowillneverdie}.]
  We first show that the first component in $M$ of
  $\delta(\ucroch{\wbb}_\omega)$ and of $\gamma(\wbb)$ are equal. The proof
  consists in a straightforward but tedious computation. Set $\wbb =
  \abb_1\cdots\abb_p \in \wfA^+$ that is well-formed. Set
  $\abb_i=(e_{i-1},s_i,e_i)$ and recall that, in view of the definition of
  $\ucroch{\wbb}_i$ given in Section~\ref{sec:from-fodp-to-fod}, we have
  chosen words $w_{e_i}$ and $w_{s_i}$ such that:
  \[
  \ucroch{\abb_i}_\omega=w_{e_{i-1}}^\omega\cdot w_{s_i}\cdot w_{e_{i}}^\omega
  \]
  For each idempotent $e=e_i$, set $\delta(w_e^\omega)=(m_e,t_e)\in M*T$
  and for each element $s=s_i$, let $\delta(w_s)=(m_s,t_s)\in M*T$. Note that
  by definition of $\omega$, the element $(m_e,t_e)=\delta(w_e^\omega) =
  \delta(w_e)^\omega$ is idempotent, so $(m_e,t_e)=(m_e+t_e\ast
  m_e,t_e^2)$. In particular, $t_e$ is idempotent in $T$. Further, we have for all $i$:
  \begin{equation}
    \label{eq:idem}
    m_{e_i}+t_{e_i}\ast m_{e_i} = m_{e_i}.
  \end{equation}
  For each $\abb_i=(e_{i-1},s_i,e_i)$, we then have
  \begin{align}
    \delta(\ucroch{\abb_i}_\omega) &= \delta(w_{e_{i-1}}^\omega)
    \delta(w_{s_i})\delta(w_{e_i})^\omega\notag{}\\
    &= (m_{e_{i-1}},t_{e_{i-1}})(m_{s_i},t_{s_i})(m_{e_i},t_{e_i})\notag{}\\
    &= \Bigl(m_{e_{i-1}}+t_{e_{i-1}}\ast m_{s_i}+t_{e_{i-1}}t_{s_i}\ast m_{e_i},\quad t_{e_i}\Bigr)\label{eq:2}
  \end{align}
  where, for computing the 2nd component, we used the fact that $t_{e_i}$
  is idempotent in $T\in \Dbf$.
  Similarly, by definition we have $\ucroch{\wbb}_\omega=(w_{e_0})^\omega
  w_{s_1}(w_{e_1})^\omega \cdots(w_{e_{p-1}})^\omega
  w_{s_p}(w_{e_p})^\omega$, and
  \begin{align}
    \label{eq:3}
    \delta(\ucroch{\wbb}_\omega) &= \delta(w_{e_0}^\omega)
    \delta(w_{s_1})\delta(w_{e_1})^\omega w\cdots \delta(w_{e_{p-1}})^\omega
    \delta(w_{s_p})w_{s_p}(w_{e_p})^\omega\notag{}\\
    &= (m_{e_0},t_{e_0})(m_{s_1},t_{s_1})(m_{e_1},t_{e_1})\cdots
    (m_{e_{p-1}},t_{e_{p-1}})(m_{s_p},t_{s_p})(m_{e_p},t_{e_p})\notag{}\\
    &=
    \Bigl(m_{e_0}+(t_{e_0}\ast m_{s_1}+t_{e_0}t_{s_1}\ast m_{e_1})+\cdots+(t_{e_{p-1}}\ast m_{s_p}+t_{e_{p-1}}t_{s_p}\ast m_{e_p}),\quad
    t_{e_p}\Bigr).
  \end{align}
  Again, for the last equality, we used the definition of the semidirect
  product and the fact that each $t_{e_i}$ is an idempotent in $T$, which
  implies, since $T\in\Dbf$, that $t\cdot t_{e_i}=t_{e_i}$ for all $t\in T$.

  Using \eqref{eq:idem}  for each $i$, one can replace $m_{e_i}$
  in~\eqref{eq:3} by
  $m_{e_i}+t_{e_i}\ast m_{e_i}$. Taking into account that $t_{e_i}$ is idempotent
  in $T$, this yields for this first component of
  $\delta(\ucroch{\wbb}_\omega)$ the value
$$m_{e_0}+t_{e_0}\ast m_{e_0}+(t_{e_0}\ast m_{s_1}+t_{e_0}t_{s_1}\ast m_{e_1}+t_{e_1}\ast m_{e_1})+\cdots+(t_{e_{p-1}}\ast m_{s_p}+t_{e_{p-1}}t_{s_p}\ast m_{e_p}+t_{e_p}\ast m_{e_p})$$

  Observe that since $\wbb$ is well-formed, $e_0=1_S$, hence $m_{e_0}=1_M$,
  which is the neutral element for the `$+$' operation on $M$. In the same
  way, $e_p=1_S$, hence $m_{e_p}=1_M$, and therefore, using the last axiom of
  an action, we deduce that $t_{e_p}\ast m_{e_p}=1_M$. Hence, these two elements
  can be removed from the expression of the first component of
  $\delta(\ucroch{\wbb}_\omega)$. Therefore, this first component can be rewritten,
  using associativity, as:
  \begin{equation}
    \label{eq:4}
    (t_{e_0}\ast m_{e_0}+t_{e_0}\ast m_{s_1}+t_{e_0}t_{s_1}\ast m_{e_1})
    +\cdots+
    (t_{e_{p-1}}\ast m_{e_{p-1}}+t_{e_{p-1}}\ast m_{s_p}+t_{e_{p-1}}t_{s_p}\ast m_{e_p}).
  \end{equation}

  On the other hand, in view of~\eqref{eq:2} and by definition of $\gamma$,
  the first component of $\gamma(\abb_i)\in M\times N$ is
  \begin{align}
    (t_{e_{i-1}}\ast (m_{e_{i-1}}+t_{e_{i-1}}\ast m_{s_i}+t_{e_{i-1}}t_{s_i}\ast m_{e_i})
    &= (t_{e_{i-1}}\ast m_{e_{i-1}}+t_{e_{i-1}}\ast m_{s_i}+t_{e_{i-1}}t_{s_i}\ast m_{e_i}).\label{eq:5}
  \end{align}
  Therefore, one can compute the first component of
  $\gamma(\wbb)=\gamma(\abb_1\cdots\abb_p)=\gamma(\abb_1)\cdots\gamma(\abb_p)$
  by summing the values~\eqref{eq:5} for $i=1,\ldots,p$ (recall that the
  operation on $M$ is noted additively), which gives the value computed in \eqref{eq:4}.
  Hence we have shown that the first component in $M$ of
  $\delta(\ucroch{\wbb}_\omega)$ and of $\gamma(\wbb)$ are equal.

  \medskip It remains to check that when the second component of
  $\delta(\ucroch{\wbb}_\omega)$ is equal to some $t\in T$, then the second
  component of $\gamma(\wbb)$ is the corresponding element $t'\in N$. This is
  simpler: by definition of a well-formed word, we have $e_i\neq1_S$ for
  $i<p$, and $e_p=1_S$. By definition of $\gamma$, it follows that the second
  component of $\gamma(\wbb)$ is the second component of $\gamma(\abb_p)$,
  namely $t'_p$. Now, since $T\in\Dbf$, the second component of
  $\delta(\ucroch{\abb}_\omega)$ is $t_p$, which concludes the proof.
\end{proof}

\subsection{\texorpdfstring{From \Vbf-separability to
    $(\Vbf*\Dbf)$-separability}{From V-separability to (V * D)-separability}}

We prove that if $\Lbb$ is \Vbf-separable from $\Lbb'$, then $L$ is
$(\Vbf*\Dbf)$-separable from $L'$. Note that we reuse the construction which
associates to every word $w \in A^+$ a canonical word $\croch{w} \in \wfA^+$
(see Section~\ref{sec:from-fod-to-fodp} for details).

Assume that \Lbb is \Vbf-separable from $\Lbb'$. This means that we
have a morphism $\gamma: \wfA^* \rightarrow \Mbb$ with \Mbb an ordered
monoid in \Vbf and $\Fbb \subseteq \Mbb$ upward-closed such that $\gamma^{-1}(\Fbb)$
separates \Lbb from $\Lbb'$. We need to construct a separator in $\Vbf
* \Dbf$ for $L$ and $L'$. The main idea is to define a morphism, which
given $w \in A^+$, computes $\gamma(\croch{w})$. This is slightly
technical however as the morphism needs some machinery to make this
computation.

We begin with some notations. To every word $w \in A^+$, we associate an
element $lab(w) \in \Mbb$. Let $x$ be the last position in $w$
and consider the construction of $\croch{w}$. If $x$ is distinguished,
we set $lab(w) = \gamma(\abb)$ with $\abb$ the label of $\croch{x}$ in
$\croch{w}$. Otherwise, we simply set $lab(w) = 1_{\Mbb}$. We can now
start the construction of our separator. We have to define the
following objects:
\begin{itemize}
\item An ordered semigroup $T \in \Dbf$.
\item An ordered monoid $M \in \Vbf$.
\item An action of $T$ on $M$ yielding a semidirect product $M
  * T$.
\item A morphism $\delta : A^+ \rightarrow M * T$ which recognizes
  the desired separator.
\end{itemize}

\highlight{Definition of $T$} We set $T$ as the set $\{w \in A^+ \mid |w|
\leqslant 2|S|\}$ equipped with the following operation. If $w,w'\in T$,
we set $w \cdot w'$ as the suffix of length $2|S|$ of the word $ww'$
when $ww'$ has length $\geq 2|S|$ and as $ww'$ otherwise. One can
verify that this operation is indeed associative and that $T \in
\Dbf$. We use equality as the partial order on $T$.

Observe that we have a natural morphism $\rho: A^+ \rightarrow T$
such that $\rho(w)$ is $w$ if $|w| \leqslant 2|S|$, and $\rho(w)$ is the suffix of
length $2|S|$ of $w$ otherwise. Observe that by Lemma~\ref{lem:canonic},
we have the following fact.

\begin{fact} \label{fct:canonic}
For every $w \in A^+$, $lab(w) = lab(\rho(w))$.
\end{fact}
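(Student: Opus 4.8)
The plan is to reduce the statement directly to Lemma~\ref{lem:canonic}, as suggested by the sentence preceding it. First I would unwind the definition of $lab(w)$. The last position $x$ of $w$ is always a distinguished position, because the construction of $\croch{w}$ decrees the rightmost position distinguished regardless of whether it satisfies the defining property. Hence $\croch{x}$ is the last position of $\croch{w}$, and $lab(w) = \gamma(\abb)$ with $\abb$ the label of that last position of $\croch{w}$. In other words, $lab(w)$ is completely determined by the label of the last letter of $\croch{w}$.

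Next I would apply Lemma~\ref{lem:canonic}\ref{item:5} to the last position $x$ of $w$: whether $x$ is distinguished (it always is, here) and the label of $\croch{x}$ in $\croch{w}$ depend only on the factor of $w$ of length $2|S|$ ending at $x$ --- equivalently, on the suffix of $w$ of length $2|S|$, with the convention that if $|w| < 2|S|$ the relevant datum is simply $w$ itself, and that two words sharing a common suffix of length $2|S|$ produce the same label. Together with the previous step, this shows that $lab$ factors through the ``length-$2|S|$ suffix'' in this precise sense.

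To finish I would split into the two cases of the definition of $\rho$. If $|w| \leqslant 2|S|$ then $\rho(w) = w$ and the equality $lab(w) = lab(\rho(w))$ is trivial. If $|w| > 2|S|$ then $\rho(w)$ is by definition the suffix of $w$ of length exactly $2|S|$, so $w$ and $\rho(w)$ have the same suffix of length $2|S|$; the previous paragraph then yields $lab(w) = lab(\rho(w))$. The argument is pure bookkeeping: the only mildly delicate point is handling words shorter than $2|S|$, where Lemma~\ref{lem:canonic} must be read as comparing whole words rather than literal length-$2|S|$ suffixes. There is no genuine obstacle, since the substantive content --- that the factors $w_i$ appearing in~\eqref{eq:croch} have length at most $|S|$, so that a window of size $2|S|$ suffices --- was already carried out in the proof of Lemma~\ref{lem:canonic} via the pigeonhole principle.
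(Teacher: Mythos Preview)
Your proof is correct and follows exactly the approach the paper indicates: the fact is an immediate consequence of Lemma~\ref{lem:canonic}. One small refinement: rather than invoking part~\ref{item:5}, the second item of Lemma~\ref{lem:canonic} is the one tailored to the last position (whose label lies in $E(S)\times S$ or in $S$, not in $E(S)\times S\times E(S)$) and is already phrased as a dependence on the suffix of length~$2|S|$, so citing it avoids that minor format mismatch.
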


\highlight{Definition of $M$} We set $M \in \Vbf$ as the Cartesian
product $\Mbb^{T^1}$ (recall that as a variety of ordered monoids,
\Vbf is closed under Cartesian product).

\begin{remark}
Since we intend to take a semidirect product of $M$ and $T$, we will
denote the semigroup operations of both $M$ and $\Mbb$ additively in
order to clarify the presentation.
\end{remark}

\highlight{Definition of $M * T$} If $w \in T$ and $f \in M$ (\emph{i.e.}, $f$
is a mapping $f: T^1 \rightarrow \Mbb$), we set $w \cdot f$ as the
mapping $g: T^1 \rightarrow M$ such that $g(u) = f(u \cdot w)$. One
can verify that '$\cdot$' is an action of $T$ on $M$. In the remainder
of the proof, we denote by $M * T$ the semidirect product of $M$ and
$T$ with respect to this action.

\highlight{Definition of $\delta$} Set $f_{Id}: T^1 \rightarrow \Mbb$
defined as follows. We set $f_{Id}(1_T) = 1_\Mbb$ and $f_{Id}(w) =
lab(w)$ when $w \in T$. We can now define $\delta: A^+ \rightarrow M *
T$. Let $a \in A^+$, we set $\delta(a)$ as the pair $(f_a,a)$ where
$f_a = a \cdot f_{Id}$, \emph{i.e.}, the mapping $f_a: w \mapsto
f_{Id}(wa)$. It now remains to prove that $\delta$ does recognize a
separator of $L$ from $L'$. This is a consequence of the following
lemma.

\begin{lemma} \label{lem:thisistheend}
Let $w \in A^+$, $(f,u) = \delta(w)$ and $end(u)$ as the label of the
last position in $\croch{u}$. Then,
\[
\gamma(\croch{w}) = f(1_T) \cdot \gamma(end(u)).
\]
\end{lemma}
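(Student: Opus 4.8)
The plan is to compute $\delta(w)$ explicitly, extract its first component evaluated at $1_T$, identify it with $\gamma$ applied to $\croch w$ deprived of its last letter, and observe that $\gamma(end(u))$ supplies exactly that missing letter.

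First I would compute $\delta(w)$. Write $w=a_1\cdots a_p$ with $a_i\in A$, so $\delta(w)=\delta(a_1)\cdots\delta(a_p)$. Unrolling the definitions of the semidirect product $M*T$, of $\delta(a_i)=(f_{a_i},a_i)$ with $f_{a_i}=a_i\cdot f_{Id}$, and of the action $(v\cdot g)(z)=g(z\cdot v)$, a routine computation (by induction on $p$) shows that $\delta(w)=(f,\rho(w))$, where $\rho\colon A^+\to T$ is the truncation morphism; thus, in the notation of the statement, $u=\rho(w)$, and for every $z\in T^1$,
\[
f(z)=f_{Id}(z\cdot a_1)\cdot f_{Id}(z\cdot a_1a_2)\cdots f_{Id}(z\cdot a_1\cdots a_p),
\]
the arguments $z\cdot a_1\cdots a_k$ being computed in $T$ and the resulting elements multiplied in $\Mbb$ in order of increasing $k$. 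Specialising to $z=1_T$, using $f_{Id}(v)=lab(v)$ for $v\in T$ together with Fact~\ref{fct:canonic} (that is, $lab=lab\circ\rho$), we obtain
\[
f(1_T)=lab(a_1)\cdot lab(a_1a_2)\cdots lab(a_1\cdots a_p).
\]

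Next I would collapse this product. By the definition of $lab$, the factor $lab(a_1\cdots a_k)$ equals $1_{\Mbb}$ unless position $k$ is distinguished in $a_1\cdots a_k$; and by Lemma~\ref{lem:canonic} this, together with the letter of $\wfA$ that $k$ then carries, is governed by the behaviour of the construction of $\croch w$ around position $k$. Writing $x_1<\cdots<x_{n+1}$ for the distinguished positions of $w$, the surviving factors occur at $k=x_1,\dots,x_n$, and Lemma~\ref{lem:canonic} identifies the letter contributed by $lab(a_1\cdots a_{x_i})$ with the $i$-th letter of $\croch w$. Hence the product above simplifies to $\gamma$ applied to $\croch w$ with its last letter removed.

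Finally, by Lemma~\ref{lem:canonic}(b) the last letter of $\croch w$ depends only on the length-$2|S|$ suffix of $w$, i.e.\ on $u=\rho(w)$; so it equals the last letter of $\croch u$, which is $end(u)$ by definition. As $\gamma$ is a morphism, multiplying $f(1_T)$ on the right by $\gamma(end(u))$ reattaches that letter, yielding $\gamma(\croch w)=f(1_T)\cdot\gamma(end(u))$, as claimed. I expect the crux to be the collapsing step: one must track the indexing precisely and deal with the three shapes of letters of $\wfA$ occurring in $\croch w$ — the initial $S\times E(S)$ letter, the interior $E(S)\times S\times E(S)$ letters, and the terminal $E(S)\times S$ letter that $end(u)$ supplies — and check that the degenerate situations, namely $n=0$ (where $\croch w$ is a single element of $S$) and the case where the last position of $w$ itself satisfies the idempotent condition defining distinguished positions, fit the general computation.
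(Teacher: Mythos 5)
Your proposal follows essentially the same route as the paper's proof: unroll the semidirect product to get $\delta(w)=(f,\rho(w))$ with $f(1_T)=lab(a_1)+lab(a_1a_2)+\cdots+lab(a_1\cdots a_p)$, collapse this sum via the definition of $lab$ and Lemma~\ref{lem:canonic} to $\gamma(\abb_1\cdots\abb_{m-1})$, and identify the last letter of $\croch{w}$ with $\gamma(end(u))$ using Lemma~\ref{lem:canonic}(b) and $u=\rho(w)$. The only difference is that you explicitly flag the delicate bookkeeping points (the single-letter case and a genuinely distinguished last position), which the paper's own proof treats as immediate.
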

We first use the lemma to conclude the proof. Set $F \subseteq M *
T$ as the set \[F = \{(f,u) \mid f(1_T) \cdot \gamma(end(u)) \in
\Fbb\}.\] One can verify that $F$ is upward closed (this is essentially
because \Fbb is upward-closed). It is immediate from
Lemma~\ref{lem:thisistheend} that $\delta(w) \in F$ iff
$\gamma(\croch{w}) \in \Fbb$. We claim that $\delta^{-1}(F)$ separates
$L$ from $L'$.

Assume first that $w \in L$, we need to prove that $w \in
\delta^{-1}(F)$. By Fact~\ref{fct:sametype}, we have $\croch{w} \in
\Lbb$, hence $\gamma(\croch{w}) \in \Fbb$ and $\delta(w) \in
F$. Similarly, if $w \in L'$, $\croch{w} \in \Lbb'$, hence
$\gamma(\croch{w})  \not\in \Fbb$ and $\delta(w) \not\in F$ which
terminates the proof. It finally remains to prove
Lemma~\ref{lem:thisistheend}.

\begin{proof}[Proof of Lemma~\ref{lem:thisistheend}.]
Set $w = a_1 \cdots a_n$ and $\croch{w} = \abb_1\cdots \abb_m$. By
definition, we have:
\[
f = \rho(a_1) \cdot f_{Id} + \rho(a_1a_2) \cdot f_{Id} + \cdots +
\rho(a_1\cdots a_n) \cdot f_{Id}
\]
By definition of $f_{Id}$ and by Fact~\ref{fct:canonic} this means
that:
\[
f(1_T) = lab(a_1) + lab(a_1a_2) + \cdots + lab(a_1\cdots a_n)
\]
It is then immediate from the definition of \croch{w} that $f(1_T) =
\gamma(\abb_1 \cdots \abb_{m-1})$. Hence $\gamma(\croch{w}) = f(1_T)
\cdot \gamma(end(w))$. This finishes the proof since $u$ is the suffix
of length $2|S|$ of $w$, and therefore $end(u) = end(w)$ by
Lemma~\ref{lem:canonic}.
\end{proof}


\section{Conclusion}
\label{sec:conc}

We proved that separation is decidable over finite words for the following
logical fragments: \foeqp, \mbox{{\sip 1}}, {\bsp 1}, \sdp and \fodp. To
achieve this, we presented a simple reduction to the same problem for the
weaker fragments \foeq, \mbox{{\sio 1}}, {\bso 1}, \sdo and \fod.

The reduction itself is entirely generic to all fragments and its proof is
elementary, and also mostly generic. In particular, the technique can be used
to prove that the reduction works for other natural fragments of first-order
logic. An interesting example to which these results apply is the quantifier
alternation hierarchy within \fod (known as the Trotter-Weil hierarchy, and
which is decidable~\cite{kufleitner-weil:fo2-lmcs}). However, the separation
problem for classes in this hierarchy has yet to be investigated. We also
obtained direct proofs that membership is decidable for $\bdp$ and $\stp$.

Finally, we presented an algebraic formulation of this reduction, which recovers
a previously known result by Steinberg~\cite{Steinberg:delay-pointlikes:2001},
while having a much simpler proof. One can expect extending these results to
other fragments, such as enrichment with modulo predicates.
Another advantage of this technique is that it can be extended in
a straightforward way to the same logical fragments over words of
infinite length. This yields identical transfer results. We leave
the presentation of these results for further work.


\end{document}